\newif\ifanon%
\newcommand{\anontext}[2]{\ifanon#1\else#2\fi}
\newif\ifarxivsubmit%
\newcommand{\arxivsubmittext}[2]{\ifarxivsubmit#1\else#2\fi}
\setlist[enumerate]*{label=(\arabic*)}
\setlist[itemize]{label=\textbullet}
\newcommand*{\crefns}[1]{{\@cref@sortfalse\cref{#1}}}
\crefname{definition}{Def.}{Defs.}
\newcommand{\labelparenths}[1]{(#1)}
\newcommand{\labelparenthstextup}[1]{\textup{\labelparenths{\textbf{#1}}}}
\newskip\listisep
\Crefname{assumption}{Assumption}{Assumptions}
\Crefname{expectation}{Expectation}{Expectations}
\Crefname{requirement}{Requirement}{Requirements}
\Crefname{condition}{Condition}{Conditions}
\Crefname{monocondition}{Monotonicity Condition}{Monotonicity Conditions}
\Crefname{step}{Step}{Steps}
\newcommand{
% \begin{adjustbox}{width=\maintextequivfigurescale\textwidth}
  \input{}
% \end{adjustbox}
}[1]{
% \begin{adjustbox}{width=\maintextequivfigurescale\textwidth}
  \input{#1}
% \end{adjustbox}
}
\tikzset{>=stealth}% arrow style
\newdimen\XCoord
\newdimen\YCoord
\tikzset{%
  none/.style={},
  dot/.style={circle, fill, minimum size=#1, inner sep=0pt, outer sep=0pt},
  dot/.default=3pt,
  leftstyle/.style={red},
  rightstyle/.style={blue},
  leftstyle2/.style={orange},
  rightstyle2/.style={violet},
  transfun/.style={dash pattern=on 5pt off 3pt,semithick},
  element/.style={dot={#1},draw=black,fill=black},
  relstyle/.style={draw,circle,thick,line cap=round, dash pattern=on 0pt off 2pt},
  loopstyle/.style={looseness=60},
  relstyle2base/.style={semithick,dash dot},
  relstyle2/.style={relstyle,relstyle2base},
  leftlabelstyle2/.style={leftstyle2,relstyle2base},
  rightlabelstyle2/.style={rightstyle2,relstyle2base},
  equivclass/.style={draw,circle,semithick,densely dotted,decoration={markings, mark=at position 0.28 with {\arrow{>}}},postaction={decorate}},
  type/.style={draw,ellipse,minimum width=\defaulttypewidth,minimum height=\defaulttypeheight},
  typesmall/.style={type,minimum width=0.6*\defaulttypewidth},
  typebig/.style={type,minimum width=1.4*\defaulttypewidth},
  edgestyle/.style={transfun,bend left=#1,midway,above},
  edgestyle/.default=\defaultbend,
  rightedge/.default=\defaultbend,
  proofedge/.style={decorate,decoration={zigzag}},
  asmedge/.style={edgestyle,solid,black!10!gray},
  typeequivclass/.pic={%
    \node[element] (element) {};
    \node[#1,equivclass] (equivclass) at (element.center) {};
  },
  typeequivclassleft/.pic={%
    \pic {typeequivclass={leftstyle}};
  },
  typeequivclassright/.pic={%
    \pic {typeequivclass={rightstyle}};
  },
  pics/equivclass/.style n args={3}{code ={%
    \draw[equivclass] let \p1=(#1), \p2=(#2), \n1={atan2(\y2-\y1,\x2-\x1)}, \n2={veclen(\y2-\y1,\x2-\x1)}
      in ($ (\p1)!0.5!(\p2) $) ellipse [x radius=\n2/2, y radius=#3, rotate=-\n1];
  }},
  pics/labelrel/.style n args={2}{code ={%
    \node[equivclass,#1,opacity=0] (equivclass) {};
    \draw[relstyle,#1,->] ([xshift=0.15cm]equivclass) arc (0:300:0.15cm);
    \node[#1,right=\labeldistance of equivclass] (correspondsto) {$\correspondsto$ #2};
  }},
  pics/typeshift/.style n args={3}{code ={%
    \node[type] (type) {};
    \pic (label) [#1,above=\labeldistance of type,xshift={#3}] {labelrel={#1}{#2}};
  }},
  pics/type/.style n args={2}{code ={%
    \pic {typeshift={#1}{#2}{-0.6cm}};
  }},
  lefttype/.pic={%
    \pic {type={leftstyle}{#1}};
  },
  righttype/.pic={%
    \pic {type={rightstyle}{#1}};
  },
  pics/rel/.style n args={2}{code ={%
    \node[draw,ellipse,minimum width=#1,minimum height=#1] (rel) {};
    \node[above=\labeldistance of rel] {#2};
  }},
  pics/leftrel/.style n args={2}{code ={%
    \pic[leftstyle]{rel={#1}{#2}};
  }},
  pics/rightrel/.style n args={2}{code ={%
    \pic[rightstyle]{rel={#1}{#2}};
  }},
  pics/leftedgestyled/.style n args={4}{code ={%
    \draw[->] (#1) edge[leftstyle,edgestyle,"#3",#4] (#2);
  }},
  pics/leftedge/.style n args={3}{code ={%
    \pic {leftedgestyled={#1}{#2}{#3}{}};
  }},
  pics/rightedgestyled/.style n args={4}{code ={%
    \draw[->] (#1) edge[rightstyle,edgestyle,"#3",#4] (#2);
  }},
  pics/rightedge/.style n args={3}{code ={%
    \pic {rightedgestyled={#1}{#2}{#3}{}};
  }},
  pics/typeequivedgepair/.style n args={4}{code ={%
    \pic {leftedge={#1}{#2}{#3}};
    \pic {rightedge={#2}{#1}{#4}};
  }}
}
\newcommand{\fitellipsis}[2] % first and second node names without parentheses
{\draw let \p1=(#1), \p2=(#2), \n1={atan2(\y2-\y1,\x2-\x1)}, \n2={veclen(\y2-\y1,\x2-\x1)}
    in ($ (\p1)!0.5!(\p2) $) ellipse [x radius=\n2/2, y radius=1cm, rotate=-\n1];
}
\newcommand{\citet}{\cite}
\newcommand{\Description}[2][]{}
\newcommand{\appendixref}[1]{\arxivsubmittext{\cref{#1}}{\cite[\begin{NoHyper}\hypersetup{hidelinks}\cref{#1}\end{NoHyper}]{transportarxiv}}}
\newcommand{\correspondsto}{\mathrel{\hat{=}}}
\newcommand{\textiff}{if and only if\xspace}
\newcommand{\transport}{\textsc{Transport}\xspace}
\newcommand{\app}{\,}
\newcommand{\inv}[1]{#1^{-1}}
\DeclarePairedDelimiter\parenths{(}{)}
\newcommand{\define}{\coloneqq}
\newcommand{\constfont}[1]{\mathsf{#1}}
\newcommand{\mathcodefont}[1]{\mathtt{#1}}
\newcommand{\isacmdfont}[1]{\mathcodefont{\textbf{#1}}}
\newcommand{\transportterm}{\isacmdfont{trp}}
\newcommand{\andcmd}{\isacmdfont{and}}
\newcommand{\wherecmd}{\isacmdfont{where}}
\newcommand{\bycmd}{\isacmdfont{by}}
\newcommand{\transporttermprover}{\isacmdfont{trprover}}
\newcommand{\lemmacmd}{\isacmdfont{lemma}}
\DeclarePairedDelimiterX\isatag[1]{\mathcodefont{[}}{\mathcodefont{]}}{#1}
\newcommand{\perintrotag}{\isatag{\mathcodefont{per\_intro}}}
\newcommand{\transpparameq}[1]{\mathcodefont{#1 ={}}}
\newcommand{\transfer}{\isacmdfont{transfer}}
\newcommand{\typefont}[1]{\mathsf{#1}}
\newcommand{\purefun}{\mathrel\typefont{\Rightarrow}}
\newcommand{\bool}{\typefont{bool}}
\newcommand{\alphaty}{\typefont{\alpha}}
\newcommand{\betaty}{\typefont{\beta}}
\newcommand{\gammaty}{\typefont{\gamma}}
\newcommand{\iarrayty}[1]{#1\app\typefont{iarray}}
\newcommand{\listty}[1]{#1\app\typefont{list}}
\newcommand{\mysetty}{\typefont{set}}
\newcommand{\settyconst}{\typefont{set}}
\newcommand{\setty}[1]{#1\app\settyconst}
\newcommand{\fsettyconst}{\typefont{fset}}
\newcommand{\fsetty}[1]{#1\app\fsettyconst}
\newcommand{\natty}{\typefont{\mathbb{N}}}
\newcommand{\binty}{\typefont{Bin}}
\newcommand{\intty}{\typefont{\mathbb{Z}}}
\newcommand{\inttycopy}{{\typefont{\mathbb{Z}'}}}
\newcommand{\relty}[2]{#1\purefun#2\purefun\bool}
\newcommand{\True}{\constfont{True}}
\newcommand{\holimplies}{\longrightarrow}
\newcommand{\holiff}{\longleftrightarrow}
\newcommand{\holhasty}{:}
\newcommand{\indom}{\constfont{in\_dom}}
\newcommand{\incodom}{\constfont{in\_codom}}
\newcommand{\infield}{\constfont{in\_field}}
\newcommand{\comp}{\circ}
\newcommand{\relcomp}{\mathbin{{\circ}}}
\newcommand{\finite}{\constfont{finite}}
\newcommand{\length}{\constfont{length}}
\newcommand{\tonat}{\constfont{to\_nat}}
\newcommand{\toint}{\constfont{to\_int}}
\newcommand{\intspos}{\constfont{Zpos}}
\newcommand{\listrel}{\constfont{ListRel}}
\newcommand{\iarraylength}{\constfont{iarr\_length}}
\newcommand{\iarrayrel}{\constfont{IArrRel}}
\newcommand{\iarrayindex}{\constfont{iarr\_ind}}
\newcommand{\toiarray}{\constfont{to\_iarr}}
\newcommand{\tolistset}{\constfont{to\_list}}
\newcommand{\tolistfset}{\constfont{to\_list^{fin}}}
\newcommand{\tolistiarray}{\constfont{to\_list}}
\newcommand{\toset}{\constfont{to\_set}}
\newcommand{\tofset}{\constfont{to\_fset}}
\newcommand{\listsetrel}{\constfont{LS}}
\newcommand{\listsetrelL}{\constfont{LS_L}}
\newcommand{\listfsetrel}{\constfont{LFS}}
\newcommand{\listfsetrelL}{\constfont{LFS_L}}
\newcommand{\intnatrel}{\constfont{ZN}}
\renewcommand{\max}{\constfont{max}}
\newcommand{\maxlist}{\constfont{max\_list}}
\newcommand{\maxset}{\constfont{max\_set}}
\newcommand{\maxfset}{\constfont{max\_fset}}
\newcommand{\foldr}{\constfont{foldr}}
\newcommand{\map}{\constfont{map}}
\newcommand{\filter}{\constfont{filter}}
\newcommand{\mapfset}{\constfont{map\_fset}}
\newcommand{\filterfset}{\constfont{filter\_fset}}
\newcommand{\intersect}{\constfont{intersect}}
\newcommand{\intersectfset}{\constfont{intersect\_fset}}
\newcommand{\relif}{\constfont{rel\_if}}
\newcommand{\inflistidx}{\constfont{(!!)}}
\newcommand{\lerell}[2]{\mathrel{{}_{#1}{\le_{#2}}}}
\newcommand{\lerel}[1]{\lerell{}{#1}}
\DeclarePairedDelimiterX\inflerell[2]{(}{)}{{}_{#1}{\le_{#2}}}
\DeclarePairedDelimiterX\inflerel[1]{(}{)}{{\le_{#1}}}
\DeclarePairedDelimiterX\infgerell[2]{(}{)}{{}_{#1}{\ge_{#2}}}
\DeclarePairedDelimiterX\infgerel[1]{(}{)}{{\ge_{#1}}}
\newcommand{\reflon}{\constfont{reflexive\_on}}
\newcommand{\refl}{\constfont{reflexive}}
\newcommand{\transon}{\constfont{transitive\_on}}
\newcommand{\trans}{\constfont{transitive}}
\newcommand{\symmon}{\constfont{symmetric\_on}}
\newcommand{\preorderon}{\constfont{preorder\_on}}
\newcommand{\partequivon}{\constfont{partial\_equivalence\_rel\_on}}
\newcommand{\partequiv}{\constfont{partial\_equivalence\_rel}}
\newcommand{\inflaton}{\constfont{inflationary\_on}}
\newcommand{\deflaton}{\constfont{deflationary\_on}}
\newcommand{\relequivon}{\constfont{rel\_equivalence\_on}}
\newcommand{\unitconst}{\constfont{unit}}
\newcommand{\unit}{\constfont{\eta}}
\newcommand{\counitconst}{\constfont{counit}}
\newcommand{\counit}{\constfont{\epsilon}}
\newcommand{\orderequivsym}{\equiv_{\constfont{o}}}
\DeclarePairedDelimiterXPP\orderequiv[4]{}{(}{)}{\app#3\app#4}{#1 \orderequivsym#2}
\newcommand{\depfunrelcolons}{::}
\newcommand{\relarrow}{\Rrightarrow}
\DeclarePairedDelimiterXPP\depfunrel[4]{}{[}{]}{\relarrow#4}{#1\app#2\depfunrelcolons#3}
\DeclarePairedDelimiterXPP\depfunrelrest[5]{}{[}{]}{\relarrow#5}{#1\app#2\depfunrelcolons#3\mid#4}
\newcommand{\funrel}[2]{#1\relarrow#2}
\newcommand{\monoarrow}{\Rrightarrow_{\constfont{m}}}
\DeclarePairedDelimiterXPP\depmono[4]{}{[}{]}{\monoarrow#4}{#1\app#2\depfunrelcolons#3}
\DeclarePairedDelimiterXPP\depmonorest[5]{}{[}{]}{\monoarrow#5}{#1\app#2\depfunrelcolons#3\mid#4}
\newcommand{\mono}[2]{#1\monoarrow#2}
\DeclarePairedDelimiterXPP\depfunrelpred[3]{}{[}{]}{\relarrow#3}{#1\depfunrelcolons#2}
\DeclarePairedDelimiterXPP\depfunrelpredrest[4]{}{[}{]}{\relarrow#4}{#1\depfunrelcolons#2\mid#3}
\DeclarePairedDelimiterXPP\depmonopred[3]{}{[}{]}{\monoarrow#3}{#1\depfunrelcolons#2}
\DeclarePairedDelimiterXPP\depmonopredrest[4]{}{[}{]}{\monoarrow#4}{#1\depfunrelcolons#2\mid#3}
\newcommand{\depfunmapcolons}{::}
\newcommand{\funmaparrow}{\rightarrow}
\DeclarePairedDelimiterXPP\depfunmap[3]{}{[}{]}{\funmaparrow#3}{#1\depfunmapcolons#2}
\newcommand{\funmap}[2]{#1\funmaparrow#2}
\DeclarePairedDelimiterXPP\halfgalpl[4]{}{(}{)}{\app#3\app#4}{#1\app{}_{\constfont{h}}{\unlhd}\app#2}
\DeclarePairedDelimiterXPP\halfgalpr[4]{}{(}{)}{\app#3\app#4}{#1\unlhd_{\constfont{h}}#2}
\DeclarePairedDelimiterXPP\galp[4]{}{(}{)}{\app#3\app#4}{#1\unlhd#2}
\newcommand{\galcsym}{\dashv}
\DeclarePairedDelimiterX\galc[2]{(}{)}{#1\galcsym#2}
\DeclarePairedDelimiterXPP\galcapp[4]{}{(}{)}{\app#3\app#4}{#1\galcsym#2}
\newcommand{\pregalcsym}{\galcsym_{\constfont{pre}}}
\DeclarePairedDelimiterXPP\pregalc[4]{}{(}{)}{\app#3\app#4}{#1\pregalcsym#2}
\newcommand{\galequivsym}{\equiv_{\constfont{G}}}
\DeclarePairedDelimiterXPP\galequiv[4]{}{(}{)}{\app#3\app#4}{#1\galequivsym#2}
\newcommand{\preequivsym}{\equiv_{\constfont{pre}}}
\DeclarePairedDelimiterXPP\preequiv[4]{}{(}{)}{\app#3\app#4}{#1\preequivsym#2}
\newcommand{\perequivsym}{\equiv_{\constfont{PER}}}
\DeclarePairedDelimiterXPP\perequiv[4]{}{(}{)}{\app#3\app#4}{#1\perequivsym#2}
\newcommand{\depfunleft}[2]{l_{2}\app#1\app#2}
\newcommand{\depfunright}[2]{r_{2}\app#1\app#2}
\newcommand{\reflrelsym}{\oplus}
\newcommand{\monorelarrow}{\Rrightarrow^{\reflrelsym}}
\DeclarePairedDelimiterXPP\monodepfunrel[4]{}{[}{]}{\monorelarrow#4}{#1\app#2\depfunrelcolons#3}
\newcommand{\monofunrel}[2]{#1\monorelarrow#2}
\newcommand{\nondepcase}[1]{#1}
\newcommand{\galrelsym}{\lessapprox}
\newcommand{\galrelconst}{\constfont{Galois}}
\newcommand{\galrel}[1]{\mathrel{{}_{#1}{\galrelsym}}}
\newcommand{\flipgalrel}[1]{\mathrel{{\galrelsym}{}_{#1}}}
\DeclarePairedDelimiterX\infgalrel[1]{(}{)}{{{}_{#1}{\galrelsym}}}
\DeclarePairedDelimiterX\flipinfgalrel[1]{(}{)}{{{\galrelsym}{}_{#1}}}
\newcommand{\natfuncmap}[1]{\constfont{map}_{#1}}
\newcommand{\natfuncrel}[1]{\constfont{rel}_{#1}}
\newcommand{\nargsc}[2]{#1_{1},\dotsc,#1_{#2}}
\newcommand{\nargs}[2]{#1_{1}\app\dotso\app#1_{#2}}
\newcommand{\nargsinf}[2]{\inflerel{#1_{1}}\app\dotso\app\inflerel{#1_{#2}}}
\newcommand{\nargsinfgalrel}[2]{\infgalrel{#1_{1}}\app\dotso\app\infgalrel{#1_{#2}}}
\DeclarePairedDelimiterXPP\infieldapp[1]{}{(}{)}{\app#1}{\infield\app#1}
\newcommand{\tyequivsym}{\simeq}
\DeclarePairedDelimiterXPP\tyequiv[4]{}{(}{)}{\app#3\app#4}{#1\tyequivsym#2}
\begin{document}

%%
%% The "title" command has an optional parameter,
%% allowing the author to define a "short title" to be used in page headers.
\title{Transport via Partial Galois Connections and Equivalences\arxivsubmittext{ (Extended Version)}{}}
\titlerunning{Transport via Partial Galois Connections and Equivalences}

%%
%%% LLNCS START
\author{\anontext{Anon.\ Author}{Kevin Kappelmann\orcidID{0000-0003-1421-6497}}}
% \authorrunning{\anontext{Anon.\ Author}{Kevin Kappelmann}}
\institute{\anontext{Anon.\ Place}{Technical University of Munich, Boltzmannstrasse 3, Garching 85748, Germany,
\email{\anontext{anon.\ email}{kevin.kappelmann@tum.de}}}}
%%% LLNCS END

%%
%% This command processes the author and affiliation and title
%% information and builds the first part of the formatted document.
\maketitle

%%
%% The abstract is a short summary of the work to be presented in the
%% article.
\begin{abstract}
Multiple types can represent the same concept.
For example,
lists and trees
can both represent
sets.
Unfortunately, this easily leads to incomplete libraries:
some set-operations may only be available on lists,
others only on trees.
Similarly,
subtypes and quotients are commonly used to
construct new type abstractions
in formal verification.
In such cases, one often wishes to reuse
operations on the representation type
for the new type abstraction, but to no avail:
the types are not the same.

To address these problems,
we present a new framework that
transports programs via equivalences.
Existing transport frameworks are either designed for
dependently typed, constructive proof assistants,
use univalence,
or are restricted to partial quotient types.
Our framework
\begin{enumerate*}
\item is designed for simple type theory,
\item generalises previous approaches working on partial quotient types, and
\item is based on standard mathematical concepts,
particularly Galois connections and equivalences.
\end{enumerate*}
We introduce the notions of partial Galois connection
and equivalence
and prove their closure properties under
(dependent) function relators,
(co)datatypes,
and compositions.
We formalised the framework in Isabelle/HOL and provide a prototype.\arxivsubmittext{\footnote{Non-peer reviewed, extended version of ``Transport via Partial Galois Connections and Equivalences'', 21st Asian Symposium on Programming Languages and Systems (APLAS), 2023~\cite{transportaplas}}}{}

%%% LLNCS START
%%
%% Keywords. The author(s) should pick words that accurately describe
%% the work being presented. Separate the keywords with commas.
\keywords{Galois connections \and Equivalences \and Relational \makebox{parametricity}}
%%% LLNCS END
%%% FOR ACM, keywords are outside abstract environment and before maketitle
\end{abstract}

\section{Introduction}\label{sec:introduction}

Computer scientists
often write programs and proofs in terms of representation types
but provide their libraries
in terms of different, though related, type abstractions.
For example,
the abstract type of finite sets
may be represented by the type of lists:
every finite set is related to every list
containing the same elements and, conversely,
every list is related to its set of elements.
As such, every function on lists respecting this relation
may be reused for a library on finite sets.
To be more explicit, consider the following example
in simple type theory:

\paragraph{A Simple Example}
Take the types of lists, $\listty{\alphaty}$,
and finite sets, $\fsetty{\alphaty}$.
There is a function $\tofset \holhasty \listty{\alphaty}\purefun\fsetty{\alphaty}$
that turns a list into its set of elements.
This allows us to define the relation
\makebox{$\listfsetrel\app xs\app s \define \tofset\app xs = s$}
that identifies lists and finite sets,
e.g.\
$\listfsetrel\app [1,2,3]\app \{1,2,3\}$
and
$\listfsetrel\app [3,1,2]\app \{1,2,3\}$.
Our goal is to use this identification to
transport programs between these two types.

For instance, take the function $\maxlist\app xs\define \foldr\app{\max}\app xs\app 0$ of type
\makebox{$\listty{\natty}\purefun\natty$}
that returns the maximum natural number contained in a list.
After some thinking, one recognises that $\maxlist$
respects the relation $\listfsetrel$ in the following sense:
if two lists correspond to the same set,
then applying $\maxlist$ to these lists returns equal results.
Formally,
\begin{equation}\label{eq:maxlist-listfsetrel}
\forall xs\app ys.\app \tofset\app xs = \tofset\app ys \holimplies \maxlist\app xs = \maxlist\app ys.
\end{equation}
Despite this insight,
we still cannot directly compute the maximum
of a finite set $s \holhasty \fsetty{\natty}$ using $\maxlist$;
the term $\maxlist\app s$ does not even typecheck (for good reasons).
But there is an indirect way
if we are also given an ``inverse'' of $\tofset$, call it
$\tolistfset \holhasty \fsetty{\alphaty}\purefun\listty{\alphaty}$,
that returns an arbitrary list containing the same elements as the given set.
The functions $\tofset$ and $\tolistfset$ form an equivalence between $\listty{\alphaty}$ and $\fsetty{\alphaty}$
that respects the relation $\listfsetrel$:
\begin{equation}\label{eq:tofset-tolistfset}
\forall xs.\app\listfsetrel\app xs\app (\tofset\app xs)
\qquad\text{and}\qquad
\forall s.\app\listfsetrel\app(\tolistfset\app s)\app s.
% \forall s.\app\tofset\app(\tolistfset\app s)=s.
\end{equation}
Thanks to this equivalence,
we can compute the maximum of $s$ by simply transporting $s$ along the equivalence:
\begin{equation}\label{eq:maxfset-def}
\maxfset\app s \define \maxlist\app (\tolistfset\app s).
\end{equation}
The correctness of this transport is guaranteed by \labelcref{eq:maxlist-listfsetrel,eq:tofset-tolistfset,eq:maxfset-def}:
\begin{equation}
\forall xs\app s.\app\listfsetrel\app xs\app s \holimplies \maxlist\app xs = \maxfset\app s.
\end{equation}
We can now readily replace
any occurrence of $\maxfset\app s$
by
$\maxlist\app(\tolistfset\app s)$
and, vice versa,
any occurrence of $\maxlist\app xs$
by $\maxfset\app(\tofset\app xs)$.
This process can be extended
to many other functions,
such as $\map$, $\filter$, $\intersect$,
by introducing new terms $\mapfset$, $\filterfset$, $\intersectfset$
and proving their respectfulness theorems.
Indeed, it is a very repetitive task begging for
\makebox{automation}.

\paragraph{State of the Art}
There are various frameworks
to automate the transport
of terms along equivalences.
Most of them are designed for
dependently typed, constructive proof assistants
and are based on \emph{type equivalences}
\cite{depinterop1,depinterop2,proofrepair,univalparam1,univalparam2},
which play a central role in homotopy type theory.
In a nutshell, type equivalences are pairs of functions $f,g$
that are mutually inverse (i.e.\ $g\app (f\app x) = x$ and $f\app (g\app y) = y$)
together with a compatibility condition.
They cannot solve our problem since
$\tofset$ and $\tolistfset$ are not mutually inverse.

Angiuli et al.~\citet{univalparam3}
note and address
this issue
in Cubical Agda~\cite{cubicalagda}.
Essentially, they first quotient both types
and then obtain
a type equivalence between the quotiented types.
Their approach supports a restricted variant of
\emph{quasi-partial equivalence relations} \cite{quasipers}
but also uses univalence \cite{univalence},
which is unavailable in major proof assistants
like Isabelle/HOL~\cite{isabellehol}
and Lean~3~\cite{lean3}/Lean~4~\cite{lean4}.

Another existing framework is Isabelle's
\emph{Lifting package}~\cite{isabellelifting},
which transports terms via \emph{partial quotient types}:
\begin{definition}\label{def:partquot}
A partial quotient type $(T,Abs,Rep)$ is given by
a right-unique and right-total relation
% \makebox{$T\holhasty\relty{\alphaty}{\betaty}$}
$T$
and two functions $Abs$,
\makebox{$Rep$}
respecting $T$, that is
$T\app x\app y\holimplies Abs\app x = y$
and
$T\app (Rep\app y)\app y$,
for all  $x,y$.
\end{definition}
In fact, $\parenths{\listfsetrel,\tofset,\tolistfset}$ forms a partial quotient type.
The Lifting package can thus transport our list library to
finite sets\footnote{The Lifting package
is indeed used pervasively for such purposes.
At the time of writing, Isabelle/HOL and the \emph{Archive of Formal Proofs}
(\url{www.isa-afp.org}) contain more than 2800 invocations of the package.}.
However, the package also has its limitations:

\paragraph{Limitations of the Lifting Package}
Consider the previous example with one modification:
rather than transporting $\maxlist$ to finite sets,
we want to transport it to the type of (potentially infinite) sets, $\setty{\alphaty}$.
We cannot build a partial quotient type from $\listty{\alphaty}$
to $\setty{\alphaty}$ because the required relation
$T\holhasty\app \relty{\listty{\alphaty}}{\setty{\alphaty}}$
is not right-total (we can only relate finite sets to lists).
The Lifting package is stuck.
But in theory, we can (almost) repeat the previous process:
There is again a function $\toset \holhasty \listty{\alphaty}\purefun\setty{\alphaty}$.
We can define a relation
\makebox{$\listsetrel\app xs\app s \define \toset\app xs = s$}.
We can again prove that $\maxlist$ respects $\listsetrel$:
\begin{equation}
\forall xs\app ys.\app \toset\app xs = \toset\app ys \holimplies \maxlist\app xs = \maxlist\app ys.
\end{equation}
There is a function
$\tolistset \holhasty \setty{\alphaty}\purefun\listty{\alphaty}$,
and we obtain a \emph{partial} equivalence:
\begin{equation}\label{eq:toset-tolistset}
\forall xs.\app\listsetrel\app xs\app (\toset\app xs)
\qquad\text{and}\qquad
\forall s.\app\finite\app s\holimplies \app\listsetrel\app(\tolistset\app s)\app s.
\end{equation}
We can define the function
$\maxset\app s \define \maxlist\app (\tolistset\app s)$.
And we again obtain a correctness theorem:
$\forall xs\app s.\app\listsetrel\app xs\app s \holimplies \maxlist\app xs = \maxset\app s$.
While this process looks rather similar, there is one subtle change:
the second part of \cref{eq:toset-tolistset} only holds conditionally.
As a contribution of this paper,
we show that these conditions are not showstoppers,
and that we can transport via such partial equivalences in general.

Now one may argue that
we could still use partial quotient types to transport
from lists to sets:
First obtain a right-unique, right-total relation $T$ by
building a subtype of the target type.
Then transport to the new subtype
and then inject to the original type.
In spirit, this is close to the approach suggested by Angiuli et al.~\citet{univalparam3}.
But \anontext{the authors}{the author} \anontext{find}{finds} this unsatisfactory from a practical and a conceptual perspective:
From a practical perspective, it introduces unnecessary
subtypes to our theory.
And conceptually,
the process for sets and lists
was almost identical to the one for
finite sets and lists --
there was no detour via subtypes.

A second limitation of the Lifting package
is that it does not support \emph{inter-argument dependencies}.
For example,
take the types of natural numbers, $\natty$,
and integers, $\intty$.
We can construct a partial quotient type
$(\intnatrel,\tonat,\toint)$,
where
$\toint\holhasty \natty\purefun\intty$
is the standard embedding,
$\tonat\holhasty \intty\purefun\natty$ is its inverse (a partial function),
and
$\intnatrel\app i\app n \define i = \toint\app n$.
It then seems straightforward
to transport subtraction
$(-_\intty)\holhasty \intty\purefun\intty\purefun\intty$
from integers
to natural numbers in the following way:
\begin{equation}
n_1 -_\natty \app  n_2 \define \tonat\app\parenths[\big]{\toint\app n_1 -_\intty\toint\app n_2}.
\end{equation}
And of course, we expect a correctness theorem:
\begin{equation}
\forall i_1\app n_1\app i_2\app n_2.\app\intnatrel\app i_1\app n_1 \land \intnatrel\app i_2\app n_2\holimplies \intnatrel\app (i_1-_\intty i_2)\app (n_1-_\natty n_2).
\end{equation}
But alas, the theorem does not hold:
we need an extra dependency
between the arguments of the respective subtractions,
e.g.\ $i_1 \geq i_2$ or $n_1\geq n_2$.
Unfortunately,
the Lifting package's theory~\cite{isabellelifting}
cannot account for such dependencies,
and as such,
the transport attempt for $(-_\intty)$ fails.

In a similar way,
the list index operator
\makebox{$\inflistidx\holhasty \listty{\alphaty}\purefun\natty\purefun\alphaty$}
can only be transported
to the type of arrays for indices that are in bounds
(cf.\ \cref{sec:examples}, \cref{ex:transp_dep_fun_rel}).
While solutions for dependently typed
environments~\cite{depinterop1,depinterop2,proofrepair,univalparam1,univalparam2,univalparam3}
typically handle such examples by
encoding the dependencies in a type,
e.g.\
\makebox{$(xs\holhasty \listty{\alphaty})\purefun\{0,\dotsc,\length\app xs-1\}\purefun\alphaty$},
it is unclear how to support this
in a simply typed environment.
As a contribution of this paper,
we show how to account for such dependencies
with the help of \emph{dependent function relators}.

\paragraph{Contributions and Outline}
We introduce a new transport framework -- simply called \transport.
Our framework
\begin{enumerate*}
\item is applicable to simple type theory,
\item is richer than previous approaches working on partial quotient types, and
\item is based on standard mathematical notions, particularly Galois connections and equivalences.
\end{enumerate*}
In \cref{sec:essence-transport},
we distil the essence of what we expect when
we transport terms via equivalences.
The derived set of minimal expectations motivates us
to base our framework on Galois connections.

To meet these expectations,
we introduce the notion of partial Galois connections,
which generalise (standard) Galois connections and partial quotient types,
in \cref{sec:galconeqiv}.
We also introduce a generalisation of the well-known
function relator that allows for dependent relations in
\cref{sec:funrelmono}.

\cref{sec:closure} builds the technical core of the paper.
We derive closure conditions for
partial Galois connections and equivalences
as well as typical order properties (reflexivity, transitivity, etc.).
Specifically, we show closure properties
under (dependent) function relators,
relators for (co)datatypes, and composition.
All these results are novel and formalised in Isabelle/HOL.

Based on our theory,
we implemented a prototype for automated transports in Isabelle/HOL
and illustrate its usage in \cref{sec:examples}.
We conclude with related work in \cref{sec:relatedwork}
and future work in \cref{sec:conclusion}.

This article’s \arxivsubmittext{supplementary material\footnote{\url{https://www.isa-afp.org/entries/Transport.html}}}{extended version~\cite{transportarxiv}}
includes the formalisation and a guide linking
all definitions, results, and examples to their formal~counterpart in Isabelle/HOL.

\section{The Essence of Transport}\label{sec:essence-transport}

Existing frameworks, although beneficial in practical contexts,
are unapplicable to our introductory examples.
We hence first want to find \emph{the essence of transport}\footnote{
To avoid confusion,
our work is not about the $\constfont{transport}$ map
from homotopy type theory~\cite[Chapter~2]{hottbook}.
We focus on the general task of transporting a term $t$
to another term $t'$
along some notion of equivalence (not necessarily a type equivalence).}.
To find this essence, we have to answer the following question:
\begin{center}
\emph{\makebox{What are the minimum expectations when we transport terms via equivalences?}}
\end{center}
In this section, we argue that Galois connections are the right notion to cover this essence.
Let us examine prior work to identify some guiding principles.

\paragraph{Type Equivalences}
Much recent work is based on type equivalences~\cite{univalparam1,univalparam2,univalparam3,proofrepair,depinterop1,depinterop2}.
We denote a type equivalence between
$\alphaty$ and $\betaty$ with mutual inverses
$f\holhasty\alphaty\purefun\betaty$ and
$g\holhasty\betaty\purefun\alphaty$
by
$\tyequiv{\alphaty}{\betaty}{f}{g}$.
Then, on a high level, given a set of equivalences
$\tyequiv{\alphaty_i}{\betaty_i}{f_i}{g_i}$ for $1\leq i\leq n$
and two target types $\alphaty,\betaty$ that may include $\alphaty_i,\betaty_i$,
one tries to build an equivalence $\tyequiv{\alphaty}{\betaty}{f}{g}$.
Given a term $t\holhasty\alphaty$,
we can then define $t'\define f\app t$,
satisfying $t = g\app t'$.
Symmetrically, for a term $t'\holhasty\betaty$,
we can define $t\define g\app t'$,
satisfying $f\app t = t'$.
This situation is depicted in~\cref{fig:typeequivalence}.

\begin{figure}[t]
\begin{subfigure}[t]{0.49\textwidth}
  \centering
  
% \begin{adjustbox}{width=\maintextequivfigurescale\textwidth}
  \begin{tikzpicture}
  \pic (alpha) {typeshift={leftstyle}{$(=)$}{-0.5cm}};

  \pic (x1) at ($(alphatype) + (-0.3, 1.4)$) {typeequivclassleft};
  \node[left=\typeequivlabeldistance of x1element] {$t$};

  \pic (x2) at ($(x1element) + (\equivdistance, -\equivdistance)$) {typeequivclassleft};
  \pic (x3) at ($(x2element) + (-\equivdistance, -\equivdistance)$) {typeequivclassleft};
  \pic (x4) at ($(x3element) + (\equivdistance, -\equivdistance)$) {typeequivclassleft};

  \pic (beta) at ($(alphatype) + (\defaulttypedistance, 0)$) {typeshift={rightstyle}{{$(=)$}}{-0.5cm}};

  \pic (y1) at ($(x1element) + (\defaulttypedistance,0)$) {typeequivclassright};
  \node[right=\typeequivlabeldistance of y1element] {$t'$};
  \pic (y2) at ($(x2element) + (\defaulttypedistance,0)$) {typeequivclassright};
  \pic (y3) at ($(x3element) + (\defaulttypedistance,0)$) {typeequivclassright};
  \pic (y4) at ($(x4element) + (\defaulttypedistance,0)$) {typeequivclassright};

  \pic {typeequivedgepair={x1element}{y1element}{$f$}{$g$}};
  \pic {typeequivedgepair={x2element}{y2element}{}{}};
  \pic {typeequivedgepair={x3element}{y3element}{}{}};
  \pic {typeequivedgepair={x4element}{y4element}{}{}};

\end{tikzpicture}
% \end{adjustbox}

  \caption{Example of a type equivalence. Left and right-hand side relation are restricted to be equality.}\label{fig:typeequivalence}
\end{subfigure}
\hfill
\begin{subfigure}[t]{0.49\textwidth}
  \centering
  
% \begin{adjustbox}{width=\maintextequivfigurescale\textwidth}
  \begin{tikzpicture}
  \pic (alpha) {typeshift={leftstyle}{{$(\approx)$}}{-0.5cm}};

  \node[element] (x1) at ($(alphatype) + (-0.3, 1.4)$) {};
  \node[left=\labeldistance of x1] {$t$};

  \node[element] (x2) at ($(x1) + (\equivdistance, -\equivdistance)$) {};
  \pic[leftstyle] {equivclass={x1}{x2}{1cm}};
  \pic (x3) at ($(x2) + (-\equivdistance, -\equivdistance)$) {typeequivclassleft};
  \node[element] (x4) at ($(x3element) + (\equivdistance, -\equivdistance)$) {};
  \node[element] (x5) at ($(x4) + (-\equivdistance, -0.5*\equivdistance)$) {};

  \pic (beta) at ($(alphatype) + (\defaulttypedistance, 0)$) {typeshift={rightstyle}{{$(=)$}}{-0.5cm}};

  \pic (y1) at ($(x1) + (\defaulttypedistance,0)$) {typeequivclassright};
  \node[right=\typeequivlabeldistance of y1element] {$t'$};

  \node[none] (y2) at ($(x2) + (\defaulttypedistance,0)$) {};
  \pic (y3) at ($(x3element) + (\defaulttypedistance,0)$) {typeequivclassright};

  \pic {leftedge={x1}{y1element}{$Abs$}};
  \pic {leftedge={x2}{y1element}{}};
  \draw[->] (y1element) edge[edgestyle,rightstyle,"$Rep$",pos=0.75] (x2);
  \pic {typeequivedgepair={x3element}{y3element}{}{}};

\end{tikzpicture}
% \end{adjustbox}

  \caption{Example of a partial quotient type. The left relation can be an arbitrary partial equivalence relation. The right relation is restricted to be equality.}\label{fig:partquottype}
\end{subfigure}
\caption{Examples of equivalences used in prior work.
Types are drawn solid, black.
Transport functions are drawn dashed.
Each equivalence gives rise to a number of equivalence classes
on the left and right-hand side of the equivalence,
which are drawn dotted.
Arrows inside equivalence classes are omitted.
}\label{fig:typeequivalence_partquottype}
\Description[Prior work is restricted to equivalences with equality as a right relation]{Prior work supports type equivalences and partial quotient types. Both are restricted to equality for the right-hand side relation of the equivalence.}
\end{figure}

\paragraph{Partial Quotient Types}
The Lifting package~\cite{isabellelifting}
% by \citet{isabellelifting}
is based on partial quotient types $(T,Abs,Rep)$ (see \cref{def:partquot}).
Every partial quotient type
induces a relation $(\approx)\holhasty\relty{\alphaty}{\alphaty}$
that identifies values in $\alphaty$
that map to the same value in~$\betaty$:
\begin{equation}\label{eq:partequivinducedrel}
x_1 \approx x_2\define \indom\app T\app x_1\land Abs\app x_1 = Abs\app x_2.
\end{equation}
Given a set of partial quotient types
$(T_i\holhasty\relty{\alphaty_i}{\betaty_i},\allowbreak Abs_i,Rep_i)$ for $1\leq i\leq n$
and two target types $\alphaty,\betaty$ that may include $\alphaty_i,\betaty_i$,
the Lifting package tries to build
a partial quotient type
$(T\holhasty\relty{\alphaty}{\betaty},Abs,Rep)$.
Given a term $t$ in the domain of $(\approx)$,
we can then define $t'\define Abs\app t$, satisfying
$t\approx Rep\app t'$.
Symmetrically, for a term $t'\holhasty\betaty$,
we can define $t\define Rep\app t'$,
satisfying $Abs\app t = t'$.
This situation is depicted in~\cref{fig:partquottype}.

\paragraph{The Essence}
Abstracting from these approaches, we note some commonalities:
\begin{itemize}
\item As input, they take base equivalences,
which are then used to build more complex equivalences.
\item The equivalences include
a \emph{left transport function}
$l\holhasty \alphaty\purefun\betaty$
and
a \emph{right transport function}
$r\holhasty \betaty\purefun\alphaty$.
They can be used to move terms
from one side of the equivalence to a ``similar''
term on the other side of the equivalence.
\item Terms $t\holhasty\alphaty$ and $t'\holhasty\betaty$ that are ``similar'' stand in particular relations:
in the case of type equivalences, $t = r\app t'$ and $l\app t = t'$;
in the case of Lifting, $t \approx r\app t'$ and $l\app t = t'$.
More abstractly,
$L\app t\app (r\app t')$ and $R\app (l\app t)\app t'$
for some \emph{left relation}
\makebox{$L\holhasty\relty{\alphaty}{\alphaty}$}
and \emph{right relation}
$R\holhasty\relty{\betaty}{\betaty}$.%
\footnote{
The choice of
$L\app t\app (r\app t'), R\app (l\app t)\app t'$
may seem arbitrary -- why not pick
$L\app t\app (r\app t'), R\app t'\app (l\app t)$ instead?
In the end, the choice does not matter:
While the former leads us to (monotone) Galois connections,
the latter leads us to antitone Galois connections.
Using that $L,R$ form a Galois connection if and only if
$L,R^{-1}$ form an antitone Galois connection,
every result in this paper
can be transformed to its corresponding result on
antitone Galois connections
by an appropriate instantiation of the framework.
}
\item More generally,
$L$ and $R$
specify how terms ought to be related in $\alphaty$ and~$\betaty$
and determine which terms can be meaningfully transported using $l$ and $r$.
\item $L,R,l,r$ are compatible: if
terms are related on one side (e.g.\ $L\app t_1\app t_2$), their transports are related on the other side (e.g.\ $R\app (l\app t_1)\app (l\app t_2)$).
\end{itemize}
Based on these commonalities,
we can formulate six minimum expectations:
\begin{expect}
\item We want to specify how terms in $\alphaty$ and $\beta$
are related using relations $L,R$.
\item Transports should be possible by means of functions $l\holhasty\alphaty\purefun\betaty,r\holhasty\betaty\purefun\alphaty$.
\item\label{expect:closure}The notion of equivalence should be closed under common relators,
particularly those for functions and (co)datatypes.
\item\label{expect:mono}Terms related on one side have transports that are related on the other~side.
\item\label{expect:simtrans}Transporting a term should result in a term that is ``similar'' to its input.
\item\label{expect:reltrans}``Similar'' terms
$t\holhasty\alphaty$ and $t'\holhasty\betaty$
are related with each other's transports,
i.e.\
$L\app t\app (r\app t')$ and $R\app (l\app t)\app t'$.
\end{expect}
% \begin{remark}\label{rem:similarity}
% \cref{expect:reltrans}
% may not be a sufficient condition for ``similarity''.
% For instance, in the case of univalent transport~\cite{univalparam1,univalparam2},
% ``similar'' terms should not just have related but (propositionally) equal transports.
% We argue, however,
% that \cref{expect:reltrans}
% is a necessary condition.
% For otherwise,
% the notion of similarity between a term in
% $L$ and a term in $R$
% would be incompatible with the notions of similarity
% within
% $L$ and $R$
% themselves.
% It would hence seem strange to call
% it ``an equivalence between $L$ and $R$''.
% \end{remark}
Applying~\cref{expect:reltrans}
to~\cref{expect:simtrans}
then yields the requirements
\begin{require}(2)
\item\label{require:inflat} $L\app t\app (r\app(l\app t))$,
\item\label{require:deflat} $R\app (l\app (r\app t'))\app t'$.
\end{require}
At this point,
one may notice
the similarity to \emph{Galois connections}.
A Galois connection between two preorders
\makebox{$\inflerel{L}$}
and $\inflerel{R}$
consists of two functions
$l$ and
$r$
such that
\begin{itemize}
\item $l$ is monotone, that is $x_1\lerel{L}x_2\holimplies l\app x_1\lerel{R}l\app x_2$ for all $x_1,x_2$,
\item $r$ is monotone, that is $y_1\lerel{R}y_2\holimplies r\app y_1\lerel{L}r\app y_2$ for all $y_1,y_2$, and
\item $x \lerel{L} r\app(l\app x)$ and $l\app (r\app y) \lerel{R} y$ for all $x,y$.\footnote{These two conditions are equivalent to requiring
$x \lerel{L}r\app y \holiff l\app x\lerel{R} y$ for all $x,y$.}
\end{itemize}
The final conditions correspond
to \cref{require:inflat,require:deflat} above,
while the monotonicity conditions on $l$ and $r$
correspond to \cref{expect:mono}.

\paragraph{Other Motivations}
A second motivation to base our framework
on Galois connections
comes from category theory.
There, an equivalence between two categories $L,R$ is given
by two functors
$l : L\to R$ and
$r : R\to L$
and two natural isomorphisms
$\unit : Id_L\to r\comp l$ and
$\counit : l\comp r \to Id_R$.
Applied to preorders $\inflerel{L},\inflerel{R}$ and monotone functions $l,r$,
this translates to the four conditions
\begin{condition}(4)
\item\label{cond:unit}$x\lerel{L} r\app(l\app x)$,
\item\label{cond:counit}$l\app (r\app y) \lerel{R} y$,
\item $r\app(l\app x)\lerel{L} x$,
\item $y \lerel{R} l\app (r\app y)$.
\end{condition}
A related categorical concept is that of an \emph{adjunction}.
When applied to preorders and monotone functions,
an adjunction is similar to an equivalence but is only required to satisfy
\cref{cond:unit,cond:counit}.
In fact, while Galois connections are not categorical equivalences,
they are adjunctions.
From this perspective,
a Galois connection can be seen as a weak form of an (order) equivalence.

A final motivation is the applicability and wide-spread use
of Galois connections.
They are fundamental in the closely related field
of abstract interpretation~\cite{abstractint1,abstractintbook},
where they are used to relate concrete to abstract domains.
Moreover, they are pervasive throughout mathematics.
In the words of Saunders Mac Lane:
\begin{quote}
\emph{The slogan is ``Adjoint functors arise everywhere''.}

\hfill (Categories for the Working Mathematician)
\end{quote}

We hope
our exposition convinced the reader that
Galois connections are
a suitable notion to cover the essence of transport.
The remaining challenges are
\begin{itemize}
\item to bring the notion of Galois connections to a partial world -- the relations
$L,R$
may only be defined on a subset of $\alphaty,\betaty$ -- and
\item to check the closure properties of our definitions under common relators.
\end{itemize}

\section{Partial Galois Connections, Equivalences, and Relators}\label{sec:galconequivrel}

In the previous section, we singled out Galois connections as a promising candidate for \transport.
Now we want to bring our ideas to the formal world of proof assistants.
In this section, we introduce the
required background theory for this endeavour.
In the following, we fix two relations
$L\holhasty\relty{\alphaty}{\alphaty}$,
$R\holhasty\relty{\betaty}{\betaty}$
and two functions
$l\holhasty\alphaty\purefun\betaty$,
$r\holhasty\betaty\purefun\alphaty$.

\subsection{(Order) Basics}

We work in a polymorphic, simple type theory~\cite{churchstt},
as employed,
for example, in Isabelle/HOL~\cite{isabellehol}.
In particular, our formalisation uses function extensionality.
We assume basic familiarity with Isabelle's syntax.
Here, we only recap the most important concepts for our work.
A complete list of definitions can be found in~\appendixref{sec:appendixorderbasics}.

A \emph{predicate on a type $\alphaty$} is a function of type
$\alphaty \purefun \bool$.
A \emph{relation on $\alphaty$ and $\betaty$} is a function of type
$\relty{\alphaty}{\betaty}$.
\emph{Composition of two relations $R,S$}
is defined as $(R\relcomp S)\app x\app y\define\exists z.\ R\app x\app z\land S\app z\app y$.
A \emph{relation $R$ is finer than a relation $S$}, written $R\le S$,
if $\forall x\app y.\app R\app x\app y\holimplies S\app x\app y$.
It will be convenient to interpret relations as infix operators.
For every relation $R$, we hence introduce an infix operator $\inflerel{R}\define R$,
that is $x \lerel{R} y \holiff R\app x\app y$.
We also write $\infgerel{R}\define \inv{\inflerel{R}}$.
The \emph{field predicate on a relation} is
defined as
$\infield\app R\app x\define \indom\app R\app x\lor \incodom\app R\app x$.

We use relativised versions of well-known order-theoretic concepts.
For example,
given a predicate $P$,
we define \emph{reflexivity on $P$ and $R$} as
$\reflon\app P\app R\define
\forall x.\app P\app x \holimplies R\app x\app x$.
We proceed analogously for other standard order-theoretic concepts, such as transitivity, preorders, etc. (see~\appendixref{sec:appendixorderbasics}).

\subsection{Function Relators and Monotonicity}\label{sec:funrelmono}

We introduce a generalisation of the well-known function relator (see e.g.~\cite{reynoldsparampoly}).
The slogan of the function relator is
``related functions map related inputs to related outputs''.
Our generalisation --
the \emph{dependent function relator} --
additionally allows its target relation to depend on both inputs:
\begin{equation}
\parenths[\big]{\depfunrel{x}{y}{R}{S}}\app f\app g \define
\forall x\app y.\app R\app x\app y\holimplies S\app(f\app x)\app(g\app y),
\end{equation}
where $x,y$ may occur freely in $S$.
The well-known \emph{(non-dependent) function relator} is given as a special case:
$\parenths{\funrel{R}{S}} \define \parenths[\big]{\depfunrel{\_}{\_}{R}{S}}$.
A function is \emph{monotone from $R$ to $S$} if it maps $R$-related inputs to $S$-related outputs:
\begin{equation}
\parenths[\big]{\depmono{x}{y}{R}{S}}\app f \define \parenths[\big]{\depfunrel{x}{y}{R}{S}}\app f\app f,
\end{equation}
where $x,y$ may occur freely in $S$.
A \emph{monotone function relator} is like a function relator but additionally requires its members to be monotone:
\begin{equation}
\begin{aligned}
\parenths[\big]{\monodepfunrel{x}{y}{R}{S}}\app f\app g \define &\parenths[\big]{\depfunrel{x}{y}{R}{S}}\app f\app g\\
                                                                &\land \parenths[\big]{\depmono{x}{y}{R}{S}}\app f
                                                                \land \parenths[\big]{\depmono{x}{y}{R}{S}}\app g,
\end{aligned}
\end{equation}
where $x,y$ may occur freely in $S$.
In some examples, we have to include conditionals in our relators.
For this, we define the \emph{relational if conditional}
$\relif\app B\app S\app x\app y \define B \holimplies S\app x\app y$
and set the following notation:
\begin{align}
\parenths[\big]{\depfunrelrest{x}{y}{R}{B}{S}}&\define
\parenths[\big]{\depfunrel{x}{y}{R}{\relif\app B\app S}},
% \parenths[\big]{\depmonorest{x}{y}{R}{B}{S}}&\define
% \parenths[\big]{\depmono{x}{y}{R}{\relif\app B\app S}},
\end{align}
where $x,y$ may occur freely in $B,S$.
% We sometimes also have to relate functions
% with respect to a predicate rather than a relation.
% We hence introduce analogous definitions
% where the role of the input relation $R$ is replaced
% by a predicate $P$.
% For example,
% $\parenths[\big]{\depfunrelpred{x}{P}{S}}\app f\app g \define
% \forall x.\app P\app x\holimplies S\app (f\app x)\app(g\app x)$,
% where $x$ may occur freely in $S$.
% The remaining definitions can be found in~\appendixref{sec:appendixfunrelmono} for reference.

\subsection{Galois Relator}\label{sec:galrel}

In \cref{expect:reltrans} of \cref{sec:essence-transport},
we noted
that
``similar'' terms
$t,t'$
are related with each other's transports,
i.e.\
$L\app t\app (r\app t')$ and $R\app (l\app t)\app t'$.
% terms $t\holhasty\alphaty$ and $t'\holhasty\betaty$
% should have related transports;
% that is, $t \lerel{L} r\app t'$ and $l\app t\lerel{R} t'$.
We now define this relation formally,
calling it the \emph{Galois relator}:
\begin{equation}
\galrelconst\app \inflerel{L}\app \inflerel{R}\app r\app x\app y\define
\incodom \inflerel{R}\app y \land x \lerel{L} r\app y
\end{equation}
When the parameters are clear from the context,
we will use the infix notation
\makebox{$\infgalrel{L}\define\galrelconst\app \inflerel{L}\app \inflerel{R}\app r$}.
% and
% $\flipinfgalrel{R}\define\galrelconst\app \infgerel{R}\app \infgerel{L}\app l$\footnote{that is $x\flipgalrel{R}y\holiff \indom\app \inflerel{L}\app x \land l\app x \lerel{R} y$}.
% Of course, this definition only covers one half of
% \cref{expect:reltrans}.
% But for relations satisfying the Galois property,
% this is already sufficient:
% \begin{lemma}\label{lem:galreliffalt}
% Assume $\galp[\big]{\inflerel{L}}{\inflerel{R}}{l}{r}$.
% Then
% $x \galrel{L} y \holiff \indom\app \inflerel{L}\app x \land
% l\app x \lerel{R} y$.
% \end{lemma}
It is easy to show that Galois relators
generalise the transport relations
of partial quotient types:
\begin{lemma}\label{lem:galrelpartquoteq}
For every partial quotient type $(T,l,r)$ with induced
left relation
% \footnote{see~\cref{eq:partequivinducedrel}}
$\inflerel{L}$,
we have
\makebox{$T = \galrelconst\app \inflerel{L}\app (=)\app r$}.
\end{lemma}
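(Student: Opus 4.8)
The plan is to prove equality of the two relations by relational extensionality (available here since the setting assumes function extensionality), reducing the goal to the pointwise biconditional $T\app x\app y \holiff \galrelconst\app \inflerel{L}\app (=)\app r\app x\app y$ for arbitrary $x, y$, and then to unfold both sides. On the right, I would first note that the codomain condition $\incodom\app (=)\app y$ holds trivially — it is witnessed by $y$ itself — so the Galois relator collapses to its second conjunct $x \lerel{L} r\app y$. Unfolding the induced left relation from \cref{eq:partequivinducedrel} (instantiating $Abs$ with $l$) rewrites this to $\indom\app T\app x \land l\app x = l\app (r\app y)$.

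The crucial simplification is to eliminate the nested term $l\app (r\app y)$. The representation condition $T\app(r\app y)\app y$ of the partial quotient type, fed into the abstraction condition $T\app x'\app y'\holimplies l\app x' = y'$ at $x' \define r\app y$ and $y' \define y$, yields $l\app(r\app y) = y$. Hence the right-hand side reduces to $\indom\app T\app x \land l\app x = y$, and it remains to establish $T\app x\app y \holiff \indom\app T\app x \land l\app x = y$.

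For the forward direction, $T\app x\app y$ immediately gives $\indom\app T\app x$ (witnessed by $y$) and, via the abstraction condition, $l\app x = y$. For the backward direction, $\indom\app T\app x$ furnishes some $z$ with $T\app x\app z$; the abstraction condition then gives $l\app x = z$, and together with the hypothesis $l\app x = y$ this forces $z = y$, so $T\app x\app y$ follows by rewriting $z$ to $y$ in $T\app x\app z$.

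I do not anticipate a substantial obstacle: the whole argument is a guided unfolding, and the only point requiring care is tracking which respecting condition is used where — the representation condition is needed exactly once, to collapse $l\app(r\app y)$ to $y$, whereas the abstraction condition drives both halves of the final biconditional. It is worth flagging that neither right-totality nor right-uniqueness of $T$ is invoked directly; right-uniqueness is already subsumed by the abstraction condition, and the vacuity of $\incodom\app(=)$ is exactly what renders the codomain clause of the Galois relator inert in this special case — foreshadowing that the general Galois relator carries strictly more information than the transport relation of a partial quotient type.
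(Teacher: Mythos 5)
Your proof is correct: the paper itself gives no explicit argument for \cref{lem:galrelpartquoteq} (it is stated as ``easy to show'' and deferred to the formalisation), and your direct unfolding --- collapsing $\incodom\app(=)\app y$ trivially, using $T\app(r\app y)\app y$ plus the abstraction condition to get $l\app(r\app y)=y$, and then establishing $T\app x\app y \holiff \indom\app T\app x \land l\app x = y$ via the abstraction condition in both directions --- is exactly the intended argument. Your closing observation is also accurate: right-totality and right-uniqueness of $T$ are never needed separately, since they are already consequences of the two respecting conditions of \cref{def:partquot}.
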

% Finally, as noted in \cref{rem:similarity},
% the Galois relator may not be sufficient to determine whether two terms are ``similar''.
% The notion of similarity is context-dependent.
% Indeed, we will check whether it does align
% with our own context-dependent expectations
% in the next section.

\subsection{Partial Galois Connections and Equivalences}\label{sec:galconeqiv}

In their standard form,
Galois connections are defined on preorders
$\inflerel{L},\inflerel{R}$,
where
every $x\holhasty\alphaty$
is in the domain of $\inflerel{L}$
and
every $y\holhasty\betaty$
is in the domain of $\inflerel{R}$.
But as we have seen,
this is not generally the case
when \makebox{transporting terms}.

We hence lift the notion of Galois connections to a partial setting.
We also do not assume any order axioms on $\inflerel{L},\inflerel{R}$
a priori but add them as needed.
In our formalisation,
we moreover break the concept
of Galois connections
down into smaller pieces
that, to our knowledge,
do not appear as such in the literature.
This allows us to obtain very precise results
when deriving the closure properties for our definitions
(\cref{sec:closure}).
But for reasons of brevity,
we only state the main definitions and results here.
Details can be found in \appendixref{sec:appendixgalconeqiv}.

The \emph{(partial) Galois property} is defined as:
\begin{equation}\label{eq:def_gal_prop}
\begin{aligned}
\galp[\big]{\inflerel{L}}{\inflerel{R}}{l}{r} \define
&\forall x\app y.\app \indom\app \inflerel{L}\app x\land \incodom \inflerel{R}\app y\holimplies{}\\
&(x \lerel{L} r\app y\holiff l\app x \lerel{R} y).
\end{aligned}
\end{equation}
If $l$ and $r$ are also monotone,
we obtain a \emph{(partial) Galois connection}:
\begin{equation}\label{eq:def_gal_conn}
\begin{aligned}
\galcapp[\big]{\inflerel{L}}{\inflerel{R}}{l}{r} \define
&\galp[\big]{\inflerel{L}}{\inflerel{R}}{l}{r} \\
&\land\parenths[\big]{\mono{\inflerel{L}}{\inflerel{R}}}\app l \land
\parenths[\big]{\mono{\inflerel{R}}{\inflerel{L}}}\app r.
\end{aligned}
\end{equation}
We omit the qualifier ``partial'' when referring to these
definitions, unless we want to avoid ambiguity.
% Note that we neither require $\inflerel{L},\inflerel{R}$
% to be transitive nor reflexive.
An example Galois connection can be found in \cref{fig:galcon}.
\begin{figure}[t]
\begin{subfigure}[t]{0.49\textwidth}
  \centering
  
% \begin{adjustbox}{width=\maintextequivfigurescale\textwidth}
  \begin{tikzpicture}
  \pic (alpha) {typeshift={leftstyle}{$\inflerel{L}$}{-0.60cm}};

  \node[element] (x1) at ($(alphatype) + (-0.3, 1.4)$) {};
  \node[element] (x2) at ($(x1) + (\equivdistance, -\equivdistance)$) {};
  \node[element] (x1') at ($(x1)!0.5!(x2)$) {};
  \node[element] (x3) at ($(x2) + (-\equivdistance, -\equivdistance)$) {};
  \node[element] (x2') at ($(x1)!0.5!(x3)$) {};
  \node[element] (x4) at ($(x3) + (\equivdistance, -\equivdistance)$) {};
  \node[element] (x5) at ($(x4) + (-\equivdistance, -0.5*\equivdistance)$) {};

  \pic (beta) at ($(alphatype) + (\defaulttypedistance, 0)$) {typeshift={rightstyle}{{$\inflerel{R}$}}{-0.60cm}};

  \node[element] (y1) at ($(x1) + (\defaulttypedistance,0)$) {};
  \node[element] (y2) at ($(x2) + (\defaulttypedistance,0)$) {};
  \node[element] (y3) at ($(x3) + (\defaulttypedistance,0)$) {};
  \node[element] (y4) at ($(x4) + (\defaulttypedistance,0)$) {};
  \node[element] (y5) at ($(x5) + (\defaulttypedistance,0)$) {};
  \path let \p1 = (y1) in let \p2 = (y2) in
    node[element] (y1') at (\x2,\y1) {};
  \node[element] (y2') at ($(y2)!0.3!(y4)$) {};

  \draw[->,leftstyle,relstyle] (x1) to (x1');
  \draw[->,leftstyle,relstyle] (x1') to (x2);
  \draw[->,leftstyle,relstyle] (x2) to (x3);
  \draw[->,loop left,loopstyle,leftstyle,relstyle] (x1') to (x1');
  % \draw[->,loop left,loopstyle,leftstyle,relstyle] (x2) to (x2);
  \draw[->,loop left,loopstyle,leftstyle,relstyle] (x3) to (x3);
  \draw[->,loop above,loopstyle,leftstyle,relstyle] (x5) to (x5);
  \draw[->,rightstyle,relstyle] (y1) to (y2);
  \draw[->,rightstyle,relstyle] (y2) to (y3);
  \draw[->,rightstyle,relstyle] (y3) to (y4);
  \draw[->,loop right,loopstyle,rightstyle,relstyle] (y1) to (y1);
  % \draw[->,loop right,loopstyle,rightstyle,relstyle] (y2) to (y2);
  \draw[->,loop right,loopstyle,rightstyle,relstyle] (y3) to (y3);
  \draw[->,loop right,loopstyle,rightstyle,relstyle] (y5) to (y5);
  \pic {leftedge={x1}{y1}{$l$}};
  \pic {leftedge={x1'}{y1}{}};
  \pic {rightedge={y1}{x1'}{}};
  \pic {typeequivedgepair={x2}{y2}{}{}};
  \draw[->] (y4) edge[rightstyle,edgestyle,bend left=15] (x3);
  \pic {typeequivedgepair={x3}{y3}{}{}};
  \pic {typeequivedgepair={x5}{y5}{}{$r$}};

\end{tikzpicture}
% \end{adjustbox}

  \caption{A partial Galois connection. Note that unlike in \cref{fig:typeequivalence_partquottype}, the relations may not decompose into equivalence classes.}\label{fig:galcon}
\end{subfigure}
\hfill
\begin{subfigure}[t]{0.49\textwidth}
  \centering
  
% \begin{adjustbox}{width=\maintextequivfigurescale\textwidth}
  \begin{tikzpicture}
  \pic (alpha) {typeshift={leftstyle}{$\inflerel{L}$}{-0.6cm}};

  \node[element] (x1) at ($(alphatype) + (-0.3, 1.4)$) {};
  \pic (x2) at ($(x1) + (\equivdistance, -\equivdistance)$) {typeequivclassleft};
  \node[element] (x1') at ($(x1)!0.5!(x2element)$) {};
  \pic[leftstyle] {equivclass={x1}{x1'}{0.5cm}};
  \pic (x3) at ($(x2element) + (-\equivdistance, -\equivdistance)$) {typeequivclassleft};
  \node[element] (x2element') at ($(x1)!0.5!(x3element)$) {};
  \node[element] (x4) at ($(x3element) + (\equivdistance, -\equivdistance)$) {};
  \pic (x5) at ($(x4) + (-\equivdistance, -0.5*\equivdistance)$) {typeequivclassleft};

  \pic (beta) at ($(alphatype) + (\defaulttypedistance, 0)$) {typeshift={rightstyle}{{$\inflerel{R}$}}{-0.6cm}};

  \pic (y1) at ($(x1) + (\defaulttypedistance,0)$) {typeequivclassright};
  \pic (y2) at ($(x2element) + (\defaulttypedistance,0)$) {typeequivclassright};
  \node[element] (y3) at ($(x3element) + (\defaulttypedistance,0)$) {};
  \node[element] (y4) at ($(x4) + (\defaulttypedistance,0)$) {};
  \pic[rightstyle] {equivclass={y3}{y4}{0.9cm}};
  \pic (y5) at ($(x5element) + (\defaulttypedistance,0)$) {typeequivclassright};
  \path let \p1 = (y1) in let \p2 = (y2) in
    node[element] (y1element') at (\x2,\y1) {};
  \node[element] (y2element') at ($(y2element)!0.3!(y4)$) {};

  \draw[->,leftstyle,relstyle] (x1') to (x2element);
  \draw[->,leftstyle,relstyle] (x2element) to (x3element);
  \draw[->,rightstyle,relstyle] (y1element) to (y2element);
  \draw[->,rightstyle,relstyle] (y2element) to (y3);
  \pic {leftedge={x1}{y1element}{$l$}};
  \pic {leftedge={x1'}{y1element}{}};
  \pic {rightedge={y1element}{x1'}{}};
  \pic {typeequivedgepair={x2element}{y2element}{}{}};
  \draw[->] (y4) edge[rightstyle,relstyle,edgestyle,bend left=15] (x3element);
  \pic {typeequivedgepair={x3element}{y3}{}{}};
  \pic {typeequivedgepair={x5element}{y5element}{}{$r$}};

\end{tikzpicture}
% \end{adjustbox}

  \caption{A partial Galois equivalence. The relations decompose into
``strongly connected components'', drawn as dotted circles.
Any two members in such a component are connected.
These arrows are omitted.
}\label{fig:galequiv}
\end{subfigure}
\caption{Examples of partial equivalences as defined in \labelcref{eq:def_gal_conn},\labelcref{eq:def_gal_equiv}.
Types are drawn solid, black,
transport functions dashed,
and left and right relations dotted.}\label{fig:galcon_galequiv}
\Description[Examples of equivalences as defined in this work.]{Examples of a Galois connection and Galois equivalence as defined in this work. Galois connections may not decompose into equivalence classes. Galois equivalences essentially decompose into ``strongly connected components''.}
\end{figure}

As mentioned in \cref{sec:essence-transport},
Galois connections can be seen as a weak form of an equivalence.
Unfortunately,
% as we will see in \cref{sec:closurecomp},
they are not in general closed under compositions
(cf.\ \cref{sec:closurecomp}),
where we need a stronger form of an equivalence.
We can obtain a suitable strengthening
by requiring a two-sided Galois connection,
which we call a \emph{(partial) Galois equivalence}:
\begin{equation}\label{eq:def_gal_equiv}
\galequiv[\big]{\inflerel{L}}{\inflerel{R}}{l}{r} \define
\galcapp[\big]{\inflerel{L}}{\inflerel{R}}{l}{r}\land
\galcapp[\big]{\inflerel{R}}{\inflerel{L}}{r}{l}
\end{equation}
An example of a Galois equivalence can be found in \cref{fig:galequiv}.
It can be shown that Galois equivalences are,
under mild conditions,
equivalent to the traditional notion of
(partial) order equivalences~(see \appendixref{sec:appendixorderequivs}).

In practice,
the relations $\inflerel{L},\inflerel{R}$
are often preorders or partial equivalence relations (PERs).
Given some
$\galequiv[\big]{\inflerel{L}}{\inflerel{R}}{l}{r}$,
we hence introduce
the notations
$\preequiv[\big]{\inflerel{L}}{\inflerel{R}}{l}{r}$
and
$\perequiv[\big]{\inflerel{L}}{\inflerel{R}}{l}{r}$
in case both relations
$\inflerel{L},\inflerel{R}$
are preorders and
PERs on their domain,
respectively.
It is easy to show that Galois equivalences
generalise partial quotient types:
\begin{lemma}\label{lem:genpartquot}
$(T,l,r)$ is a partial quotient type with induced left relation $\inflerel{L}$
\textiff
$\perequiv[\big]{\inflerel{L}}{(=)}{l}{r}$.
\end{lemma}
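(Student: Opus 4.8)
The plan is to prove both directions by reducing the two structures to a common form, using \cref{lem:galrelpartquoteq} as the bridge: for a partial quotient type $T = \galrelconst\app\inflerel{L}\app(=)\app r$, and since every $y$ lies in the codomain of $(=)$, this unfolds to the simple characterisation $T\app x\app y \holiff x \lerel{L} r\app y$. Symmetrically, in the backward direction I would \emph{define} $T \define \galrelconst\app\inflerel{L}\app(=)\app r$ and show it witnesses a partial quotient type. The first routine observation I would record is that the $(=)$-side of every condition in $\perequiv[\big]{\inflerel{L}}{(=)}{l}{r}$ collapses: $\incodom\app(=)$ and $\indom\app(=)$ are total and $(=)$ is trivially a PER, so the statement amounts to the Galois equivalence together with $\inflerel{L}$ being a PER on its domain.

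For the forward direction, assume $(T,l,r)$ is a partial quotient type with induced left relation $\inflerel{L}$. I would first establish the helper facts $\indom\app T\app x \holiff T\app x\app(l\app x)$ (from $l$ respecting $T$) and $T\app(r\app y)\app y$ with $l\app(r\app y) = y$ (from $r$ respecting $T$ and right-uniqueness), and note $\indom\app\inflerel{L} = \indom\app T$. Monotonicity of $l$ from $\inflerel{L}$ to $(=)$ is immediate from the definition of the induced relation, and monotonicity of $r$ from $(=)$ to $\inflerel{L}$ follows because $r\app y \lerel{L} r\app y$ reduces to $\indom\app T\app(r\app y)$. The Galois property is where right-uniqueness and the respectfulness of $l$ combine: using $x \lerel{L} r\app y \holiff T\app x\app y$ and $T\app x\app y \holiff \indom\app T\app x \land l\app x = y$, the biconditional $x \lerel{L} r\app y \holiff l\app x = y$ falls out under the domain hypothesis. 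That $\inflerel{L}$ is a PER on its domain reduces, after the domain restriction, to symmetry and transitivity of $l$-equality, which are trivial.

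For the backward direction, assume $\perequiv[\big]{\inflerel{L}}{(=)}{l}{r}$ and take $T \define \galrelconst\app\inflerel{L}\app(=)\app r$, so $T\app x\app y \holiff x \lerel{L} r\app y$. Right-uniqueness is direct: if $x \lerel{L} r\app y_1$ and $x \lerel{L} r\app y_2$ then $\indom\app\inflerel{L}\app x$ holds and the Galois property forces $l\app x = y_1 = y_2$. Right-totality follows from monotonicity of $r$, which gives the reflexive instance $r\app y \lerel{L} r\app y$, so $x \define r\app y$ witnesses $T\app x\app y$; the same reflexive instance, fed through the Galois property, yields $l\app(r\app y) = y$, which is exactly $r$ respecting $T$, while the biconditional of the Galois property gives $l$ respecting $T$.

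The hard part will be showing that the induced left relation of this reconstructed $T$ is again $\inflerel{L}$, i.e.\ $x_1 \lerel{L} x_2 \holiff \indom\app T\app x_1 \land l\app x_1 = l\app x_2$. The forward inclusion is easy (monotonicity of $l$, plus the unit $x_1 \lerel{L} r\app(l\app x_1)$ derived from the Galois property, which also witnesses $\indom\app T\app x_1$). The reverse inclusion is the crux and is exactly where a mere Galois connection is too weak: from $\indom\app T\app x_1$ and $l\app x_1 = l\app x_2$ I obtain $x_1 \lerel{L} r\app(l\app x_2)$, and I must close the gap $r\app(l\app x_2) \lerel{L} x_2$ and then compose. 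This step needs the \emph{second} half of the Galois equivalence, namely the connection $\galcapp[\big]{(=)}{\inflerel{L}}{r}{l}$ supplying the reverse unit/counit on elements of $\incodom\app\inflerel{L}$, together with symmetry and transitivity of $\inflerel{L}$ on its domain -- precisely the ``strongly connected component'' picture of \cref{fig:galequiv}. I expect this recovery, and the careful bookkeeping of which domain and field memberships are available at each step, to be the main obstacle; the remaining closure under the concrete equalities is routine.
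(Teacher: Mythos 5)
The paper gives no in-text proof of this lemma (it is discharged only in the formalisation), so your proposal has to stand on its own. Its skeleton is the natural one: collapse the trivial $(=)$-side conditions, reduce $T$ to the characterisation $T\app x\app y \holiff x \lerel{L} r\app y$ via \cref{lem:galrelpartquoteq}, and check the clauses of \cref{def:partquot} against the two Galois connections. Your backward-direction verifications of right-uniqueness, right-totality and both respectfulness conditions are correct, as is the forward inclusion of your final equality. (A smaller omission: in the forward direction you verify only $\galcapp{\inflerel{L}}{(=)}{l}{r}$, but a Galois equivalence also requires $\galcapp{(=)}{\inflerel{L}}{r}{l}$, whose Galois property carries the hypothesis $\incodom\app\inflerel{L}\app x$ rather than $\indom\app\inflerel{L}\app x$ and so needs its own short argument.)

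The genuine gap sits exactly at the step you call the crux, and it cannot be closed, because you have silently weakened ``field'' to ``domain'' throughout. The induced relation \cref{eq:partequivinducedrel} constrains only its \emph{first} argument, so in your reverse inclusion the hypotheses $\indom\app T\app x_1$ and $l\app x_1 = l\app x_2$ place no constraint at all on $x_2$: nothing puts $x_2$ into $\infield\app\inflerel{L}$, yet both tools you invoke to bridge $r\app(l\app x_2) \lerel{L} x_2$ --- the Galois property of $\galcapp{(=)}{\inflerel{L}}{r}{l}$, which needs $\incodom\app\inflerel{L}\app x_2$, and symmetry, which $\partequivon\app\infieldapp{\inflerel{L}}$ grants only inside the field --- are unavailable. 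Indeed the needed implication is false: take $\alphaty=\braces{a,b}$, $\betaty=\braces{c}$, let $\inflerel{L}$ relate $a$ to $a$ and nothing else, $l\app a = l\app b = c$, and $r\app c = a$. Then $\perequiv{\inflerel{L}}{(=)}{l}{r}$ holds (both Galois properties degenerate to $a \lerel{L} a \holiff c = c$, and $\inflerel{L}$ is a PER on its field $\braces{a}$), and $T \define \galrelconst\app\inflerel{L}\app(=)\app r$ relates only $a$ to $c$; but the relation induced by \cref{eq:partequivinducedrel} additionally relates $a$ to $b$, hence differs from $\inflerel{L}$. The same example refutes the forward direction as you wrote it: the partial quotient type $(T,l,r)$ has an induced relation that is a PER on its \emph{domain} $\braces{a}$ --- which is all your argument establishes --- but is not symmetric on its \emph{field} $\braces{a,b}$, whereas $\perequivsym$ demands $\partequivon\app\infieldapp{\inflerel{L}}$. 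So under the literal reading of \cref{eq:partequivinducedrel} both directions fail and no proof exists; the lemma, and with it your unit--symmetry--transitivity plan for the crux (which then works verbatim), is rescued precisely by adding the conjunct $\indom\app T\app x_2$ to \cref{eq:partequivinducedrel}, i.e.\ taking the induced relation to be $T \relcomp \inv{T}$, which makes domain and field coincide --- the coupling that Isabelle's Lifting package builds into its quotient predicate. Any repaired write-up must track at every step whether domain or field membership is actually available; that bookkeeping, which you flagged but did not carry out, is where the entire difficulty of this lemma lives.
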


\section{Closure Properties}\label{sec:closure}

We now explore the closure properties of
partial Galois connections and equivalences,
as well as standard order properties,
such as reflexivity and transitivity.
We will derive closure conditions
for the dependent function relator,
relators for (co)datatypes, and composition.
In each case, we will also derive conditions
under which the Galois relator
aligns with the context-dependent notion
\makebox{of ``similarity''}.

For reasons of brevity,
we only show that our framework is robust
under Galois equivalences on preorders and PERs here.
The results for Galois connections (and proof sketches)
can be found in \appendixref{sec:appendixclosuredepfunrel}.

\subsection{(Dependent) Function Relator}\label{sec:closuredepfunrel}

In the field of abstract interpretation,
it is well-known that Galois connections,
as usually defined in the literature,
are closed under the non-dependent, monotone function relator
(see for example~\cite{abstractint1}).
We generalise this result to partial Galois connections
and to dependent function relators.
\begin{remark}
The relations and functions
we use are often non-dependent in practice.
The following definitions
and theorems are considerably simpler in this case.
The reader hence might find instructive
to first consult the results for this special case
in~\appendixref{sec:appendixclosurefunrel}.
\end{remark}
\paragraph{The Setup}
In \cref{sec:introduction},
we highlighted the need of inter-argument dependencies
when transporting functions.
For example, we may only transport the index operator
$\inflistidx\holhasty \listty{\alphaty}\purefun\natty\purefun\alphaty$
if a given index is not out of bounds for a given list.
We can realise such dependencies with the help of the
dependent function relator from \cref{sec:funrelmono}.
For this, we fix the following variables:
\begin{align*}
\begin{split}
L_1 &\holhasty \alphaty_1 \purefun \alphaty_1 \purefun \bool,\\
R_1 &\holhasty \alphaty_2 \purefun \alphaty_2 \purefun \bool,\\
L_2 &\holhasty \alphaty_1 \purefun \alphaty_1 \purefun \betaty_1 \purefun \betaty_1 \purefun \bool,\\
R_2 &\holhasty \alphaty_2 \purefun \alphaty_2 \purefun \betaty_2 \purefun \betaty_2 \purefun \bool,
\end{split}
\begin{split}
l_1 &\holhasty \alphaty_1 \purefun \alphaty_2,\\
r_1 &\holhasty \alphaty_2 \purefun \alphaty_1,\\
l_2 &\holhasty \alphaty_2 \purefun \alphaty_1 \purefun \betaty_1 \purefun \betaty_2,\\
r_2 &\holhasty \alphaty_1 \purefun \alphaty_2 \purefun \betaty_2 \purefun \betaty_1.
\end{split}
\end{align*}
Each variable $L_2,R_2,l_2,r_2$ takes parameters from $\alphaty_1,\alphaty_2$.
These parameters enable the expression of inter-argument dependencies
(cf.~\cref{sec:examples}, \cref{ex:transp_dep_fun_rel}).
We hence call $L_2,R_2,l_2,r_2$ the \emph{dependent variables}.
% For ease of reading,
% we will write $\depfunleft{x'}{x}\define l_2\app x'\app x$
% and
% \makebox{$\depfunright{x}{x'}\define r_2\app x\app x'$}.
Intuitively, we are in a situation where
\begin{enumerate}
\item we are given an equivalence between $\inflerel{L_1}$ and $\inflerel{R_1}$,
using $l_1$ and $r_1$,
\item whenever $x \galrel{L_1} x'$,
  we are given an equivalence between $\inflerel{L_2\app x\app (r_1\app x')}$ and $\inflerel{R_1\app (l_1\app x)\app x'}$,
using the transport functions $\depfunleft{x'}{x}$ and $\depfunright{x}{x'}$, and
\item we want to construct an equivalence for functions between\\
\makebox{$\parenths[\big]{\monodepfunrel{x_1}{x_2}{\inflerel{L_1}}{\inflerel{L_2\app x_1\app x_2}}}$}
and
\makebox{$\parenths[\big]{\monodepfunrel{x_1'}{x_2'}{\inflerel{R_1}}{\inflerel{R_2\app x_1'\app x_2'}}}$}.
\end{enumerate}
To define suitable transport functions,
we use the \emph{dependent function mapper}:
\begin{equation}
\parenths[\big]{\depfunmap{x}{f}{g}}\app h\app x \define g\app (f\app x)\app (h\app (f\app x)),
\end{equation}
where $x$ may occur freely in $g$.
We can now define the target relations and transport functions:
\begin{align}\label{eq:closuredepfunreldef}
\begin{split}
L &\define
\parenths[\big]{\monodepfunrel{x_1}{x_2}{\inflerel{L_1}}{\inflerel{L_2\app x_1\app x_2}}},\\
R &\define
\parenths[\big]{\monodepfunrel{x_1'}{x_2'}{\inflerel{R_1}}{\inflerel{R_2\app x_1'\app x_2'}}},
\end{split}
\begin{split}
l &\define
\parenths[\big]{\depfunmap{x'}{r_1}{l_2\app x'}},\\
r &\define
\parenths[\big]{\depfunmap{x}{l_1}{r_2\app x}}.
\end{split}
\end{align}
In particular,
$l\app f\app x' = \depfunleft{x'}{(r_1\app x')}\app \parenths[\big]{f\app (r_1\app x')}$
and
$r\app g\app x = \depfunright{x}{(l_1\app x)}\app \parenths[\big]{g\app (l_1\app x)}$.

\paragraph{Closure Theorems}
Checking the closure of order-theoretic concepts,
such as reflexivity, transitivity, and symmetry,
is fairly straightforward.
% We point the reader to our formalisation for details.
Verifying the closure of Galois connections and equivalences, however,
is nuanced,
requiring careful alignment of the dependent variables' parameters.
% These alignments require $L_2,R_2,l_2,r_2$ to be monotone with respect
% to their parameters.
These alignments require the following \emph{monotonicity conditions},
which, broadly speaking, say that
\begin{enumerate*}[label=(\arabic*)]
\item $L_2,R_2$ are antimonotone in their first and monotone in their second parameter, and
\item $l_2, r_2$ are monotone in both parameters:
\end{enumerate*}
\begin{monocondition}
\item\label{asm:depfunrel_monoleft2_galequiv}%
% $\parenths[\big]{\depmono{x_1}{\_}{\infgerel{L_1}}{\depfunrelrest{x_3}{\_}{\inflerel{L_1}}{x_1\lerel{L_1}x_3}{(\le)}}}\app L_2$,
If $x_1 \lerel{L_1} x_2 \lerel{L_1} x_3 \lerel{L_1} x_4$ then $\inflerel{L_2\app x_2\app x_3} \le \inflerel{L_2\app x_1\app x_4}$.
\item\label{asm:depfunrel_monoright2_galequiv}%
% $\parenths[\big]{\depmono{x_1'}{\_}{\infgerel{R_1}}{\depfunrelrest{x_3'}{\_}{\inflerel{R_1}}{x_1'\lerel{R_1}x_3'}{(\le)}}}\app R_2$,
If $x_1' \lerel{R_1} x_2' \lerel{R_1} x_3' \lerel{R_1} x_4'$ then $\inflerel{R_2\app x_2'\app x_3'} \le \inflerel{R_2\app x_1'\app x_4'}$.
\item\label{asm:depfunrel_monol2_galequiv}%
% $\bigl(\depmono{x_1'}{x_2'}{\inflerel{R_1}}{\depfunrelrest{x_1}{x_2}{\inflerel{L_1}}{x_2\galrel{L_1}x_1'}{}}$\\
% \null\quad$\funrelpred{\infield\app\inflerel{L_2\app x_1\app (r_1\app x_2')}}{\inflerel{R_2\app (l_1\app x_1)\app x_2'}}\bigr)\app l_2$,
If $x_1 \lerel{L_1} x_2 \galrel{L_1} x_1' \lerel{R_1} x_2'$ and $\infield\app\inflerel{L_2\app x_1\app (r_1\app x_2')}\app y$ then\\
$\parenths[\big]{\depfunleft{x_1'}{x_1}\app y} \lerel{R_2\app (l_1\app x_1)\app x_2'} \parenths[\big]{\depfunleft{x_2'}{x_2}\app y}$.
\item\label{asm:depfunrel_monor2_galequiv}%
% $\big(\depmono{x_1}{x_2}{\inflerel{L_1}}{\depfunrelrest{x_1'}{x_2'}{\inflerel{R_1}}{x_2\galrel{L_1}x_1'}{}}$\\
% \null\quad$\funrelpred{\infield\app\inflerel{R_2\app (l_1\app x_1)\app x_2'}}{\inflerel{L_2\app x_1\app (r_1\app x_2')}}\bigr)\app r_2$.
If $x_1 \lerel{L_1} x_2 \galrel{L_1} x_1' \lerel{R_1} x_2'$ and $\infield\app\inflerel{R_2\app (l_1\app x_1)\app x_2'}\app y'$ then\\
$\parenths[\big]{\depfunright{x_1}{x_1'}\app y'} \lerel{L_2\app x_1\app (r_1\app x_2')} \parenths[\big]{\depfunright{x_2}{x_2'}\app y'}$.
\end{monocondition}
We are now ready to state our main result
for Galois equivalences on preorders and PERs.
The result for Galois connections (and a proof sketch)
can be found in \appendixref{sec:appendixclosuredepfunrel}.
All other results can be found in our formalisation.
\begin{theorem}\label{thm:depfunrel_galequiv}
Let $\star \in\{\preequivsym,\perequivsym\}$
and assume
\begin{assumes}(1)
\item$\parenths[\big]{\inflerel{L_1}\star\inflerel{R_1}}\app l\app r$,
\item\label{asm:depfunrel_galequiv2_galequiv}if $\app x \galrel{L_1} x'$ then $\parenths[\big]{\inflerel{L_2\app x\app (r_1\app x')}\star\inflerel{R_2\app (l_1\app x)\app x'}}\app (\depfunleft{x'}{x})\app (\depfunright{x}{x'})$,
\item \cref{asm:depfunrel_monoleft2_galequiv,asm:depfunrel_monoright2_galequiv,asm:depfunrel_monol2_galequiv,asm:depfunrel_monor2_galequiv}.
\end{assumes}
Then
\makebox{$\parenths[\big]{\inflerel{L}\star\inflerel{R}}\app l\app r$}.
\end{theorem}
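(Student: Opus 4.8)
The plan is to unfold the goal $\parenths[\big]{\inflerel{L}\star\inflerel{R}}\app l\app r$ into its atomic obligations and discharge each separately, exploiting the modular decomposition of Galois equivalences. Concretely, $\star$ bundles (a) that $\inflerel{L}$ and $\inflerel{R}$ are preorders (resp.\ PERs) on their domains, (b) $\parenths{\mono{\inflerel{L}}{\inflerel{R}}}\app l$ and $\parenths{\mono{\inflerel{R}}{\inflerel{L}}}\app r$, and (c) the two Galois properties $\galp{\inflerel{L}}{\inflerel{R}}{l}{r}$ and $\galp{\inflerel{R}}{\inflerel{L}}{r}{l}$. The hypotheses are self-dual under interchanging the two sides (swapping $L_i\leftrightarrow R_i$, $l_i\leftrightarrow r_i$, and the pairs \cref{asm:depfunrel_monoleft2_galequiv,asm:depfunrel_monol2_galequiv} with \cref{asm:depfunrel_monoright2_galequiv,asm:depfunrel_monor2_galequiv}), so each obligation and its mirror image can be handled by one argument.

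First I would settle the order properties. These reduce to closure of reflexivity, transitivity, and---in the PER case---symmetry under the monotone dependent function relator. Reflexivity on the relevant field is immediate, since the relator already bakes monotonicity into membership; the only genuine work is transitivity, where chaining $f\lerel{L}g\lerel{L}h$ at arguments $x_1\lerel{L_1}x_2$ forces the intermediate value through an intermediate $\inflerel{L_1}$-point and thus relates it in a shifted instance of $L_2$. This mismatch is absorbed exactly by the antitone/monotone condition \cref{asm:depfunrel_monoleft2_galequiv} (and \cref{asm:depfunrel_monoright2_galequiv} for $\inflerel{R}$), while the monotonicity carried by $f,g,h$ guarantees the needed intermediate points exist.

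Next I would unfold the transport functions through the identities $l\app f\app x' = \depfunleft{x'}{(r_1\app x')}\app\parenths[\big]{f\app(r_1\app x')}$ and $r\app g\app x = \depfunright{x}{(l_1\app x)}\app\parenths[\big]{g\app(l_1\app x)}$, turning every remaining obligation into a pointwise statement over $\inflerel{L_2}$- and $\inflerel{R_2}$-arguments. For monotonicity of $r$, an input $g_1\lerel{R}g_2$ must be pushed to $r\app g_1\lerel{L}r\app g_2$ at arguments $x_1\lerel{L_1}x_2$: I would translate along the monotone $l_1$ from hypothesis~(1), apply the pointwise monotonicity of the dependent transport $\depfunright{x}{x'}$ supplied by \cref{asm:depfunrel_galequiv2_galequiv}, and realign the result's parameters with \cref{asm:depfunrel_monor2_galequiv}; monotonicity of $l$ is the mirror image via \cref{asm:depfunrel_monol2_galequiv}.

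The core, and the main obstacle, is the Galois property $\galp{\inflerel{L}}{\inflerel{R}}{l}{r}$: assuming $\indom\app\inflerel{L}\app f$ and $\incodom\app\inflerel{R}\app g$, prove $f\lerel{L}r\app g\holiff l\app f\lerel{R}g$. After unfolding, each side becomes a family of relations, but the two families live over different parameter instances: \cref{asm:depfunrel_galequiv2_galequiv} supplies dependent equivalences only on the \emph{diagonal} $\inflerel{L_2\app x\app(r_1\app x')}$ versus $\inflerel{R_2\app(l_1\app x)\app x'}$ for $x\galrel{L_1}x'$, whereas the goal ranges over general \emph{off-diagonal} arguments $\inflerel{L_2\app x_1\app x_2}$ and $\inflerel{R_2\app x_1'\app x_2'}$. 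The delicate step is, from arguments $x_1\lerel{L_1}x_2$ (resp.\ $x_1'\lerel{R_1}x_2'$), to build a chain $x_1\lerel{L_1}x_2\galrel{L_1}x_1'\lerel{R_1}x_2'$ using hypothesis~(1) to cross between the sides via $l_1,r_1$, instantiate the diagonal Galois property there, and then transport from the diagonal to the required arguments: \cref{asm:depfunrel_monol2_galequiv,asm:depfunrel_monor2_galequiv} move the dependent \emph{values} (after discharging their field-membership side conditions), while \cref{asm:depfunrel_monoleft2_galequiv,asm:depfunrel_monoright2_galequiv} widen the \emph{parameter} relations to the ones actually demanded. Making this quadruple alignment close is where essentially all the difficulty lies; the backward direction of the iff and the flipped property $\galp{\inflerel{R}}{\inflerel{L}}{r}{l}$ then follow by the dual argument.
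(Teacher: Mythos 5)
Your core plan matches the paper's actual route: the paper derives this theorem from a more general Galois-connection closure result (\cref{thm:depfunrel_galc}), and the proof given there for the half Galois property is precisely the chain-and-realignment argument you describe --- start from reflexivity of $\inflerel{R_1}$, push through the monotone $r_1$, realign the parameters of $r_2$ via \cref{asm:depfunrel_monor2_galequiv}, apply the \emph{diagonal} equivalence of assumption~(2), realign the parameters of $R_2$ via \cref{asm:depfunrel_monoright2_galequiv}, and stitch with transitivity. Your treatment of the order properties (reflexivity for free from the monotone relator, transitivity absorbed by \cref{asm:depfunrel_monoleft2_galequiv,asm:depfunrel_monoright2_galequiv}) and of the monotonicity of $l,r$ is likewise how the paper proceeds, and you correctly isolate the diagonal-versus-off-diagonal parameter alignment as the heart of the matter.

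The genuine gap is your duality claim. The hypotheses are \emph{not} self-dual under swapping $L_i \leftrightarrow R_i$, $l_i \leftrightarrow r_i$. First, assumption~(2) is gated by $x \galrel{L_1} x'$, i.e.\ $\incodom\app\inflerel{R_1}\app x' \land x \lerel{L_1} r_1\app x'$; the mirrored hypothesis would be gated by $\incodom\app\inflerel{L_1}\app x \land x' \lerel{R_1} l_1\app x$ (the relator $\galrelconst\app\inflerel{R_1}\app\inflerel{L_1}\app l_1$), and without symmetry of $\inflerel{L_1},\inflerel{R_1}$ these select different pairs $(x,x')$ and deliver equivalences at different parameter instances ($\inflerel{L_2\app x\app (r_1\app x')}$ and $\inflerel{R_2\app (l_1\app x)\app x'}$ rather than the needed $\inflerel{R_2\app x'\app (l_1\app x)}$ and $\inflerel{L_2\app (r_1\app x')\app x}$). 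Second, \cref{asm:depfunrel_monol2_galequiv,asm:depfunrel_monor2_galequiv} are \emph{not} mirror images of one another: both are stated along the same left-to-right chain $x_1 \lerel{L_1} x_2 \galrel{L_1} x_1' \lerel{R_1} x_2'$, whereas the mirror of \cref{asm:depfunrel_monol2_galequiv} would involve a right-to-left crossing and the transposed parameter pattern. So the swap does not map the hypothesis set to itself, and the flipped connection $\galcapp{\inflerel{R}}{\inflerel{L}}{r}{l}$ and the backward implication of the Galois property cannot be obtained by re-instantiating your forward argument. They need their own (technically analogous) proofs, invoking assumption~(2) at unit/counit-style instances such as $x \galrel{L_1} l_1\app x$ and $r_1\app x' \galrel{L_1} x'$ --- which are available precisely because the given equivalences are two-sided --- with the four monotonicity conditions used in crossed combinations. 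Only in the $\perequivsym$ case, where symmetry on the fields makes the two gating conditions coincide, does your shortcut become essentially literal; for the $\preequivsym$ half of the statement it fails as written.
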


\paragraph{``Similarity''}
Given the closure theorem,
we can readily transport a function $f$ from
$\inflerel{L}$
to a function $g$ in
$\inflerel{R}$.
Due to \cref{expect:mono,expect:reltrans},
we also know that $f \galrel{L} g$,
that is
$\parenths[\big]{\monodepfunrel{x_1}{x_2}{\inflerel{L_1}}{\inflerel{L_2\app x_1\app x_2}}}\app f\app (r\app g)$
and
\makebox{$\parenths[\big]{\monodepfunrel{x_1'}{x_2'}{\inflerel{R_1}}{\inflerel{R_2\app x_1'\app x_2'}}}\app (l\app f)\app r$}.
But arguably, this is not quite enough:

Remember the slogan of the function relator:
``related functions map related inputs to related outputs''.
We know how to relate terms between
$\inflerel{L_1}$ and
$\inflerel{R_1}$:
we can use $\infgalrel{L_1}$.
Whenever $x \galrel{L_1} x'$,
we also know how to relate terms between
$\inflerel{L_2\app x\app (r_1\app x')}$ and
$\inflerel{R_2\app (l_1 \app x)\app x'}$:
we can use
\begin{equation}
\infgalrel{L_2\app x\app x'}\define \galrelconst\app \inflerel{L_2\app x\app (r_1\app x')}\app\inflerel{R_2\app (l_1\app x)\app x'}\app (\depfunright{x}{x'}).
\end{equation}
So when we say that ``$f$ and $g$ are similar'',
we may actually desire that
\begin{equation}
\parenths[\big]{\depfunrel[\big]{x}{x'}{\infgalrel{L_1}}{\infgalrel{L_2\app x\app x'}}}\app f\app g.
\end{equation}
The following theorem answers when
$\infgalrel{L}$ aligns with this definition of similarity for preordered Galois equivalences.
Preciser results can be found in
\appendixref{sec:appendixclosuredepfunrel} and the formalisation.
\begin{theorem}\label{thm:depfunrel_galreleq}
Assume
\begin{assumes}
\item$\preequiv[\big]{\inflerel{L_1}}{\inflerel{R_1}}{l_1}{r_1}$,
\item if $\app x \galrel{L_1} x'$ then $\preequiv[\big]{\inflerel{L_2\app x\app (r_1\app x')}}{\inflerel{R_2\app (l_1\app x)\app x'}}{(\depfunleft{x'}{x})}{(\depfunright{x}{x'})}$,
\item \cref{asm:depfunrel_monoleft2_galequiv,asm:depfunrel_monor2_galequiv},
\item $\indom\app \inflerel{L}\app f$, and $\incodom\app \inflerel{R}\app g$.
\end{assumes}
Then
$f \galrel{L} g
\holiff
\parenths[\big]{\depfunrel{x}{x'}{\infgalrel{L_1}}{\infgalrel{L_2\app x\app x'}}}f\app g$.
\end{theorem}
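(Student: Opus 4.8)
The plan is to prove both directions by unfolding the top-level Galois relator and the monotone dependent function relator, reducing everything to pointwise statements over the fibres. Unfolding gives $f \galrel{L} g \holiff \incodom\app\inflerel{R}\app g \land f \lerel{L} r\app g$, and since $\incodom\app\inflerel{R}\app g$ is assumed, the left-hand side reduces to $f \lerel{L} r\app g$, i.e.\ the monotone relator $\parenths[\big]{\monodepfunrel{x_1}{x_2}{\inflerel{L_1}}{\inflerel{L_2\app x_1\app x_2}}}\app f\app (r\app g)$. From $\indom\app\inflerel{L}\app f$ and $\incodom\app\inflerel{R}\app g$ I extract monotonicity of $f$ and $g$, i.e.\ $\parenths[\big]{\depmono{x_1}{x_2}{\inflerel{L_1}}{\inflerel{L_2\app x_1\app x_2}}}\app f$ and $\parenths[\big]{\depmono{x_1'}{x_2'}{\inflerel{R_1}}{\inflerel{R_2\app x_1'\app x_2'}}}\app g$; together with monotonicity of $l_1$, the fibre-wise Galois equivalences (the theorem's second assumption), and \cref{asm:depfunrel_monor2_galequiv}, these also yield that $r\app g$ is monotone. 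Hence the monotone-relator part collapses to its plain dependent-function-relator component, and the goal becomes an equivalence between $\parenths[\big]{\depfunrel{x_1}{x_2}{\inflerel{L_1}}{\inflerel{L_2\app x_1\app x_2}}}\app f\app (r\app g)$ and the pointwise statement $\forall x\app x'.\app x \galrel{L_1} x' \holimplies (f\app x) \galrel{L_2\app x\app x'} (g\app x')$.

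For the direction from the pointwise statement to $f \lerel{L} r\app g$ I fix $x_1 \lerel{L_1} x_2$ and set $x' \define l_1\app x_2$. I first check $x_1 \galrel{L_1} x'$: the codomain condition holds since $l_1$ maps $\indom\app\inflerel{L_1}$ into $\incodom\app\inflerel{R_1}$, and $x_1 \lerel{L_1} r_1\app(l_1\app x_2)$ follows from $x_1 \lerel{L_1} x_2$, the unit $x_2 \lerel{L_1} r_1\app(l_1\app x_2)$, and transitivity. The pointwise hypothesis then gives $(f\app x_1) \lerel{L_2\app x_1\app (r_1\app(l_1\app x_2))} r_2\app x_1\app (l_1\app x_2)\app\parenths[\big]{g\app(l_1\app x_2)}$, whereas the target is $(f\app x_1) \lerel{L_2\app x_1\app x_2} (r\app g)\app x_2 = r_2\app x_2\app (l_1\app x_2)\app\parenths[\big]{g\app(l_1\app x_2)}$. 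The key observation is that the Galois equivalence makes $x_2$ and $r_1\app(l_1\app x_2)$ mutually $\inflerel{L_1}$-related, so \cref{asm:depfunrel_monoleft2_galequiv} applied in both directions yields the equality $\inflerel{L_2\app x_1\app x_2} = \inflerel{L_2\app x_1\app (r_1\app(l_1\app x_2))}$; and \cref{asm:depfunrel_monor2_galequiv}, instantiated along $x_1 \lerel{L_1} x_2$ with second parameter $l_1\app x_2$, corrects the first argument of $r_2$ from $x_1$ to $x_2$. Transitivity of the fibre preorder $\inflerel{L_2\app x_1\app x_2}$ then closes the gap.

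For the converse I assume $f \lerel{L} r\app g$, fix $x \galrel{L_1} x'$ (so $\incodom\app\inflerel{R_1}\app x'$ and $x \lerel{L_1} r_1\app x'$), and must produce $(f\app x) \galrel{L_2\app x\app x'} (g\app x')$. The codomain obligation $\incodom\app\inflerel{R_2\app (l_1\app x)\app x'}\app (g\app x')$ follows from $l_1\app x \lerel{R_1} x'$ (via $x \lerel{L_1} r_1\app x'$, monotonicity of $l_1$, and the counit) together with $g$ being monotone. For the remaining $\inflerel{L_2\app x\app (r_1\app x')}$-inequality I instantiate the dependent function relator at $x \lerel{L_1} r_1\app x'$, obtaining $(f\app x) \lerel{L_2\app x\app (r_1\app x')} (r\app g)\app (r_1\app x')$, and then reconcile the round-trip term $(r\app g)\app (r_1\app x') = r_2\app (r_1\app x')\app (l_1\app(r_1\app x'))\app\parenths[\big]{g\app(l_1\app(r_1\app x'))}$ with $r_2\app x\app x'\app (g\app x')$ using the counit equivalence between $l_1\app(r_1\app x')$ and $x'$, monotonicity of $g$ and of the fibre transport $r_2$, \cref{asm:depfunrel_monor2_galequiv}, and---exactly as in the previous paragraph---the parameter equalities granted by \cref{asm:depfunrel_monoleft2_galequiv} under the equivalence, closing with transitivity of the fibre preorder.

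The main obstacle, pervading both directions, is parameter alignment: transporting with $l$ and $r$ never returns the original points but only their round-trips $l_1\app(r_1\app x')$ and $r_1\app(l_1\app x)$, and the relations $\inflerel{L_2\app \cdot\app \cdot}$ together with the transport $r_2$ depend on these points. The crucial leverage is that we work with a Galois \emph{equivalence} rather than a mere connection: this makes each point and its round-trip mutually $\inflerel{L_1}$- and $\inflerel{R_1}$-related, so the otherwise one-directional \cref{asm:depfunrel_monoleft2_galequiv} can be applied both ways and turns parameter comparisons into genuine equalities of the fibre relations. Orienting the remaining chains so that \cref{asm:depfunrel_monor2_galequiv} and the fibre transports point the right way is the delicate bookkeeping; once the directions line up, the rest is routine unfolding plus transitivity of the preordered fibres.
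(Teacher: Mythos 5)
Your overall skeleton -- unfolding $f \galrel{L} g$ to $f \lerel{L} r\app g$ via the assumed $\incodom\app\inflerel{R}\app g$, extracting monotonicity of $f$ and $g$ from the domain assumptions, and reducing both sides to pointwise statements over the fibres -- is sound, and your first direction works: there the first parameter of $r_2$ is corrected \emph{upwards}, from $x_1$ to $x_2$ along $x_1 \lerel{L_1} x_2$, which is exactly the direction \cref{asm:depfunrel_monor2_galequiv} permits. The genuine gap is in the converse direction. You instantiate the relator at $x \lerel{L_1} r_1\app x'$ and must then reconcile $(r\app g)\app (r_1\app x') = \depfunright{(r_1\app x')}{(l_1\app(r_1\app x'))}\app\parenths[\big]{g\app(l_1\app(r_1\app x'))}$ with $\depfunright{x}{x'}\app(g\app x')$. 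That requires \emph{decreasing} the first parameter of $r_2$ from $r_1\app x'$ down to $x$; but \cref{asm:depfunrel_monor2_galequiv} only moves this parameter upwards along $\lerel{L_1}$, and $x$ and $r_1\app x'$ are \emph{not} mutually related -- your ``key leverage'' applies to a point and its own round-trip $r_1\app(l_1\app x)$, whereas $r_1\app x'$ is merely an upper bound of $x$. The step is not just unprovable from the assumptions; it is false. Counterexample: let $L_1 = R_1 = (\le)$ on $\{0,1\}$ with $l_1 = r_1 = \id$, let every fibre relation $\inflerel{L_2\app x_1\app x_2}$, $\inflerel{R_2\app x_1'\app x_2'}$ be $(\le)$ on $\intty$, and let $r_2\app x\app x'\app v \define v + x$, $l_2\app x'\app x\app u \define u - x$. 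Translations are Galois equivalences of $(\le)$ with itself, \cref{asm:depfunrel_monoleft2_galequiv} is trivial, and \cref{asm:depfunrel_monor2_galequiv} holds because $x_1 \lerel{L_1} x_2$ gives $x_1 \le x_2$. Here $(r\app g)\app(r_1\app x') = g\app x' + x'$ while $\depfunright{x}{x'}\app(g\app x') = g\app x' + x$, so for $x = 0$, $x' = 1$ your intermediate inequality reads $g\app 1 + 1 \le g\app 1$, which fails -- even though the theorem's conclusion $(f\app x) \lerel{L_2\app x\app(r_1\app x')} \depfunright{x}{x'}\app(g\app x')$ is derivable.

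The repair is to instantiate the relator at the \emph{reflexive} pair $x \lerel{L_1} x$ instead: then $(f\app x) \lerel{L_2\app x\app x} (r\app g)\app x$, and $(r\app g)\app x = \depfunright{x}{(l_1\app x)}\app(g\app(l_1\app x))$ already carries $x$ as first parameter. Now \cref{asm:depfunrel_monor2_galequiv}, applied to the chain $x \lerel{L_1} x \galrel{L_1} l_1\app x \lerel{R_1} x'$, raises only the \emph{second} parameter from $l_1\app x$ to $x'$; monotonicity of the fibre transport $\depfunright{x}{x'}$ applied to $g\app(l_1\app x) \lerel{R_2\app(l_1\app x)\app x'} g\app x'$ fixes the argument; \cref{asm:depfunrel_monoleft2_galequiv} yields $\inflerel{L_2\app x\app x} \le \inflerel{L_2\app x\app(r_1\app x')}$; and transitivity of the fibre preorder closes the chain. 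Note that the paper gives no textual proof of this theorem (it defers to the formalisation via the appendix generalisation, whose hypotheses likewise offer no way to lower $r_2$'s first parameter), but its appendix closure proofs consistently open with exactly this ``start from reflexivity'' move, which is the idea your converse direction is missing.
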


\subsection{(Co)datatypes}

Different proof assistants ground (co)datatypes in
different ways.
For instance, Coq and Lean introduce them axiomatically,
whereas Isabelle/HOL proves their existence
using the theory of \emph{bounded natural functors}~\cite{bnf}.
As our formalisation takes place in Isabelle/HOL,
we use the latter theory.
Nonetheless,
the results presented in this section
are relatively straightforward
and can likely be adapted to
other ``reasonable'' definitions of (co)datatypes.

In this section, we derive closure properties
for arbitrary \emph{natural functors}.
A natural functor is a bounded natural functor without cardinality constraints.
% \footnote{An example of a natural functor that is not bounded is the type of sets, that is $\setty{\alphaty}$.}
The exact axioms
% are not particularly interesting for our purposes and
can be found elsewhere~\cite{bnf}.
For our purposes,
it suffices to say that natural functors are equipped with a \emph{mapper} and a \emph{relator}.
More precisely, for every $n$-ary natural functor $(\nargsc{\alphaty}{n})\app F$,
there are two functions:
\begin{align*}
  \natfuncmap{F} \holhasty\; &(\alphaty_1\purefun\betaty_1)\purefun\dotsb\purefun(\alphaty_n\purefun\betaty_n)\purefun(\nargsc{\alphaty}{n})\app F\purefun (\nargsc{\betaty}{n})\app F\\
  \natfuncrel{F} \holhasty\; &(\relty{\alphaty_1}{\betaty_1})\purefun\dotsb\purefun(\relty{\alphaty_n}{\betaty_n})\purefun{}\\
&\relty{(\nargsc{\alphaty}{n})\app F}{(\nargsc{\betaty}{n})\app F}
\end{align*}
The former lifts functions on the functor's type arguments
to the functorial structure,
the latter lifts relations on the functor's type arguments
to the functorial structure.
Using the mapper and relator,
it is straightforward
to define appropriate target relations and
transport functions.
% to lift our concepts to functors.
% For this,
First we fix the following variables for $1\leq i\leq n$:
\begin{displaymath}
% \begin{align*}
% \begin{split}
L_i \holhasty \alphaty_i \purefun \alphaty_i \purefun \bool,\quad
l_i \holhasty \alphaty_i \purefun \betaty_i,\quad
R_i \holhasty \betaty_i \purefun \betaty_i \purefun \bool,\quad
% \end{split}
% \begin{split}
r_i \holhasty \betaty_i \purefun \alphaty_i.
% \end{split}
% \end{align*}
\end{displaymath}
Then we define the new target relations and transport functions as follows:
\begin{align}
\begin{split}
L &\define
\natfuncrel{F}\app\nargsinf{L}{n},\\
R &\define \natfuncrel{F}\app\nargsinf{R}{n},
\end{split}
\begin{split}
l &\define \natfuncmap{F}\app\nargs{l}{n},\\
r &\define \natfuncmap{F}\app\nargs{r}{n}.
\end{split}
\end{align}
The closure properties follow without any difficulty:
\begin{theorem}
Let
% $\star \in\{\galcsym,\galequivsym,\orderequivsym,\preequivsym,\perequivsym\}$
$\star \in\{\galcsym,\galequivsym,\preequivsym,\perequivsym\}$
and assume $\parenths[\big]{\inflerel{L_i}\star\inflerel{R_i}}\app l_i\app r_i$ for $1\leq i\leq n$.
Then \makebox{$\parenths[\big]{\inflerel{L}\star\inflerel{R}}\app l\app r$}.
\end{theorem}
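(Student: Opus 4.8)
The plan is to reduce the four cases to the single case of a (partial) Galois connection together with the relevant order properties, and then to discharge each property from the defining laws of the natural functor's mapper $\natfuncmap{F}$ and relator $\natfuncrel{F}$~\cite{bnf}. Since $\galequiv[\big]{\inflerel{L}}{\inflerel{R}}{l}{r}$ unfolds to the two Galois connections $\galcapp[\big]{\inflerel{L}}{\inflerel{R}}{l}{r}$ and $\galcapp[\big]{\inflerel{R}}{\inflerel{L}}{r}{l}$, and the latter is precisely the present construction applied to the swapped components $(R_i,L_i,r_i,l_i)$, it suffices to prove closure for $\star=\galcsym$; the $\galequivsym$ case then follows by applying that result twice. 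The $\preequivsym$ and $\perequivsym$ cases add that $\inflerel{L},\inflerel{R}$ are preorders (resp.\ PERs) on their domains, so beyond the Galois connection I would separately establish closure of reflexivity, transitivity, and symmetry on the domain.

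For the Galois connection itself I first collect the natural-functor laws I rely on: $\natfuncrel{F}$ preserves equality ($\natfuncrel{F}\app(=)\app\dotso\app(=) = (=)$), converse, and relational composition, is monotone in each of its relation arguments, and satisfies the two \emph{map--rel} laws that let me absorb a mapper into the relator. The two monotonicity goals $\parenths[\big]{\mono{\inflerel{L}}{\inflerel{R}}}\app l$ and $\parenths[\big]{\mono{\inflerel{R}}{\inflerel{L}}}\app r$ then follow immediately from the componentwise monotonicity of the $l_i,r_i$ together with monotonicity of $\natfuncrel{F}$ and the map--rel laws.

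The heart of the argument is the Galois property. Using the map--rel laws I rewrite $x\lerel{L} r\app y$, i.e.\ $\natfuncrel{F}\app\nargsinf{L}{n}\app x\app(\natfuncmap{F}\app\nargs{r}{n}\app y)$, as $\natfuncrel{F}\app(\lambda a\app b.\app L_1\app a\app(r_1\app b))\app\dotso\app x\app y$, and dually $l\app x\lerel{R} y$ as $\natfuncrel{F}\app(\lambda a\app b.\app R_1\app(l_1\app a)\app b)\app\dotso\app x\app y$. The componentwise Galois property asserts exactly that, on the domain of $L_i$ crossed with the codomain of $R_i$, the pointwise relations $\lambda a\app b.\app L_i\app a\app(r_i\app b)$ and $\lambda a\app b.\app R_i\app(l_i\app a)\app b$ coincide. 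Hence, once the components of $x$ lie in the domain of $L_i$ and those of $y$ in the codomain of $R_i$, the congruence law for $\natfuncrel{F}$ (its value depends on the relation arguments only through the components selected by the set functions) turns the two relator applications into one another, yielding the desired biconditional.

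The step requiring the most care, and the one I expect to be the main obstacle, is propagating the guards $\indom\app\inflerel{L}\app x$ and $\incodom\app\inflerel{R}\app y$ down to the components. For this I would invoke the standard characterization of a relator's field through the functor's set functions: $x$ lies in the domain of $\natfuncrel{F}\app\nargsinf{L}{n}$ exactly when, for each $i$, the set of $i$-th components of $x$ is contained in the domain of $L_i$, and dually for $\incodom\app\inflerel{R}\app y$. With this, the global guards force every component into the region where the componentwise Galois property applies, legitimising the congruence step above. The same set-function characterization lets the order properties descend: reflexivity on the domain from $\natfuncrel{F}$ preserving equality, transitivity from its commuting with composition, and symmetry from its commuting with converse, each relativised to the field. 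Apart from this domain bookkeeping, every step is a direct instance of the natural-functor laws, which is why the closure holds ``without any difficulty''.
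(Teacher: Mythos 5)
Your proposal is correct and takes essentially the approach the paper intends: the paper prints no proof of this theorem (it defers to the Isabelle formalisation with the remark that closure ``follows without any difficulty''), and your argument---reducing all four cases to the Galois-connection case plus field-relativised order properties, then discharging the Galois property via the map--rel absorption laws, the relator congruence rule, and the set-function characterisation of domains and codomains---is exactly the routine natural-functor argument being alluded to. Two cosmetic points only: reflexivity on the field needs the relativised (componentwise) reflexivity rule rather than preservation of equality per se, and only the forward direction of your ``exactly when'' domain characterisation is actually used; neither affects correctness.
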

As in the previous section,
we can ponder whether the relation $\infgalrel{L}$
adequately captures our desired notion of ``similarity''.
Again, we already know how to relate terms between
$\inflerel{L_i}$ and $\inflerel{R_i}$ for $1\leq i\leq n$:
we can use $\infgalrel{L_i}$.
We also know how to relate two functors:
we can use $\natfuncrel{F}$.
We thus may desire that ``$t$ and $t'$ are similar'' when
$\natfuncrel{F}\app \nargsinfgalrel{L}{n}\app t\app t'$.
It is easy to show that $\infgalrel{L}$ aligns with this desire:
\begin{theorem}\label{thm:natfunc_galreleq}
$\infgalrel{L} = \natfuncrel{F}\app \nargsinfgalrel{L}{n}$.
\end{theorem}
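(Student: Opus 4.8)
The plan is to prove the stated equality of relations pointwise, reducing $\natfuncrel{F}\app\nargsinfgalrel{L}{n}$ to the two defining conjuncts of the Galois relator with the help of the natural-functor laws governing the interaction of the relator with the mapper and with (co)domain restriction.

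First I would unfold both sides. By definition the goal is, for all $x,y$,
\[
\natfuncrel{F}\app\nargsinfgalrel{L}{n}\app x\app y
\holiff
\incodom\app(\natfuncrel{F}\app\nargsinf{R}{n})\app y\land \natfuncrel{F}\app\nargsinf{L}{n}\app x\app(\natfuncmap{F}\app\nargs{r}{n}\app y),
\]
where I have already substituted $L = \natfuncrel{F}\app\nargsinf{L}{n}$, $R = \natfuncrel{F}\app\nargsinf{R}{n}$, and $r = \natfuncmap{F}\app\nargs{r}{n}$, and used that $\galrelconst\app\inflerel{L}\app\inflerel{R}\app r\app x\app y$ equals $\incodom\app\inflerel{R}\app y\land x\lerel{L}r\app y$. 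Componentwise, $\infgalrel{L_i}\app a\app b$ is precisely $\incodom\app\inflerel{R_i}\app b\land a\lerel{L_i}r_i\app b$, so the left-hand relation is the functor relator applied to the $n$ relations $\lambda a\app b.\ \incodom\app\inflerel{R_i}\app b\land a\lerel{L_i}r_i\app b$.

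The substantive work is to separate these two conjuncts through the relator. For the ``$L$-part'' I would use the relator-after-mapper law $\natfuncrel{F}\app\nargsinf{L}{n}\app x\app(\natfuncmap{F}\app\nargs{r}{n}\app y) = \natfuncrel{F}\app(\lambda a\app b.\ a\lerel{L_1}r_1\app b)\app\dotso\app(\lambda a\app b.\ a\lerel{L_n}r_n\app b)\app x\app y$, which absorbs each $r_i$ into the $i$-th relation; together with monotonicity of the relator and its characterisation as elementwise relatedness, this isolates the $L$-conjunct. For the codomain conjunct I would use the set-naturality laws to reduce the functorial codomain predicate to a componentwise one, namely
\[
\incodom\app(\natfuncrel{F}\app\nargsinf{R}{n})\app y
\holiff
\forall i.\ \text{every element of the $i$-th component of } y \text{ lies in } \incodom\app\inflerel{R_i}.
\]

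I expect the main obstacle to be exactly this codomain reduction together with its recombination with the $L$-part. The forward direction of the codomain equivalence is immediate from monotonicity, but the backward direction requires choosing, for each element $b$ in the $i$-th component of $y$, a witness $a$ with $\inflerel{R_i}\app a\app b$ and assembling these choices into a single structure that maps onto $y$; this is delivered by the set-naturality of $\natfuncrel{F}$ and its characterisation as elementwise relatedness. The very same assembly step is what fuses the two conjuncts back into one application of the relator when proving the non-trivial implication of the pointwise biconditional. Once both directions are in place, the biconditional collapses to the claimed identity $\infgalrel{L} = \natfuncrel{F}\app\nargsinfgalrel{L}{n}$.
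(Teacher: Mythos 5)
Your proof is correct: unfolding the Galois relator pointwise, absorbing the $r_i$ via the relator--mapper law, reducing $\incodom\app\inflerel{R}$ to the componentwise codomain condition (with choice plus set-naturality supplying the witness structure in the backward direction), and fusing the two conjuncts through the elementwise characterisation of $\natfuncrel{F}$ is exactly the right argument, and you correctly locate the only non-trivial step in the choice-based assembly. The paper gives no in-text proof of \cref{thm:natfunc_galreleq} (it defers to the Isabelle/HOL formalisation), but your route is precisely the standard BNF-style argument that the formalisation's tactic scripts carry out, so this counts as essentially the same approach.
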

Proof details for this section can be found in our formalisation.
The formalisation includes tactic scripts that
are applicable to functors of arbitrary arity.
Integrating them into Isabelle/HOL's datatype package
is left as future work.

\subsection{Compositions}\label{sec:closurecomp}

It is well-known that Galois connections,
as defined in the literature,
are closed under composition
in the following sense:
% More precisely,
given Galois connections between $\inflerel{L_1},\inflerel{R_1}$
and $\inflerel{L_2},\inflerel{R_2}$ with $\inflerel{R_1}=\inflerel{L_2}$,
we can build a Galois connection between
$\inflerel{L_1},\inflerel{R_2}$.
This result readily generalises
to our partial setting (see \appendixref{sec:appendixclosurecompcoincide}).
% \footnote{In our formalisation,
% we actually prove a more general result
% where the right and left relations of the input Galois connections
% need not be equal but only need to ``agree whenever required''.
% However, we suspect that such an agreement rarely holds in practice.
% We hence omit it here.}
% \begin{theorem}
% Let $\star \in\{\galcsym,\galequivsym,\orderequivsym,\preequivsym,\perequivsym\}$ and assume
% \begin{assumes}(2)
% \item $\parenths[\big]{\inflerel{L_1}\star\inflerel{R_1}}\app l_1\app r_1$,
% \item $\parenths[\big]{\inflerel{R_1}\star\inflerel{R_2}}\app l_2\app r_2$.
% \end{assumes}
% Then \makebox{$\parenths[\big]{\inflerel{L_1}\star\inflerel{R_2}}\app (l_2\comp l_1)\app (r_1\comp r_2)$}.
% \end{theorem}
However, $\inflerel{R_1}$ and $\inflerel{L_2}$
usually do not coincide in our context.
We need a more general result.
% the right relation
% of the first Galois connection and the left relation
% of the second Galois connection coincide.

\paragraph{The Setup}
Our goal is to define a notion of composition that works
even if
$\inflerel{R_1}$ and $\inflerel{L_2}$ do not coincide.
For this, we fix the variables
\begin{align*}
% \begin{split}
&L_1 \holhasty \alphaty \purefun \alphaty \purefun \bool,\quad
&l_1 \holhasty \alphaty \purefun \betaty,\quad
&R_1 \holhasty \betaty \purefun \betaty \purefun \bool,\quad
&r_1 \holhasty \betaty \purefun \alphaty,\\
% \end{split}
% \begin{split}
&L_2 \holhasty \betaty \purefun \betaty \purefun \bool,\quad
&l_2 \holhasty \betaty \purefun \gammaty,\quad
&R_2 \holhasty \gammaty \purefun \gammaty \purefun \bool,\quad
&r_2 \holhasty \gammaty \purefun \betaty.
% \end{split}
\end{align*}
% We call $\inflerel{L_1}$ the \emph{leftmost relation},
% $\inflerel{R_1},\inflerel{L_2}$ the \emph{middle relations}, and
% $\inflerel{R_2}$ the \emph{rightmost relation}.
Intuitively, we are in a situation where
\begin{enumerate}
\item we are given an equivalence between $\inflerel{L_1}$ and $\inflerel{R_1}$, using
% transport functions
$l_1$ and $r_1$,
\item we are given an equivalence between $\inflerel{L_2}$ and $\inflerel{R_2}$, using
% transport functions
$l_2$ and $r_2$, and
\item we want to construct an equivalence with
transport functions $l_2\comp l_1$ and $r_1\comp r_2$
between those parts of $\inflerel{L_1}$ and $\inflerel{R_2}$
that can be made ``compatible'' with respect to these functions.
This particularly means that we can apply the transport functions
on these parts without leaving the domains of the input equivalences.
\end{enumerate}
The question is: how do we find those parts and how can we make them compatible?
The solution we propose is inspired by and generalises
the approach of Huffman and Kun{\v{c}}ar~\citet{isabellelifting}.
We provide details and intuitions for the constructions in
\appendixref{sec:appendixclosurecomp}.
% Details and intuitions for the constructions
% can be found in \appendixref{sec:appendixclosurecomp}.
% First we define some notation for the ``flipped''
% Galois relators of our input equivalences:
% \begin{displaymath}
 % \infgalrel{R_i}\define\galrelconst\app \inflerel{R_i}\app \inflerel{L_i}\app l_i
% \end{displaymath}
% That is, $y\galrel{R_i}\app x\holiff \incodom \inflerel{L_i}\app x \land y \lerel{R_i} l_i\app y$ for $i\in\{1,2\}$.
% Now
The resulting target relations and transport functions are defined as follows (where $\infgalrel{R_i}\define\galrelconst\app \inflerel{R_i}\app \inflerel{L_i}\app l_i$):
\begin{align}
\begin{split}
L &\define
\infgalrel{L_1} \relcomp \inflerel{L_2} \relcomp \infgalrel{R_1},\\
R &\define
\infgalrel{R_2} \relcomp \inflerel{R_1} \relcomp \infgalrel{L_2},
\end{split}
\begin{split}
l &\define l_2\comp l_1,\\
r &\define r_1\comp r_2.
\end{split}
\end{align}

\paragraph{Closure Theorems}

Again, we only state our main result
for Galois equivalences on preorders and PERs.
Preciser results can be found in \appendixref{sec:appendixclosurecomp} (including a proof sketch)
and in our formalisation.
% \begin{theorem}\label{thm:comp_galc}
% Assume
% \begin{assumes}(2)
% \item\label{asm:comp_galequiv}$\galequiv[\big]{\inflerel{L_i}}{\inflerel{R_i}}{l_i}{r_i}$  for $i\in\{1,2\}$,
% \item $\preorderon\app\infieldapp{\inflerel{R_1}}$,
% \item\label{asm:comp_preorderleft}$\preorderon\app\infieldapp{\inflerel{L_2}}$,
% \item\label{asm:comp_middlecompatcond}$\parenths[\big]{\inflerel{R_1} \relcomp \inflerel{L_2}} = \parenths[\big]{\inflerel{L_2} \relcomp \inflerel{R_1}}$.
% \end{assumes}
% Then
% $\galcapp[\big]{\inflerel{L}}{\inflerel{R}}{l}{r}$.
% \end{theorem}
% \begin{proof}
% See~\cref{sec:appendixclosurecomp}.
% % \input{sections/transport_theory/proof_comp}
% \end{proof}
% Proving the closure properties for our other definitions stays similar in spirit.
% Again, we point to our formalisation for details.
\begin{theorem}\label{thm:comp_galequiv}
Let $\star \in\{\preequivsym,\perequivsym\}$ and assume
\begin{assumes}(2)
\item$\forall i\in\{1,2\}.\;\parenths[\big]{\inflerel{L_i}\star\inflerel{R_i}}\app l_i\app r_i$,
\item $\parenths[\big]{\inflerel{R_1} \relcomp \inflerel{L_2}} = \parenths[\big]{\inflerel{L_2} \relcomp \inflerel{R_1}}$.
\end{assumes}
Then \makebox{$\parenths[\big]{\inflerel{L}\star\inflerel{R}}\app l\app r$}.
\end{theorem}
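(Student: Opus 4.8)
The plan is to verify directly that $(\inflerel{L},\inflerel{R},l,r)$ is a $\star$-equivalence, by unfolding the composite relations $\inflerel{L} = \infgalrel{L_1} \relcomp \inflerel{L_2} \relcomp \infgalrel{R_1}$ and $\inflerel{R} = \infgalrel{R_2} \relcomp \inflerel{R_1} \relcomp \infgalrel{L_2}$ and threading all reasoning through the intermediate type $\betaty$. From the first assumption I would unpack, for each $i \in \{1,2\}$, the full strength of $\parenths[\big]{\inflerel{L_i}\star\inflerel{R_i}}\app l_i\app r_i$: the Galois property in both directions, monotonicity of all of $l_i,r_i$, the unit and counit inequalities $x \lerel{L_i} r_i\app(l_i\app x)$ and $l_i\app(r_i\app y) \lerel{R_i} y$ on the relevant domains, and the fact that $\inflerel{L_i},\inflerel{R_i}$ are preorders (resp.\ PERs) on their domains. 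I would also record the unfoldings of the Galois relators, namely that $x \galrel{L_1} b$ means $\incodom\inflerel{R_1}\app b \land x \lerel{L_1} r_1\app b$ and that $b \galrel{R_1} x$ means $\incodom\inflerel{L_1}\app x \land b \lerel{R_1} l_1\app x$; these are the atoms I will chain together.

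First I would establish the order properties of $\inflerel{L}$ and $\inflerel{R}$. For transitivity of $\inflerel{L}$, suppose $x \galrel{L_1} b_1 \lerel{L_2} b_2 \galrel{R_1} x'$ and $x' \galrel{L_1} b_1' \lerel{L_2} b_2' \galrel{R_1} x''$. The two middle points $b_2,b_1'$ are linked through $x'$ by monotonicity of $l_1$ and the counit of the first equivalence (giving $b_2 \lerel{R_1} l_1\app x' \lerel{R_1} b_1'$, hence $b_2 \lerel{R_1} b_1'$ by transitivity of $\inflerel{R_1}$), leaving the alternating chain $b_1 \lerel{L_2} b_2 \lerel{R_1} b_1' \lerel{L_2} b_2'$ that mixes $\inflerel{L_2}$ and $\inflerel{R_1}$. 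Here the commutativity assumption $\parenths[\big]{\inflerel{R_1} \relcomp \inflerel{L_2}} = \parenths[\big]{\inflerel{L_2} \relcomp \inflerel{R_1}}$ is exactly what lets me normalise this chain into $\inflerel{L_2} \relcomp \inflerel{R_1}$ form, i.e.\ into some $c$ with $b_1 \lerel{L_2} c \lerel{R_1} b_2'$; transitivity of $\inflerel{R_1}$ then absorbs the trailing step into the closing relator via $b_2' \lerel{R_1} l_1\app x''$, yielding $x \galrel{L_1} b_1 \lerel{L_2} c \galrel{R_1} x''$, that is $\inflerel{L}\app x\app x''$. Reflexivity-on-domain is obtained analogously, and in the PER case symmetry follows from symmetry of the four input relations together with the symmetry of the Galois relators on PERs. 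The same argument, mutatis mutandis, handles $\inflerel{R}$.

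Next I would prove the Galois property $\galp[\big]{\inflerel{L}}{\inflerel{R}}{l}{r}$ with $l = l_2\comp l_1$ and $r = r_1\comp r_2$. Fixing $x \in \indom\inflerel{L}$ and $y \in \incodom\inflerel{R}$, I unfold $x \lerel{L} r\app y$ into its $\betaty$-witnesses, convert the $\infgalrel{L_1}$- and $\infgalrel{R_1}$-steps into $\inflerel{R_1}$-steps on the $\betaty$ side via $l_1$ and its monotonicity/counit, feed the result through the Galois property of the second equivalence, and arrive at $l\app x \lerel{R} y$; the converse is symmetric. As in the order-property step, the intermediate data lives partly under $\inflerel{L_2}$ and partly under $\inflerel{R_1}$, and commutativity is what makes the two halves composable. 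Monotonicity $\parenths[\big]{\mono{\inflerel{L}}{\inflerel{R}}}\app l$ and $\parenths[\big]{\mono{\inflerel{R}}{\inflerel{L}}}\app r$ then follows by composing the input monotonicities through the same threading, completing $\galcapp[\big]{\inflerel{L}}{\inflerel{R}}{l}{r}$. Running the whole argument once more with the two sides swapped gives $\galcapp[\big]{\inflerel{R}}{\inflerel{L}}{r}{l}$, hence the full Galois equivalence, and together with the order properties the desired $\parenths[\big]{\inflerel{L}\star\inflerel{R}}\app l\app r$.

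I expect the main obstacle to be the bookkeeping around the intermediate type $\betaty$: every step pushes a fact through $\beta$ under two distinct relations $\inflerel{R_1}$ and $\inflerel{L_2}$ that do not coincide, and keeping track of the $\indom$/$\incodom$ side-conditions hidden inside the flipped Galois relators $\infgalrel{R_i}$ while repeatedly normalising alternating $\inflerel{L_2}$/$\inflerel{R_1}$-chains via commutativity is where the real work lies. This is precisely the difficulty that separates the theorem from the classical coinciding case $\inflerel{R_1} = \inflerel{L_2}$, where the two connections chain immediately and no reordering is needed.
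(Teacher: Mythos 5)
Your proposal is correct and follows essentially the same route as the paper's proof: the paper likewise argues by unfolding the Galois relators into explicit chains through $\betaty$, using the commutativity assumption $\parenths[\big]{\inflerel{R_1} \relcomp \inflerel{L_2}} = \parenths[\big]{\inflerel{L_2} \relcomp \inflerel{R_1}}$ to reorder alternating $\inflerel{R_1}$/$\inflerel{L_2}$ steps and to discharge the hidden $\indom$/$\incodom$ side conditions (e.g.\ establishing $\incodom\app\inflerel{R_1}\app(r_2\app z)$ before the counit inequality may be applied), exactly as you describe. The only difference is organisational: the paper factors the statement through the more general \cref{thm:comp_galequiv_better} (Galois equivalences plus preorders on the middle relations' fields already yield a Galois connection), details only the half Galois property on the left as an illustration, and defers the order-property closure and the mirrored connection to the formalisation, whereas you spell these parts out directly.
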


\paragraph{``Similarity''}
For a final time,
we can ponder whether the relation $\infgalrel{L}$
is sufficient to capture our desired notion of ``similarity'':
Again, we already know how to relate terms between
$\inflerel{L_i}$ and $\inflerel{R_i}$ for $i\in\{1,2\}$:
we can use $\infgalrel{L_i}$.
We also have a natural way to combine these relations,
namely composition.
We thus may desire that ``$t$ and $t'$ are similar'' when
$\parenths[\big]{\infgalrel{L_1}\relcomp\infgalrel{L_2}}t\app t'$.
The next theorem answers when
$\infgalrel{L}$ aligns with this desire
for Galois equivalences.
Preciser results can be found in
\appendixref{sec:appendixclosurecomp}
and \makebox{the formalisation}.
% We can show precisely when $\infgalrel{L}$ aligns with this expectation
% using similar proof ideas as before:
% \begin{theorem}\label{thm:comp_galreleq}
% Assume
% \begin{assumes}(2)
% \item $\parenths[\big]{\mono{\inflerel{R_1}}{\inflerel{L_1}}}\app r_1$,
% \item$\galp[\big]{\inflerel{L_1}}{\inflerel{R_1}}{l_1}{r_1}$,
% \item$\halfgalpl[\big]{\inflerel{R_1}}{\inflerel{L_1}}{r_1}{l_1}$,
% \item $\preorderon\app\infieldapp{\inflerel{R_1}}$,
% \item $\parenths[\big]{\mono{\inflerel{L_2}}{\inflerel{R_2}}}\app l_2$,
% \item$\halfgalpl[\big]{\inflerel{R_2}}{\inflerel{L_2}}{r_2}{l_2}$,
% \item $\reflon\app(\indom\app \inflerel{L_2})\app \inflerel{L_2}$,
% \item $\parenths[\big]{\inflerel{R_1} \relcomp \inflerel{L_2}} = \parenths[\big]{\inflerel{L_2} \relcomp \inflerel{R_1}}$.
% \end{assumes}
% Then
% $\infgalrel{L}=\parenths[\big]{\infgalrel{L_1}\relcomp\infgalrel{L_2})}$.
% \end{theorem}
\begin{theorem}\label{thm:comp_galreleq}
Assume
\begin{assumes}(2)
\item$\forall i\in\{1,2\}.\;\preequiv[\big]{\inflerel{L_i}}{\inflerel{R_i}}{l_i}{r_i}$,
\item $\parenths[\big]{\inflerel{R_1} \relcomp \inflerel{L_2}} = \parenths[\big]{\inflerel{L_2} \relcomp \inflerel{R_1}}$.
\end{assumes}
% for $i\in\{1,2\}$.
% $\preequiv[\big]{\inflerel{L_i}}{\inflerel{R_i}}{l_i}{r_i}$, for $i\in\{1,2\}$,
% and
% $\parenths[\big]{\inflerel{R_1} \relcomp \inflerel{L_2}} = \parenths[\big]{\inflerel{L_2} \relcomp \inflerel{R_1}}$.
Then
$\infgalrel{L}=\parenths[\big]{\infgalrel{L_1}\relcomp\infgalrel{L_2}}$.
\end{theorem}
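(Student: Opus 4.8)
The plan is to prove the relation equality $\infgalrel{L}=\parenths[\big]{\infgalrel{L_1}\relcomp\infgalrel{L_2}}$ by establishing the two inclusions pointwise. Unfolding the left-hand side with the definition of the Galois relator, $\infgalrel{L}\app x\app z$ holds exactly when $\incodom\app\inflerel{R}\app z$ and $x\lerel{L}r_1\app(r_2\app z)$, where $\inflerel{L}=\infgalrel{L_1}\relcomp\inflerel{L_2}\relcomp\infgalrel{R_1}$; the right-hand side $\parenths[\big]{\infgalrel{L_1}\relcomp\infgalrel{L_2}}\app x\app z$ instead asks for a single intermediate $y\holhasty\betaty$ with $x\galrel{L_1}y$ and $y\galrel{L_2}z$. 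The tools I would use are: the definition of the Galois relator, splitting each side into a codomain condition and an order condition; the component Galois equivalences from the first assumption, which supply monotonicity of $r_1$, transitivity of $\inflerel{L_1}$ and $\inflerel{R_1}$, and the unit/counit facts $w\lerel{R_1}l_1\app(r_1\app w)$ and $l_1\app(r_1\app w)\lerel{R_1}w$ for $w$ in the field of $\inflerel{R_1}$; \cref{thm:comp_galequiv}, which already guarantees that the composite is a genuine preorder/PER Galois equivalence; and, crucially, the commutativity assumption $\parenths[\big]{\inflerel{R_1}\relcomp\inflerel{L_2}}=\parenths[\big]{\inflerel{L_2}\relcomp\inflerel{R_1}}$.

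For the inclusion $\parenths[\big]{\infgalrel{L_1}\relcomp\infgalrel{L_2}}\le\infgalrel{L}$, I would start from witnesses $x\galrel{L_1}y$ and $y\galrel{L_2}z$ and feed $a\define y$ and $b\define r_2\app z$ into the three-fold composition defining $\inflerel{L}$: the first factor $x\galrel{L_1}y$ is immediate, the middle factor $y\lerel{L_2}r_2\app z$ is given, and the last factor $(r_2\app z)\galrel{R_1}(r_1\app(r_2\app z))$ reduces to $r_2\app z$ lying in the field of $\inflerel{R_1}$ together with a unit fact. The one nontrivial point is precisely this field membership: from $\incodom\app\inflerel{R_1}\app y$ and $y\lerel{L_2}r_2\app z$ I obtain a chain in $\inflerel{R_1}\relcomp\inflerel{L_2}$, and rewriting it as $\inflerel{L_2}\relcomp\inflerel{R_1}$ via the commutativity assumption places $r_2\app z$ in the codomain of $\inflerel{R_1}$. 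The same commutation yields the remaining conjunct $\incodom\app\inflerel{R}\app z$ by exhibiting a witness through the three factors of $\inflerel{R}$.

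For the reverse inclusion, given $\infgalrel{L}\app x\app z$ I would unfold $x\lerel{L}r_1\app(r_2\app z)$ to obtain $a,b$ with $x\galrel{L_1}a$, $a\lerel{L_2}b$, and $b\galrel{R_1}(r_1\app(r_2\app z))$; this last factor already forces $r_2\app z$ into the field of $\inflerel{R_1}$, whence the counit fact gives $l_1\app(r_1\app(r_2\app z))\lerel{R_1}r_2\app z$ and transitivity upgrades $b\lerel{R_1}l_1\app(r_1\app(r_2\app z))$ to $b\lerel{R_1}r_2\app z$. Now $a\lerel{L_2}b\lerel{R_1}r_2\app z$ is a chain in $\inflerel{L_2}\relcomp\inflerel{R_1}$; commuting it to $\inflerel{R_1}\relcomp\inflerel{L_2}$ produces a point $c$ with $a\lerel{R_1}c$ and $c\lerel{L_2}r_2\app z$, and $c$ is exactly the intermediate witness: $c\lerel{L_2}r_2\app z$ with $\incodom\app\inflerel{R_2}\app z$ gives $c\galrel{L_2}z$, while monotonicity of $r_1$ applied to $a\lerel{R_1}c$ together with $x\lerel{L_1}r_1\app a$ and transitivity gives $x\lerel{L_1}r_1\app c$, hence $x\galrel{L_1}c$. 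The main obstacle throughout is the bookkeeping of the field and (co)domain side conditions hidden inside each Galois relator: every step that needs an intermediate $\betaty$-element to lie simultaneously in the field of $\inflerel{R_1}$ and to behave correctly under $\inflerel{L_2}$ is discharged by the commutativity assumption, which is the one place the hypothesis is genuinely used. A secondary subtlety is treating the preorder and PER cases uniformly: the PER case lets me use symmetry freely for the unit/counit facts, whereas the preorder case requires tracking the orientation of each inequality and appealing to both directions supplied by the underlying order equivalence.
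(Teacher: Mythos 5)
Your overall strategy---unfolding $\infgalrel{L}$ into the codomain conjunct $\incodom\app\inflerel{R}\app z$ plus the chain $x\lerel{L}r_1\app(r_2\app z)$, proving the two inclusions pointwise, and letting the commutativity assumption discharge the partiality side conditions---is the right one, and it matches the paper's treatment in spirit: the paper gives no textual proof of this theorem but defers to the formalisation via the generalisation \cref{thm:comp_galreleq_gal_con}, whose hypothesis list (monotonicity of $r_1$ and $l_2$, the Galois/half-Galois properties, preorder structure on the fields, reflexivity, and commutativity) is essentially the toolkit you deploy. Your forward inclusion is correct as sketched. However, there is one genuine error in the reverse inclusion: the claim that the last factor $b \galrel{R_1} (r_1\app(r_2\app z))$ ``already forces $r_2\app z$ into the field of $\inflerel{R_1}$'' is false. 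Unfolding the Galois relator, that factor only gives $\incodom\app\inflerel{L_1}\app(r_1\app(r_2\app z))$ and $b \lerel{R_1} l_1\app(r_1\app(r_2\app z))$, i.e.\ facts about the points $r_1\app(r_2\app z)$ and $l_1\app(r_1\app(r_2\app z))$, never about $r_2\app z$ itself. Since the relations are partial and $l_1,r_1$ are unconstrained outside the fields, both conjuncts can hold while $r_2\app z$ lies outside the field of $\inflerel{R_1}$: take $\alphaty=\betaty$, let $\inflerel{L_1}=\inflerel{R_1}$ relate a single point $p$ only to itself, let $l_1,r_1$ be constantly $p$, and let $r_2\app z$ be any other point; this satisfies $\preequiv[\big]{\inflerel{L_1}}{\inflerel{R_1}}{l_1}{r_1}$, the last factor holds with $b=p$, yet the counit fact $l_1\app(r_1\app(r_2\app z))\lerel{R_1} r_2\app z$ you derive next is false. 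So the step fails as justified, and with it the whole reverse inclusion.

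The repair uses material you already put on the table but do not invoke at this point: the conjunct $\incodom\app\inflerel{R}\app z$. Unfolding it yields a chain $z_0 \galrel{R_2} c \lerel{R_1} d \galrel{L_2} z$; in particular $c \lerel{R_1} d \lerel{L_2} r_2\app z$, and commuting this chain via the commutativity assumption produces some $e$ with $c \lerel{L_2} e \lerel{R_1} r_2\app z$, hence $\incodom\app\inflerel{R_1}\app(r_2\app z)$. From there your argument resumes: preorder reflexivity on the field and the Galois property of the first equivalence give the counit fact, transitivity gives $b\lerel{R_1}r_2\app z$, and the final commutation produces the middle witness. Note that the same unfolding of $\incodom\app\inflerel{R}\app z$ also supplies $\incodom\app\inflerel{R_2}\app z$ (it sits inside the factor $d\galrel{L_2}z$), which you use to conclude $c\galrel{L_2}z$ but never justify. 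The needed fix is exactly parallel to what you did correctly in the forward direction, where field membership of $r_2\app z$ is extracted from $\incodom\app\inflerel{R_1}\app y$ plus commutativity; in the reverse direction it must instead be sourced from $\incodom\app\inflerel{R}\app z$, since the chain being unfolded cannot provide it.
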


\section{Application Examples}\label{sec:examples}

As all our results are formalised in Isabelle/HOL,
we can directly use them to manually transport terms in said environment.
But that would be rather tiresome.
% To make transports practical, we need some automation.
We thus implemented a prototype in Isabelle/ML to
\makebox{automate transports}.
% Despite being nowhere near as polished as other transport tools,
% it can already solve all examples in this section,
% some of which are out of scope using other transport tools.

\paragraph{The Prototype}
% Like in other transport tools,
% users can register base equivalences.
The method $\transporttermprover$
uses registered base equivalences,
along with the closure theorems from \cref{sec:closure},
to construct more complex equivalences.
The prototype is currently
restricted to equivalences on partial equivalence relations (PERs) for pragmatic reasons:
% This is a pragmatic choice for a prototype:
their closure theorems have fewer
assumptions and are hence simpler to apply.
% and their function relators are monotone (\cref{lem:depfunmonoeqifpartequiv}).
Providing automation for weaker equivalences is future work.
The current prototype also does not build
composition closures (\cref{sec:closurecomp})
and automates only a fragment of
dependent function relators for simplicity reasons.
Again, these extensions are future work.

The prototype provides a command $\transportterm$.
As input, it takes a term $t\holhasty\alphaty$ (the term to be transported)
and two optional target relations
$L\holhasty\relty{\alphaty}{\alphaty}$,
$R\holhasty\relty{\betaty}{\betaty}$.
This is unlike other transport frameworks~\cite{univalparam1,proofrepair,depinterop1,isabellelifting},
which only take the term $t\holhasty \alphaty$ and a target type $\betaty$.
This design decision is crucial since we
can neither assume a unique correspondence
between types and target relations in practice
(cf.~\cref{ex:isaset}),
nor can we express dependencies in types,
but we express them using dependent relators
(cf. \cref{ex:transp_dep_fun_rel}).
The command then opens two goals.
The first one asks for an equivalence
$\perequiv[\big]{\inflerel{L}}{\inflerel{R}}{l}{r}$,
the second one for a proof that $\indom\app \inflerel{L}\app t$.
On success, it registers a new term $t'$ and a theorem that $t\galrel{L} t'$.
It also registers a second theorem where the relator $\infgalrel{L}$ has been rewritten
to its desired form
as described in \cref{thm:depfunrel_galreleq,thm:natfunc_galreleq,thm:comp_galreleq}.

% \paragraph{Examples}
The following examples are best explored interactively in our formalisation.
We define the \emph{restricted equality relation on predicates} as
$x =_P y\define P\app x\land x = y$
and
the \emph{restricted equality relation on sets} as
$x =_S y\define x\in S\land x = y$.
\begin{example}
It is easy to transport the list and set examples from \cref{sec:introduction}.
We just have to prove the equivalence between
$\listfsetrelL\app xs\app xs'\define \listfsetrel\app xs\app (\tofset\app xs')$
and
$(=) \holhasty \relty{\fsetty{\natty}}{\fsetty{\natty}}$
and invoke our prototype on $\maxlist$:
\begin{codealign}
% &\definitioncmd\app \listfsetrel\app xs\app xs'\app \equiv \tofset\app xs = \tofset\app xs'\\
&\lemmacmd\app\perintrotag{:}\app \perequiv{\listfsetrelL}{(=)}{\tofset}{\tolistfset}\\
% \app\bycmd \dotso
% &\lemmacmd\app\parametrictag{:}\app \parenths[\big]{\funrel{\listfsetrel{}}{(=)}}\app \maxlist\app \maxlist\app\bycmd\dotso\\
% \end{code}
% Then we can invoke the transport method on $\maxlist$:
% \begin{code}
&\transportterm\app \maxfset \holhasty \fsetty{\natty}\purefun \natty\app\wherecmd\app \transpparameq{t}\maxlist\app
\bycmd\app\transporttermprover
\end{codealign}
The $\perintrotag$ tag
% and $\parametrictag$ tags
is used by $\transporttermprover$
to discharge the closure theorems' side conditions.
% to prove the goals opened by $\transportterm$.
$\transportterm$ registers the theorem
$\parenths[\big]{\funrel{\listfsetrel}{(=)}}\app \maxlist\app \maxfset$
and the definition
$\maxfset\app s\define \maxlist\app (\tolistfset\app s)$
as a result.
% \begin{codealign}
% &\definitioncmd\app \maxfset\app s\equiv \maxlist\app (\tolistfset\app s)\\
% &\lemmacmd\app \parenths[\big]{\funrel{\listfsetrel}{(=)}}\app \maxlist\app \maxfset
% \end{codealign}
% where $\listfsetrel\app xs\app s \equiv \tofset\app xs = s$.
We can also readily transport in the opposite direction
or use $\settyconst$s rather than $\fsettyconst$s
if we
define $\listsetrelL\app xs\app xs'\define \listsetrel\app xs\app (\toset\app xs')$:
% above examples to $\setty{\natty}$:
\begin{codealign}
&\transportterm\app \maxlist' \holhasty \listty{\natty}\purefun \natty\app\wherecmd\app \transpparameq{t} \maxfset\app
\bycmd\app\transporttermprover\\
% \end{code}
% Moving above examples to $\setty{\natty}$ is also straightforward:
% we just have to prove the equivalence
% \begin{codealign}
% &\lemmacmd\app\perintrotag{:}\app \perequiv{\isaleftrel{}}{(=_\finite)}{\toset}{\tolistset}\app\bycmd\dotso\\
% &\lemmacmd\app\parametrictag{:}\app \parenths[\big]{\funrel{\isaleftrel{}}{(=)}}\app \maxlist\app \maxlist\app\bycmd\dotso\\
% &\transportterm\app \maxset \holhasty \setty{\natty}\purefun \natty\app\wherecmd\app x = \maxlist\app
% \bycmd\app\transporttermprover
% \end{codealign}
% \begin{codealign}
&\lemmacmd\app\perintrotag{:}\app \perequiv{\listsetrelL}{(=_\finite)}{\toset}{\tolistset}\\
% &\lemmacmd\app\parametrictag{:}\app \parenths[\big]{\funrel{\isaleftrel{}}{(=)}}\app \maxlist\app \maxlist\app\bycmd\dotso\\
&\transportterm\app \maxset \holhasty \setty{\natty}\purefun \natty\app\wherecmd\app \transpparameq{t} \maxlist\app
\bycmd\app\transporttermprover
\end{codealign}
% We can also transport higher-order functions without any problem:
% \begin{codealign}
% &\lemmacmd\app\parametrictag{:}\app \parenths[\big]{\funrel{\parenths{\funrel{(=)}{(=)}}}{\funrel{\isaleftrel{}}{\isaleftrel{}}}}\app \map\app \map\app\bycmd\dotso\\
% &\transportterm\app \mapset \holhasty \parenths[\big]{(\alphaty\purefun \betaty)\purefun \setty{\alphaty}\purefun\setty{\betaty}}\\
% &\quad\wherecmd\app x = \map \holhasty \parenths[\big]{(\alphaty\purefun \betaty)\purefun \listty{\alphaty}\purefun\listty{\betaty}} \app
% \bycmd\app\transporttermprover
% \end{codealign}
% \begin{codealign}
% &\lemmacmd\app\parametrictag{:}\app \parenths[\big]{\funrel{\parenths{\funrel{(=)}{(=)}}}{\funrel{\isaleftrel{}}{\isaleftrel{}}}}\app \map\app \map\app\bycmd\dotso\\
% &\transportterm\app \mapset \holhasty \parenths[\big]{(\alphaty\purefun \betaty)\purefun \setty{\alphaty}\purefun\setty{\betaty}}\\
% &\quad\wherecmd\app x = \map \holhasty \parenths[\big]{(\alphaty\purefun \betaty)\purefun \listty{\alphaty}\purefun\listty{\betaty}} \app
% \bycmd\app\transporttermprover
% \end{codealign}
\end{example}
\begin{example}\label{ex:transp_dep_fun_rel}
% So far, we only used non-dependent function relators.
% The prototype supports a restricted fragment of
% our dependent function relator results (\cref{sec:closuredepfunrel}).
As noted in \cref{sec:introduction}, transporting subtractions
$i_1 -_\intty i_2$ from $\intty$ to $\natty$
requires a dependency $i_1 \geq i_2$.
We model this dependency using dependent function relators.
We first define $\intspos\define (=_{(\le) 0})$ and then
proceed as usual:
\begin{codealign}
% &\definitioncmd\app \intspos\equiv (=_{(\le) 0})\\
&\lemmacmd\app\perintrotag{:}\app \perequiv{\intspos}{(=)}{\tonat}{\toint}\\
% \app\bycmd \dotso\\
% &\lemmacmd\app\parametrictag{:}\app \parenths[\big]{\depfunrel{i}{\_}{\intspos}{\depfunrelrest{j}{\_}{\intspos}{j \le i}{\intspos}}}\app (-)\app (-)\app\bycmd\dotso\\
&\transportterm\app (-_\natty)\holhasty \natty\purefun\natty\purefun\natty \app\wherecmd\app \transpparameq{t} (-_\intty)\\
&\quad\andcmd\app \transpparameq{L} \parenths[\big]{\depfunrel{i_1}{\_}{\intspos}{\depfunrelrest{i_2}{\_}{\intspos}{i_1 \geq i_2}{\intspos}}}\\
&\quad\andcmd\app \transpparameq{R} \parenths[\big]{\depfunrel{n_1}{\_}{(=)}{\depfunrelrest{n_2}{\_}{(=)}{n_1 \geq n_2}{(=)}}}\app\bycmd\app\transporttermprover
\end{codealign}
Similarly, operations on datatypes may only conditionally be transportable.
For example, we may only transport the index operator
$\inflistidx\holhasty \listty{\alphaty}\purefun\natty\purefun\alphaty$
to the type of immutable arrays ($\iarrayty{\alphaty}$)
if the index is not out of bounds.
In the following, let $S$ be an arbitrary partial equivalence relation:
\begin{codealign}
&\lemmacmd\app\perintrotag{:}\app\perequiv{\listrel\app S}{\iarrayrel\app S}{\toiarray}{\tolistiarray}\\
% \app\bycmd \dotso\\
% &\lemmacmd\app\parametrictag{:}\app \parenths[\big]{\depfunrel{xs}{\_}{\listrel\app S}{\depfunrelrest{i}{\_}{(=)}{i < \length\app xs}{S}}}\app \inflistidx\app \inflistidx\app\bycmd\dotso\\
&\transportterm\app \iarrayindex\holhasty \iarrayty{\alphaty}\purefun\natty\purefun\alphaty\app\wherecmd\app \transpparameq{t} \inflistidx\\
&\quad\andcmd\app \transpparameq{L} \parenths[\big]{\depfunrel{xs}{\_}{\listrel\app S}{\depfunrelrest{i}{\_}{(=)}{i < \length\app xs}{S}}}\\
&\quad\andcmd\app \transpparameq{R} \parenths[\big]{\depfunrel{arr}{\_}{\iarrayrel\app S}{\depfunrelrest{i}{\_}{(=)}{i < \iarraylength\app arr}{S}}}\\
&\quad\bycmd\app\transporttermprover
\end{codealign}
\end{example}
\begin{example}\label{ex:isaset}
Isabelle/Set~\cite{isaset}
is a
% softly-typed,
set-theoretic environment in Isabelle/HOL.
Its type of sets is called $\mysetty$.
Isabelle/Set provides a \emph{set-extension} mechanism:
As input, it takes two sets $A\holhasty\mysetty$ and $B\holhasty\mysetty$
and an injection from $A$ to $B$.
It then creates a new set $B'\supseteq A$ together with a bijection
between $B$ and $B'$
with mutual inverses $l,r \holhasty\mysetty\purefun\mysetty$.
This mechanism is used to enforce subset relationships.
For instance, it first uses a construction of the integers $\intty\holhasty\mysetty$
where $\natty\not\subseteq\intty$.
It then uses the set-extension mechanism to create a copy $\intty'\supseteq \natty$ with inverses $l,r$.
Doing so necessitates a manual transport of all definitions from
$\intty$ to $\inttycopy$.
Using \transport,
it is possible to automate this process:
\begin{codealign}
&\lemmacmd\app\perintrotag{:}\app \perequiv{(=_\intty)}{(=_\inttycopy)}{l}{r}\\
% \app\bycmd \dotso\\
% &\lemmacmd\app\parametrictag{:}\app \parenths[\big]{\funrel{(=_\intty)}{\funrel{(=_\intty)}{(=_\intty)}}}\app (+_\intty)\app (+_\intty)\\
&\transportterm\app (+_\inttycopy)\app\wherecmd\app \transpparameq{t} (+_\intty)
\app\andcmd\app \transpparameq{L} \parenths[\big]{\funrel{(=_\intty)}{\funrel{(=_\intty)}{(=_\intty)}}}\\
&\quad\app\andcmd\app \transpparameq{R} \parenths[\big]{\funrel{(=_\inttycopy)}{\funrel{(=_\inttycopy)}{(=_\inttycopy)}}}
\app\bycmd\app\transporttermprover\\
% &\lemmacmd\app\parametrictag{:}\app \parenths[\big]{\funrel{(=_\intty)}{\funrel{(=_\intty)}{(=_\intty)}}}\app (-_\intty)\app (-_\intty)\\
&\transportterm\app (-_\inttycopy)\app\wherecmd\app \transpparameq{t} (-_\intty)
\app\andcmd\app \transpparameq{L} \parenths[\big]{\funrel{(=_\intty)}{\funrel{(=_\intty)}{(=_\intty)}}}\\
&\quad\app\andcmd\app \transpparameq{R} \parenths[\big]{\funrel{(=_\inttycopy)}{\funrel{(=_\inttycopy)}{(=_\inttycopy)}}}
\app\bycmd\app\transporttermprover
\end{codealign}
Note that all constants $(+_\intty),(+_\inttycopy),(-_\intty),(-_\inttycopy)$
are of the same type $\mysetty\purefun\mysetty\purefun\mysetty$.
This stresses the point that users
must be able to specify target relations and not just target types.
\end{example}

\section{Related Work}\label{sec:relatedwork}

\paragraph{Transport in Proof Assistants}
Our work was chiefly inspired by Isabelle's
% very successful
Lifting package~\cite{isabellelifting,kuncarthesis},
which transports terms via partial quotient types.
% first introduced in~\cite{isabellelifting}
% and described in more detail in~\cite{kuncarthesis}.
% Our work was initiated because
% the \anontext{authors were}{author was} made aware that
% the Lifting package is inapplicable to the problem described in \cref{ex:isaset}.
% This motivated us to distil the essence of transports,
% leading to the rest of this work's story.
All closure theorems in this work
% (\cref{sec:closure})
generalise the ones in~\cite{isabellelifting}.
% and the construction for compositions was greatly
% inspired by mentioned work.
% and required new proof techniques.
% and should not be seen as mere generalisations of theirs.
Besides this source of inspiration,
% the theory of automated transports has
the theory of automated transports
has seen prolific work in recent years:
% significant progress in recent years:
% there are numerous other related works:

% The theory of automated transports has seen prolific work in recent years.
Tabareau et al.~\citet{univalparam2}
proved a strengthened relational parametricity result,
called \emph{univalent parametricity},
for the Calculus of Inductive Constructions.
% which they called \emph{univalent parametricity}.
Their approach ensures
that all relations are compatible with type equivalences.
One can then use univalence~\cite{univalence} to seamlessly transport
terms between related types.
The framework is implemented using Coq's
typeclass mechanism~\cite{coqtypeclass}.

Tabareau et al.~\citet{univalparam1} extended their
work to integrate what they call
\emph{``white-box transports''}.
White-box transports
structurally rewrite a term $t$ to $t'$ using
user-specified correspondences.
In contrast,
\emph{``black-box transports''}
% as used in our work,
transport $t$ without looking at its syntactic structure.
% They compare this with the mode of transport described in this work,
% which they call \emph{``black-box transport''}.
% The latter transports a term $t$ along an equivalence to $t'$ without
% looking at the syntactic structure of $t$.
% White-box transports, on the other hand,
% take user-specified correspondences between terms
% and structurally rewrite the term $t$ to $t'$ using these correspondences.
For instance,
given an equivalence between unary and binary numbers
$\tyequiv{\natty}{\binty}{l}{r}$,
black-box transporting the term $0 +_{\natty} 0$ results
in $l\app (0 +_{\natty} 0)$.
In contrast,
given correspondences
between the functions ${(+)}_\natty, {(+)}_{\constfont{Bin}}$
and constants $0, 0_{\constfont{Bin}}$,
white-box transporting the term results
in $0_{\constfont{Bin}} +_{\constfont{Bin}} 0_{\constfont{Bin}}$.
% users can prove an equivalence between the type of unary
% and binary natural numbers
% $\tyequiv{\natty}{\binty}{l}{r}$.
% Black-box transporting the term $0 +_{\natty} 0$ then results in a term
% $l\app (0 +_{\natty} 0)$.
% If we, on the other hand, prove a correspondence
% between the functions ${(+)}_\natty, {(+)}_{\constfont{Bin}}$
% and constants $0, 0_{\constfont{Bin}}$,
% we can white-box transport the term to
% $0_{\constfont{Bin}} +_{\constfont{Bin}} 0_{\constfont{Bin}}$.
These modes can also be mixed:
given just the equivalence
$\tyequiv{\natty}{\binty}{l}{r}$
and correspondence between ${(+)}_{\natty},{(+)}_{\constfont{Bin}}$,
we obtain $(l\app 0) +_{\constfont{Bin}} (l\app 0)$.
Isabelle's Lifting package also supports white-box transports
via the $\transfer$ method~\cite{kuncarthesis}.
% Our prototype
% however,
% does not at the moment.
% only performs black-box transports at the moment.
% Support of white-box transports is future work.
While our work is concerned with
black-box transports,
our prototype also contains
experimental support for white-box transports.
This integration will be further polished in future work.
% But this integration
% as outlined in \cite{univalparam1,kuncarthesis},
% is planned for future work.

Angiuli et al.~\citet{univalparam3}
establish representation independence results
in Cubical Agda~\cite{cubicalagda}.
Their approach
% not limited
% to type equivalences
% but
applies to a restricted variant of
quasi-partial equivalence relations~\cite{quasipers}.
Essentially, they quotient two types by a given correspondence to obtain
a type equivalence between the quotiented types.
% Their approach requires
% higher inductive types~\cite[Chapter~6]{hottbook}
% univalence;
% unfortunately, both are unavailable in many major proof assistants,
% including Isabelle/HOL, Lean 3, and Lean 4.

Dagand et al.~\citet{depinterop1,depinterop2} introduce what they call
``type-theoretic partial Galois connections'',
which are essentially partial type equivalences
on an enriched
$\alphaty\app \typefont{option}$ type.
They allow for partiality on one
side of the equivalence but not the other.
Their framework is designed for effective
program extraction and implemented
using Coq's typeclass mechanism.

Ringer et al.~\citet{proofrepair}
developed a Coq plugin
% , called Pumpkin Pi,
to transport proof terms via type equivalences
for inductive types.
Their theory shares similarities
with~\cite{univalparam1,univalparam2},
but it directly transforms proof terms.
% This has the advantage that one can delete all references to the old datatype
% once the proof terms have been transported to the new target type.
This way, one can remove all references to the old datatype
once the proof terms have been transported to the new target type.
% Other frameworks, including ours, do not readily support
This is not readily achievable using other mentioned frameworks,
including ours.

Type equivalences enjoy the property
of having total and mutually inverse transport functions.
This is not the case for partial Galois connections,
which makes the transport of proofs harder.
For example, the parametricity law for equality
$\parenths[\big]{\funrel{T}{\funrel{T}{(\holiff)}}}\app (=)\app (=)$
holds only if $T$ is left-unique and injective.
This is the case if $T$ is described by a type equivalence
but not in general by a Galois connection.
Kun{\v{c}}ar~\citet{kuncarthesis} provides parametricity rules
for all prominent logical connectives.
These rules also apply to our setting and
will be crucial when we polish the integration
of white-box transports in our prototype.

The works mentioned above
all transport terms via certain notions of equivalences.
But there are also other approaches,
particularly in the field of data refinement.
An example is the CoqEAL framework~\cite{coqeal},
which automatically derives parametricity results using typeclass search.
Another one is Isabelle's Autoref framework~\cite{lammichrefine},
which derives relational parametricity results using white-box transports.
The core inspiration in both cases goes back to
\cite{representation_independence,reynoldsparampoly,wadlertheoremsfree}.
A comprehensive comparison of these frameworks
can be found in~\citet{refinementcompare}.
% has been written by Lammich and Lochbihler~\citet{refinementcompare}.

\paragraph{Galois Connections in Computer Science}
Galois connections are fundamental
% are the theoretical backbone for
% a vast amount of work
in the field of abstract interpretation.
Cousot and Cousot's recent book~\citet{abstractintbook}
provides an overview of their applications.
% An overview of the their usage can be found in Cousot and Cousot's recent book~\citet{abstractintbook}.
The closure of Galois connections under non-dependent function relators
% has been established as far back as [6].
% that Galois connections are closed under non-dependent function relators
goes back to at least~\cite{abstractintfirstgal}.
% can be seen as a generalisation of this result
We generalised this result
to partial Galois connections and dependent function relators
in \cref{sec:closuredepfunrel}.
Most work in abstract interpretation
does not consider partially defined Galois connections
and assumes partial orderings on relations.
The work of Min{\'e}~\citet{partgalconn}
is an exception,
allowing for partiality on one side of the connection but not the other.
Darais and Van Horn~\citet{galoisform1} formalise Galois connections
constructively and apply it to tasks in abstract interpretation.
An early
% One of the first
application of Galois connections
% in computer science
was by Hartmanis and Stearns~\citet{pairalgebra2}.
% Though, according to Backhouse~\citet{pairalgebra1},
% they were not aware of Galois connections at that time,
Though they did not use Galois connections,
they introduced an equivalent notion of \emph{pair algebras}~\cite{pairalgebraequivgalcon}.
Our Galois relator
% (\cref{sec:galrel})
indeed
describes the pair algebra induced by a Galois connection.

\section{Conclusion and Future Work}\label{sec:conclusion}

We explored existing notions of equivalences used for automatic transport.
Based on this exploration,
we identified a set of minimal expectations when transporting terms via equivalences.
This essence led us to introduce a new class of equivalences,
namely partial Galois connections.
Partial Galois connections generalise (standard) Galois connections
and apply to relations
that are only defined on subsets of their types.
We derived closure conditions for partial Galois connections and equivalences,
and typical order properties
under (dependent) function relators, relators for (co)datatypes,
and composition.
Our framework applies to simple type theory and --
unlike prior solutions for simple type theory --
can handle inter-argument dependencies.
We implemented a prototype in Isabelle/HOL based on our results.
The prototype needs to be further polished,
but it can already handle relevant examples that
are out of scope for existing tools.

\paragraph{Future work}
% In future work,
% We plan to enhance our prototype.
% to make transports truly practical.
As our theory subsumes the one of Isabelle's Lifting package,
one goal is to replace the package by a more general tool.
To this end,
we have to integrate our results into Isabelle's (co)datatypes package~\cite{bnfimpl},
extend our prototype to automate the construction of compositions,
and polish the support of white-box transports (cf. \cref{sec:relatedwork}).

% Our theory applies to the simply typed lambda calculus with classical axioms.
% and is not tied to the Isabelle/HOL settings.
Finally,
based on our formalisation insights,
we conjecture that one can adopt our theory
to constructive logics,
but only a formalisation in a constructive prover
will give a definite answer.

% where previously aligned terms are reused during transports.

%% The acknowledgments section is defined using the "acks" environment
%% (and NOT an unnumbered section). This ensures the proper
%% identification of the section in the article metadata, and the
%% consistent spelling of the heading.

%%% LLNCS START
\anontext{}{
\subsubsection{Acknowledgements}
%%% LLNCS END
% \begin{acks}
The
\anontext{authors thank}{author thanks}
the anonymous reviewers of this
and a previous submission
for their valuable feedback
and Mohammad Abdulaziz and Tobias Nipkow
for their comments on a draft of this paper.}
% \end{acks}

%%
%% The next two lines define the bibliography style to be used, and
%% the bibliography file.
\bibliographystyle{splncs04}
\bibliography{bibliography}

%%
%% If your work has an appendix, this is the place to put it.
\arxivsubmittext{}{\clearpage}
\appendix
\section{Partial Galois Connections, Equivalences, and Relators}\label{sec:appendixgalconequivrel}

\subsection{(Order) Basics}\label{sec:appendixorderbasics}

Given types $\alphaty$ and $\betaty$,
the type of functions from $\alphaty$ to $\betaty$
is written $\alphaty\purefun\betaty$.

The composition of functions is defined as $(f\comp g)\app x\define f\app(g\app x)$.

A \emph{predicate on a type $\alphaty$} is a function of type
$\alphaty \purefun \bool$.

The predicate mapping all inputs to $\True$ is denoted by $\top\define\lambda\app\_.\app\True$.

A \emph{relation on $\alphaty$ and $\betaty$} is a function of type
$\relty{\alphaty}{\betaty}$.

For every relation $R$, we introduce an infix operator $\inflerel{R}\define R$,
that is $x \lerel{R} y \holiff R\app x\app y$.

The \emph{inverse} of a relation is defined as
$\inv{R}\app x\app y\define R\app y\app x$.

The \emph{composition of two relations $R,S$}
is defined as $(R\relcomp S)\app x\app y\define\exists z.\ R\app x\app z\land S\app z\app y$.

A \emph{relation $R$ is finer than another relation $S$}, written $R\le S$,
if $\forall x\app y.\app R\app x\app y\holimplies S\app x\app y$.

The \emph{domain}, \emph{codomain}, and \emph{field} predicates on a relation $R$
are defined as
\begin{align*}
\indom\app R\app x&\define
\exists y.\app R\app x\app y\\
\incodom\app R\app y&\define
\exists x.\app R\app x\app y\\
\infield\app R\app x&\define
\indom\app R\app x\lor \incodom\app R\app x
\end{align*}

A relation $R$ is \emph{right-total} if
$\forall y.\;\exists x.\; R\app x\app y$
and \emph{right-unique} if
$\forall x,y,y'.\; R\app x\app y\land R\app x\app y' \to y = y'$.

Given a predicate $P$ and relation $R$,
we define \emph{reflexivity}, \emph{transitivity},
and \emph{symmetry on $P$ and $R$} as follows:
\begin{align*}
\reflon\app P\app R&\define
\forall x.\app P\app x \holimplies R\app x\app x\\
\transon\app P\app R&\define
\forall x\app y\app z.\app P\app x\app \land P\app y \land P\app z \land R\app x\app y \land R\app y\app z \holimplies R\app x\app z\\
\symmon\app P\app R&\define
\forall x\app y.\app P\app x\app \land P\app y\land R\app x\app y \holimplies R\app y\app x
\end{align*}
\emph{Preorders} and \emph{partial equivalence relations} (PERs)
% , and \emph{equivalence relations on $P$ and $R$}
are then defined in the expected way:
\begin{align*}
\preorderon\app P\app R&\define
\transon\app P\app R\land \reflon\app P\app R\\
\partequivon\app P\app R&\define
\transon\app P\app R\land \symmon\app P\app R
% \equivalon\app P\app R&\define
% \partequivon\app P\app R\land \reflon\app P\app R
\end{align*}
For all relativised concepts, we introduce their unrelativised analogue:
\begin{align*}
\refl\app R&\define\reflon\app\top\app R\\
\trans\app R&\define\transon\app\top\app R\\
&\shortvdotswithin{\define}
\partequiv\app R&\define\partequivon\app\top\app R
\end{align*}

Given a predicate $P$ and relation $R$,
we say that $f$ is \emph{inflationary (sometimes also called \emph{extensive}) on $P$ and $R$},
written \makebox{$\inflaton\app P\app R\app f$},
if $\forall x.\app P\app x \holimplies x \lerel{R} f\app x$.
Similarly, we say that $f$ is \emph{deflationary on $P$ and $R$},
written as $\deflaton\app P\app R\app f$,
if $\forall x.\app P\app x \holimplies f\app x\lerel{R} x$.
If $f$ is inflationary and deflationary on $P$ and $R$,
it is a \emph{relational equivalence on $P$ and $R$}:
\begin{displaymath}
\relequivon\app P\app R\app f\define
\inflaton\app P\app R\app f\land
\deflaton\app P\app R\app f.
\end{displaymath}

\subsection{Function Relators and Monotonicity}\label{sec:appendixfunrelmono}

The \emph{dependent function relator} is defined as
\begin{displaymath}
\parenths[\big]{\depfunrel{x}{y}{R}{S}}\app f\app g \define
\forall x\app y.\app R\app x\app y\holimplies S\app(f\app x)\app(g\app y),
\end{displaymath}
where $x,y$ may occur freely in $S$.
The \emph{(non-dependent) function relator} is given as a special case:
$\parenths{\funrel{R}{S}} \define \parenths[\big]{\depfunrel{\_}{\_}{R}{S}}$.
A function is \emph{monotone from $R$ to $S$} if it maps $R$-related inputs to $S$-related outputs:
\begin{displaymath}
\parenths[\big]{\depmono{x}{y}{R}{S}}\app f \define \parenths[\big]{\depfunrel{x}{y}{R}{S}}\app f\app f,
\end{displaymath}
where $x,y$ may occur freely in $S$.
The non-dependent variant is given as a special case:
$\parenths{\mono{R}{S}} \define \parenths[\big]{\depmono{\_}{\_}{R}{S}}$.
A \emph{monotone function relator} is like a function relator but additionally requires its members to be monotone:
\begin{align*}
\parenths[\big]{\monodepfunrel{x}{y}{R}{S}}\app f\app g \define &\parenths[\big]{\depfunrel{x}{y}{R}{S}}\app f\app g\\
                                                                &\land \parenths[\big]{\depmono{x}{y}{R}{S}}\app f
                                                                \land \parenths[\big]{\depmono{x}{y}{R}{S}}\app g,
\end{align*}
where $x,y$ may occur freely in $S$.
The non-dependent variant is given as a special case:
$\parenths{\monofunrel{R}{S}} \define \parenths[\big]{\monodepfunrel{\_}{\_}{R}{S}}$.
We define the \emph{relational if conditional} and
the following notation:
\begin{align*}
\relif\app B\app S\app x\app y &\define B \holimplies S\app x\app y,\\
\parenths[\big]{\depfunrelrest{x}{y}{R}{B}{S}}&\define
\parenths[\big]{\depfunrel{x}{y}{R}{\relif\app B\app S}},
% \parenths[\big]{\depmonorest{x}{y}{R}{B}{S}}&\define
% \parenths[\big]{\depmono{x}{y}{R}{\relif\app B\app S}},
\end{align*}
where in the latter two cases, $x,y$ may occur freely in $B,S$.

% We next introduce analogous constants and notations
% where the role of the input relation $R$ is replaced
% by a predicate $P$:
% \begin{align*}
% \parenths[\big]{\depfunrelpred{x}{P}{S}}\app f\app g &\define
% \forall x.\app P\app x\holimplies S\app (f\app x)\app(g\app x),\\
% \parenths[\big]{\depfunrelpredrest{x}{P}{B}{S}}&\define
% \parenths[\big]{\depfunrelpred{x}{P}{\relif\app B\app S}},\\
% \parenths[\big]{\funrelpred{P}{S}}&\define
% \depfunrelpred{\_}{P}{S},
% \end{align*}
% where in the first two cases, $x$ may occur freely in $B,S$.

% Finally, we can again define corresponding monotonicity predicates:
% \begin{align*}
% \parenths[\big]{\depmonopred{x}{P}{Q}}\app f &\define
% \parenths[\big]{\depfunrelpred{x}{P}{(\lambda \_.\app Q)}}\app f\app f,\\
% \parenths[\big]{\monopred{P}{Q}} &\define
% \parenths[\big]{\depmonopred{\_}{P}{Q}},
% \end{align*}
% where in the first case, $x$ may occur freely in $Q$.

\subsection{Galois Relator}\label{sec:appendixgalrel}

We define the dual of $\infgalrel{L}$ as
$\flipinfgalrel{R}\define\galrelconst\app \infgerel{R}\app \infgerel{L}\app l$,
that is $x\flipgalrel{R}y\holiff \indom\app \inflerel{L}\app x \land l\app x \lerel{R} y$.
\begin{lemma}\label{lem:galreliffalt}
Assume $\galp[\big]{\inflerel{L}}{\inflerel{R}}{l}{r}$.
Then
$x \galrel{L} y \holiff x\flipgalrel{R}y$.
\end{lemma}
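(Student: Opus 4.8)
The plan is to unfold both sides against their definitions and reduce the claim to a concrete biconditional about conjunctions. By definition, $x \galrel{L} y$ means $\incodom\app\inflerel{R}\app y \land x \lerel{L} r\app y$, while the dual $x \flipgalrel{R} y$ means $\indom\app\inflerel{L}\app x \land l\app x \lerel{R} y$. So the goal becomes
\begin{equation*}
\parenths[\big]{\incodom\app\inflerel{R}\app y \land x \lerel{L} r\app y}
\holiff
\parenths[\big]{\indom\app\inflerel{L}\app x \land l\app x \lerel{R} y}.
\end{equation*}
The one observation that drives everything is that each relational conjunct already witnesses a field-membership fact: from $x \lerel{L} r\app y$ we immediately obtain $\indom\app\inflerel{L}\app x$ (with witness $r\app y$), and from $l\app x \lerel{R} y$ we immediately obtain $\incodom\app\inflerel{R}\app y$ (with witness $l\app x$). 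I would isolate these two implications first, as they are used in both directions.

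For the forward direction, I would assume $\incodom\app\inflerel{R}\app y$ and $x \lerel{L} r\app y$. The relational assumption yields $\indom\app\inflerel{L}\app x$ directly, which is the first conjunct of the right-hand side. For the second conjunct, note that we now have \emph{both} $\indom\app\inflerel{L}\app x$ and $\incodom\app\inflerel{R}\app y$ in scope, so the hypothesis $\galp[\big]{\inflerel{L}}{\inflerel{R}}{l}{r}$ supplies the biconditional $x \lerel{L} r\app y \holiff l\app x \lerel{R} y$; combined with $x \lerel{L} r\app y$ this gives $l\app x \lerel{R} y$. The backward direction is perfectly symmetric: assuming $\indom\app\inflerel{L}\app x$ and $l\app x \lerel{R} y$, the relational assumption yields $\incodom\app\inflerel{R}\app y$, after which the same Galois biconditional converts $l\app x \lerel{R} y$ back into $x \lerel{L} r\app y$.

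I do not expect any genuine obstacle here; the result is essentially definitional unfolding plus one application of the Galois property in each direction. The only point demanding care is the order of reasoning: the biconditional in $\galp[\big]{\inflerel{L}}{\inflerel{R}}{l}{r}$ is guarded by \emph{both} $\indom\app\inflerel{L}\app x$ and $\incodom\app\inflerel{R}\app y$, so in each direction one must first recover the missing guard from the available relational fact before the Galois property can legally be invoked. This asymmetry — each side of $\galrel{L}$ and $\flipgalrel{R}$ bundles a different explicit guard, and the complementary guard is recoverable from the relational component — is exactly why the Galois property is the right and sufficient hypothesis for the equivalence to hold.
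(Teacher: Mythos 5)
Your proof is correct and matches the intended argument: the paper gives no explicit prose proof for this lemma (it is discharged in the formalisation), but its appendix definition of $\galp[\big]{\inflerel{L}}{\inflerel{R}}{l}{r}$ as the conjunction of the two half Galois properties makes exactly your two directions immediate, resting on the same key observation that each relational conjunct witnesses the missing guard ($x \lerel{L} r\app y$ yields $\indom\app\inflerel{L}\app x$, and $l\app x \lerel{R} y$ yields $\incodom\app\inflerel{R}\app y$). The only cosmetic difference is that you unfold the main-text form of the Galois property (the biconditional guarded by both $\indom$ and $\incodom$, \cref{eq:def_gal_prop}) rather than the appendix's half-property decomposition; the two formulations are equivalent by precisely the witnessing observations you isolate, so your argument is essentially the paper's own.
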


\subsection{Partial Galois Connections and Equivalences}\label{sec:appendixgalconeqiv}

Typically, Galois connections are defined on preorders,
distinguished by the characteristic property
$x \lerel{L}r\app y \holiff l\app x\lerel{R} y$ for all $x,y$.
We break the concept down into smaller pieces
and lift it to a partial setting.
The \emph{(partial) half Galois property on the left} is defined as
\begin{displaymath}
\halfgalpl[\big]{\inflerel{L}}{\inflerel{R}}{l}{r} \define
\forall x\app y.\app x\galrel{L}y \holimplies l\app x \lerel{R} y
\end{displaymath}
and dually, the \emph{(partial) half Galois property on the right} as
\begin{displaymath}
\halfgalpr[\big]{\inflerel{L}}{\inflerel{R}}{l}{r} \define
\forall x\app y.\app x\flipgalrel{R}y  \holimplies x \lerel{L} r\app y,
\end{displaymath}
Both halves combined constitute the \emph{(partial) Galois property}:
\begin{displaymath}
\galp[\big]{\inflerel{L}}{\inflerel{R}}{l}{r} \define
\halfgalpl[\big]{\inflerel{L}}{\inflerel{R}}{l}{r} \land
\halfgalpr[\big]{\inflerel{L}}{\inflerel{R}}{l}{r}.
\end{displaymath}
If $l$ and $r$ are also monotone,
we obtain a \emph{(partial) Galois connection}:
\begin{displaymath}
\galcapp[\big]{\inflerel{L}}{\inflerel{R}}{l}{r} \define
\parenths[\big]{\mono{\inflerel{L}}{\inflerel{R}}}\app l \land
\parenths[\big]{\mono{\inflerel{R}}{\inflerel{L}}}\app r \land
\galp[\big]{\inflerel{L}}{\inflerel{R}}{l}{r}.
\end{displaymath}
Note that we neither require $\inflerel{L},\inflerel{R}$
to be transitive nor reflexive.
An example Galois connection can be found in \cref{fig:galcon}.

By requiring a two-sided Galois connection,
we obtain a \emph{(partial) Galois equivalence}:
\begin{displaymath}
\galequiv[\big]{\inflerel{L}}{\inflerel{R}}{l}{r} \define
\galcapp[\big]{\inflerel{L}}{\inflerel{R}}{l}{r}\land
\galcapp[\big]{\inflerel{R}}{\inflerel{L}}{r}{l}
\end{displaymath}
An example of a Galois equivalence can be found in \cref{fig:galequiv}.
It can be shown that Galois equivalences are,
in many circumstances,
equivalent to the traditional notion of
(partial) order equivalences~(see \cref{sec:appendixorderequivs}).

Since the relations $\inflerel{L},\inflerel{R}$
are preorders or partial equivalence relations
in many practical cases,
we introduce two more definitions for convenience:
\begin{align*}
\preequiv[\big]{\inflerel{L}}{\inflerel{R}}{l}{r}\define&
\galequiv[\big]{\inflerel{L}}{\inflerel{R}}{l}{r}\\
&\land \preorderon\app\infieldapp{\inflerel{L}}\\
&\land \preorderon\app\infieldapp{\inflerel{R}}\\
\perequiv[\big]{\inflerel{L}}{\inflerel{R}}{l}{r}\define&
\galequiv[\big]{\inflerel{L}}{\inflerel{R}}{l}{r}\\
&\land \partequivon\app\infieldapp{\inflerel{L}}\\
&\land \partequivon\app\infieldapp{\inflerel{R}}
\end{align*}

\subsubsection{Order Equivalences}\label{sec:appendixorderequivs}

To define the concept of an order equivalence,
we first define the \emph{unit} and \emph{counit} functions:
\begin{displaymath}
\unitconst\app l\app r \define r\comp l\qquad\qquad\qquad
\counitconst\app l\app r\define l\comp r
\end{displaymath}
When $l$ and $r$ are clear from the context, we will
write $\unit\define\unitconst\app l\app r$
and $\counit\define\counitconst\app l\app r$.
A \emph{(partial) order equivalence} is then defined as
\begin{displaymath}
\begin{aligned}
\orderequiv[\big]{\inflerel{L}}{\inflerel{R}}{l}{r} \define &
\parenths[\big]{\mono{\inflerel{L}}{\inflerel{R}}}\app l
\land \parenths[\big]{\mono{\inflerel{R}}{\inflerel{L}}}\app r\\
&\land \relequivon\app (\infield\app \inflerel{L})\app \inflerel{L}\app \unit\\
&\land \relequivon\app (\infield\app \inflerel{R})\app \inflerel{R}\app \counit.
\end{aligned}
\end{displaymath}
In practice,
we will commonly work with preorders,
where the notions of Galois equivalences and order equivalences coincide:
\begin{lemma}
Assume
\begin{assumes}(3)
\item $\orderequiv[\big]{\inflerel{L}}{\inflerel{R}}{l}{r}$,
\item $\trans\app\inflerel{L}$,
\item $\trans\app\inflerel{R}$.
\end{assumes}
Then $\galequiv[\big]{\inflerel{L}}{\inflerel{R}}{l}{r}$.
\end{lemma}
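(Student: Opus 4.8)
The plan is to unfold $\galequiv[\big]{\inflerel{L}}{\inflerel{R}}{l}{r}$ into its two constituent Galois connections $\galcapp[\big]{\inflerel{L}}{\inflerel{R}}{l}{r}$ and $\galcapp[\big]{\inflerel{R}}{\inflerel{L}}{r}{l}$ and to establish each. The two monotonicity conjuncts of every Galois connection, $\parenths[\big]{\mono{\inflerel{L}}{\inflerel{R}}}\app l$ and $\parenths[\big]{\mono{\inflerel{R}}{\inflerel{L}}}\app r$, appear verbatim among the conjuncts of the order equivalence, so they come for free. Moreover, swapping $(\inflerel{L},l)$ with $(\inflerel{R},r)$ carries $\orderequiv[\big]{\inflerel{L}}{\inflerel{R}}{l}{r}$ to $\orderequiv[\big]{\inflerel{R}}{\inflerel{L}}{r}{l}$ --- the roles of the unit $\unit = r\comp l$ and counit $\counit = l\comp r$ merely exchange --- while the two transitivity hypotheses are symmetric in $\inflerel{L}$ and $\inflerel{R}$. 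Hence I would only prove the single Galois connection $\galcapp[\big]{\inflerel{L}}{\inflerel{R}}{l}{r}$; the other one then follows by running the identical argument with the roles of $\inflerel{L},l$ and $\inflerel{R},r$ interchanged.

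This reduces the work to the Galois property $\galp[\big]{\inflerel{L}}{\inflerel{R}}{l}{r}$, i.e.\ its two halves. For the left half $\halfgalpl[\big]{\inflerel{L}}{\inflerel{R}}{l}{r}$, I would assume $x \galrel{L} y$, that is $\incodom\app\inflerel{R}\app y$ and $x \lerel{L} r\app y$, and derive $l\app x \lerel{R} y$ as follows. Applying monotonicity of $l$ to $x \lerel{L} r\app y$ gives $l\app x \lerel{R} l\app(r\app y)$, where $l\app(r\app y) = \counit\app y$. Since $\incodom\app\inflerel{R}\app y$ implies $\infield\app\inflerel{R}\app y$, the deflationarity component of $\relequivon\app (\infield\app \inflerel{R})\app \inflerel{R}\app \counit$ yields $\counit\app y \lerel{R} y$. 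Chaining these two bounds via $\trans\app\inflerel{R}$ delivers $l\app x \lerel{R} y$.

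For the right half $\halfgalpr[\big]{\inflerel{L}}{\inflerel{R}}{l}{r}$, I would dually assume $x \flipgalrel{R} y$, that is $\indom\app\inflerel{L}\app x$ and $l\app x \lerel{R} y$, and derive $x \lerel{L} r\app y$. Here monotonicity of $r$ applied to $l\app x \lerel{R} y$ gives $r\app(l\app x) \lerel{L} r\app y$ with $r\app(l\app x) = \unit\app x$, and since $\indom\app\inflerel{L}\app x$ implies $\infield\app\inflerel{L}\app x$, the inflationarity component of $\relequivon\app (\infield\app \inflerel{L})\app \inflerel{L}\app \unit$ yields $x \lerel{L} \unit\app x$. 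Chaining via $\trans\app\inflerel{L}$ delivers $x \lerel{L} r\app y$, completing the Galois property and, with the symmetry observation, the whole Galois equivalence.

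I expect no deep obstacle; the argument is essentially a pair of two-step chains. The points needing care are purely bookkeeping: I must check that the side conditions of the (co)unit relational equivalences are met --- and indeed $\incodom\app\inflerel{R}\app y$ and $\indom\app\inflerel{L}\app x$ each imply the field predicate over which those conditions are stated --- and that transitivity is available unconditionally, since the hypotheses are $\trans\app\inflerel{L}$ and $\trans\app\inflerel{R}$ rather than transitivity relativised to the field, so each chain composes without extra provisos. The one place a slip could occur is keeping the direction of every inequality and each transport aligned, so that monotonicity followed by the appropriate (co)unit bound lands on the correct side.
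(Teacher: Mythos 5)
Your proof is correct: the monotonicity conjuncts are read off directly from the order equivalence, the symmetry of $\orderequivsym$ under swapping $(\inflerel{L},l)$ with $(\inflerel{R},r)$ (exchanging $\unit$ and $\counit$) reduces everything to one Galois connection, and each half of the Galois property follows by monotonicity, the appropriate (co)unit bound over the field predicate, and unrelativised transitivity — exactly the argument the paper defers to its Isabelle formalisation. No gaps; the side conditions ($\incodom\app\inflerel{R}\app y$ resp.\ $\indom\app\inflerel{L}\app x$ implying the field predicate) are handled correctly.
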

\begin{lemma}
Assume
\begin{assumes}(2)
\item $\galequiv[\big]{\inflerel{L}}{\inflerel{R}}{l}{r}$,
\item $\reflon\app\infieldapp{\inflerel{L}}$,
\item $\reflon\app\infieldapp{\inflerel{R}}$.
\end{assumes}
Then $\orderequiv[\big]{\inflerel{L}}{\inflerel{R}}{l}{r}$.
\end{lemma}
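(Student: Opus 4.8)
The plan is to unfold the goal $\orderequiv[\big]{\inflerel{L}}{\inflerel{R}}{l}{r}$ into its four conjuncts and dispatch them one by one. The two monotonicity requirements $\parenths[\big]{\mono{\inflerel{L}}{\inflerel{R}}}\app l$ and $\parenths[\big]{\mono{\inflerel{R}}{\inflerel{L}}}\app r$ are literally part of $\galequiv[\big]{\inflerel{L}}{\inflerel{R}}{l}{r}$ -- they occur in each of the two underlying Galois connections -- so they discharge at once. What remains is to show $\relequivon\app\infieldapp{\inflerel{L}}\app\inflerel{L}\app\unit$ and $\relequivon\app\infieldapp{\inflerel{R}}\app\inflerel{R}\app\counit$; that is, that $\unit = r\comp l$ is both inflationary and deflationary on the field of $\inflerel{L}$, and dually for $\counit = l\comp r$ on the field of $\inflerel{R}$.

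First I would record two consequences of the reflexivity hypotheses. From $\reflon\app\infieldapp{\inflerel{L}}$ we get the self-loop $\infield\app\inflerel{L}\app x\holimplies x\lerel{L}x$, and $x\lerel{L}x$ simultaneously witnesses $\indom\app\inflerel{L}\app x$ and $\incodom\app\inflerel{L}\app x$; thus on the field the predicates $\infield,\indom,\incodom$ coincide, and symmetrically for $\inflerel{R}$. Moreover, monotonicity transports the self-loops along the transport functions: $x\lerel{L}x$ gives $l\app x\lerel{R}l\app x$, and $y\lerel{R}y$ gives $r\app y\lerel{L}r\app y$. In particular every $l\app x$ (for $x$ in the field of $\inflerel{L}$) lies in the field of $\inflerel{R}$, and vice versa.

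With these self-loops in hand, each of the four facts collapses to a single instance of a Galois property. Write $(P_1)$ for $\galp[\big]{\inflerel{L}}{\inflerel{R}}{l}{r}$ and $(P_2)$ for $\galp[\big]{\inflerel{R}}{\inflerel{L}}{r}{l}$, both supplied by $\galequiv[\big]{\inflerel{L}}{\inflerel{R}}{l}{r}$. Instantiating $(P_1)$ at $x$ and $y\define l\app x$ yields $x\lerel{L}r\app(l\app x)\holiff l\app x\lerel{R}l\app x$; the right side is a self-loop, so the inflationary unit law $x\lerel{L}r\app(l\app x)$ follows. Instantiating $(P_2)$ at $y\define l\app x$ and $x$ yields $l\app x\lerel{R}l\app x\holiff r\app(l\app x)\lerel{L}x$, which gives the deflationary unit law $r\app(l\app x)\lerel{L}x$. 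The two counit facts are the unit facts for the swapped data $\galequiv[\big]{\inflerel{R}}{\inflerel{L}}{r}{l}$ -- which equals $\galequiv[\big]{\inflerel{L}}{\inflerel{R}}{l}{r}$ up to reordering the conjunction -- so, since $\reflon\app\infieldapp{\inflerel{R}}$ is also assumed, the identical argument applies verbatim.

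The step I expect to require the most care is the bookkeeping of the $\indom$/$\incodom$ premises that guard each Galois property: the biconditional in $\galp$ is only available on domain/codomain-restricted arguments, so every instantiation above must be justified by the corresponding self-loop. This is exactly where reflexivity is indispensable -- a bare Galois connection yields only the inflationary unit law, and it is the second connection's property $(P_2)$ together with the reflexive self-loops that supplies the deflationary direction, upgrading the connection to a genuine order equivalence.
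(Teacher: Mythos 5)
Your proof is correct: reflexivity supplies self-loops on the fields, monotonicity of $l$ and $r$ transports them so that every $\indom$/$\incodom$ guard of the two Galois properties is discharged, and a single instantiation of $\galp[\big]{\inflerel{L}}{\inflerel{R}}{l}{r}$ (resp.\ $\galp[\big]{\inflerel{R}}{\inflerel{L}}{r}{l}$) then yields each of the four unit/counit laws, while the two monotonicity conjuncts of the order equivalence are immediate from the Galois equivalence. The paper states this lemma without a textual proof (deferring to its Isabelle/HOL formalisation), and your argument is precisely the natural one this reduction takes.
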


\section{Closure Properties}\label{sec:appendixclosure}

\subsection{(Dependent) Function Relator}

In \cref{sec:closuredepfunrel},
we only stated our results
for Galois equivalences on preorders and partial equivalence relations
and the dependent function relator.
In this section,
we show the more general results for Galois connections
for both the (non-dependent) and dependent function relator.
We also clarify the need of the monotone function relator.

\subsubsection{Function Relator}\label{sec:appendixclosurefunrel}

In practice, the relations and functions
we use
% use for transports
are often non-dependent.
The definitions in \labelcref{eq:closuredepfunreldef} then simplify to the standard,
non-dependent function relator and mapper.
Moreover, the closure theorems will have
considerably simpler assumptions.
We hence find it instructive to present the
results for this special case.
Let us fix the following variables:
\begin{align*}
\begin{split}
L_1 &\holhasty \alphaty_1 \purefun \alphaty_1 \purefun \bool,\\
R_1 &\holhasty \alphaty_2 \purefun \alphaty_2 \purefun \bool,\\
\nondepcase{L_2} &\holhasty \betaty_1 \purefun \betaty_1 \purefun \bool,\\
\nondepcase{R_2} &\holhasty \betaty_2 \purefun \betaty_2 \purefun \bool,\\
\end{split}
\begin{split}
l_1 &\holhasty \alphaty_1 \purefun \alphaty_2,\\
r_1 &\holhasty \alphaty_2 \purefun \alphaty_1,\\
\nondepcase{l_2} &\holhasty \betaty_1 \purefun \betaty_2,\\
\nondepcase{r_2} &\holhasty \betaty_2 \purefun \betaty_1.
\end{split}
\end{align*}
Compared to \cref{eq:closuredepfunreldef},
the target relations and transport functions then simplify to
\begin{align*}
\begin{split}
\nondepcase{L} &\define
\parenths[\big]{\monofunrel{\inflerel{L_1}}{\inflerel{\nondepcase{L_2}}}},\\
\nondepcase{l} &\define
\parenths[\big]{\funmap{r_1}{\nondepcase{l_2}}},
\end{split}
\begin{split}
\nondepcase{R} &\define
\parenths[\big]{\monofunrel{\inflerel{R_1}}{\inflerel{\nondepcase{R_2}}}},\\
\nondepcase{r} &\define
\parenths[\big]{\funmap{l_1}{\nondepcase{r_2}}},
\end{split}
\end{align*}
where
% \begin{displaymath}
$\parenths{\funmap{f}{g}} \define \parenths{\depfunmap{\_}{f}{g}}$
is the \emph{(non-dependent) function mapper}.
In other words: $\parenths{\funmap{f}{g}}\app h = g\comp h\comp f$.

\begin{lemma}\label{thm:funrel_galc}
Assume
\begin{assumes}(2)
\item\label{asm:funrel_galcon1}$\galcapp{\inflerel{L_1}}{\inflerel{R_1}}{l_1}{r_1}$,
\item\label{asm:funrel_refl1}$\reflon\app\infieldapp{\inflerel{L_1}}$,
\item $\reflon\app\infieldapp{\inflerel{R_1}}$,
\item\label{asm:funrel_galcon2}$\galcapp{\inflerel{\nondepcase{L_2}}}{\inflerel{\nondepcase{R_2}}}{\nondepcase{l_2}}{\nondepcase{r_2}}$,
\item $\trans\app\inflerel{\nondepcase{L_2}}$,
\item\label{asm:funrel_transr2}$\trans\app\inflerel{\nondepcase{R_2}}$.
\end{assumes}
Then $\galcapp[\big]{\inflerel{\nondepcase{L}}}{\inflerel{\nondepcase{R}}}{\nondepcase{l}}{\nondepcase{r}}$.
\end{lemma}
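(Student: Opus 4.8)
The plan is to unfold the goal $\galcapp[\big]{\inflerel{L}}{\inflerel{R}}{l}{r}$ into its three conjuncts — $\parenths[\big]{\mono{\inflerel{L}}{\inflerel{R}}}\app l$, $\parenths[\big]{\mono{\inflerel{R}}{\inflerel{L}}}\app r$, and $\galp[\big]{\inflerel{L}}{\inflerel{R}}{l}{r}$ — and to treat each in turn, recalling that $l\app f\app x' = l_2\app(f\app(r_1\app x'))$ and $r\app g\app y = r_2\app(g\app(l_1\app y))$. The two monotonicity goals are routine and use neither the reflexivity nor the transitivity assumptions. For $\parenths[\big]{\mono{\inflerel{L}}{\inflerel{R}}}\app l$ I would fix $f,g$ with $L\app f\app g$ and establish $R\app(l\app f)\app(l\app g)$ componentwise: the relator part reduces, for $R_1\app x'\app y'$, to $R_2\app(l_2\app(f\app(r_1\app x')))\app(l_2\app(g\app(r_1\app y')))$, which follows by chaining monotonicity of $r_1$ (giving $L_1\app(r_1\app x')\app(r_1\app y')$), the relator part of $L\app f\app g$, and monotonicity of $l_2$; the two pointwise-monotonicity conjuncts follow the same way from the monotonicity conjuncts of $L\app f\app g$. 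Monotonicity of $r$ is entirely symmetric.

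The substance lies in the Galois property, which I would split into its left and right halves and discharge using the corresponding half Galois properties of the inner connection $\galcapp[\big]{\inflerel{L_2}}{\inflerel{R_2}}{l_2}{r_2}$. For the left half, assume $f\galrel{L}g$, i.e.\ $\incodom\app\inflerel{R}\app g$ and $L\app f\app(r\app g)$, and show $l\app f\lerel{R}g$. The two monotonicity conjuncts of $\parenths[\big]{\monofunrel{\inflerel{R_1}}{\inflerel{R_2}}}\app(l\app f)\app g$ come from the monotonicity chain above (for $l\app f$) and from $\incodom\app\inflerel{R}\app g$ (for $g$). For the relator conjunct, fix $R_1\app x'\app y'$; it then suffices, by the half Galois property on the left of the inner connection, to show $(f\app(r_1\app x'))\galrel{L_2}(g\app y')$. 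Its codomain part $\incodom\app\inflerel{R_2}\app(g\app y')$ follows from reflexivity of $R_1$ on its field (giving $R_1\app y'\app y'$) and monotonicity of $g$. Its ordering part $L_2\app(f\app(r_1\app x'))\app(r_2\app(g\app y'))$ is where the work concentrates: monotonicity of $r_1$ and the relator part of $L\app f\app(r\app g)$ first yield $L_2\app(f\app(r_1\app x'))\app(r_2\app(g\app(l_1\app(r_1\app y'))))$; the counit inequality $R_1\app(l_1\app(r_1\app y'))\app y'$ of the first connection (a consequence of its half Galois property on the left together with reflexivity on the fields), followed by monotonicity of $g$ and then $r_2$, yields $L_2\app(r_2\app(g\app(l_1\app(r_1\app y'))))\app(r_2\app(g\app y'))$; and transitivity of $L_2$ closes the gap. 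This is precisely the point where the assumption $\trans\app\inflerel{L_2}$ is indispensable.

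The right half is dual: assuming $f\flipgalrel{R}g$, i.e.\ $\indom\app\inflerel{L}\app f$ and $R\app(l\app f)\app g$, I would show $f\lerel{L}r\app g$ via the half Galois property on the right of the inner connection, now bridging the analogous gap with the unit inequality $L_1\app u\app(r_1\app(l_1\app u))$ of the first connection and transitivity of $R_2$ (assumption~\cref{asm:funrel_transr2}). I expect the main obstacle to be exactly this relator conjunct of the Galois property: the naive application of the inner half Galois property produces the argument pre-composed with the unit or counit of the \emph{outer} connection rather than the exact argument, and the whole difficulty is recognising that monotonicity of $g$ and $r_2$ (respectively $l_2$), together with transitivity of the inner relation, absorbs this discrepancy — which is what motivates the reflexivity and transitivity hypotheses that a total Galois connection would render superfluous.
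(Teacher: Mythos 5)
Your proposal is correct and follows essentially the same route as the paper's proof: unfold the goal into its monotonicity and Galois-property conjuncts, discharge the monotonicity parts by chaining $r_1$/$f$/$l_2$ (resp.\ $l_1$/$g$/$r_2$), and for each half of the Galois property fix $R_1$-related (resp.\ $L_1$-related) inputs, apply the outer relator hypothesis, derive the counit (resp.\ unit) inequality of the first connection from its Galois property plus reflexivity on the field, and absorb the resulting discrepancy with monotonicity and transitivity of the inner relation. The only difference is a dual ordering of the same steps: in the left half you absorb the counit slack on the $L_2$-side before invoking the inner half Galois property (using $\trans\app\inflerel{\nondepcase{L_2}}$), whereas the paper invokes it first and absorbs the slack on the $R_2$-side (using $\trans\app\inflerel{\nondepcase{R_2}}$); both halves together consume both transitivity assumptions either way, and you are in fact slightly more careful than the paper in discharging the $\incodom$ side condition that its proof relegates to a footnote.
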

\begin{proof}
The theorem is a direct consequence of
\cref{thm:depfunrel_galc},
but it is instructive to consider
the proof of this simpler theorem first.
We only show the case for
$\halfgalpl[\big]{\inflerel{\nondepcase{L}}}{\inflerel{\nondepcase{R}}}{\nondepcase{l}}{\nondepcase{r}}$.
The other cases are similar.
Assume
\begin{assumes}[label=\alph*,ref=\alph*](3)
\item\label{asm:funrel_nondepfunincodom}$\incodom\app \inflerel{\nondepcase{R}}\app g$,
\item\label{asm:funrel_nondepfunle}$f \lerel{\nondepcase{L}} \nondepcase{r}\app g$,
\item\label{asm:funrel_nondepfunr1}$x_1'\lerel{R_1} x_2'$.
\end{assumes}
Our goal is
$\nondepcase{l}\app f\app x_1' \lerel{\nondepcase{R_2}} g\app x_2'$.
Due to monotonicity of $r_1$ (\cref{asm:funrel_galcon1}),
we get
\makebox{$r_1\app x_1'\lerel{L_1} r_1\app x_2'$}.
Due to \cref{asm:funrel_nondepfunle},
we get
\begin{displaymath}
f\app (r_1\app x_1')\lerel{\nondepcase{L_2}} \nondepcase{r}\app g\app (r_1\app x_2')=\nondepcase{r_2}\app \parenths[\big]{g\app (\counit_1\app x_2')}.
\end{displaymath}
Since
$\halfgalpl[\big]{\inflerel{\nondepcase{L_2}}}{\inflerel{\nondepcase{R_2}}}{\nondepcase{l_2}}{\nondepcase{r_2}}$
(\cref{asm:funrel_galcon2}),
we get
\begin{displaymath}
\nondepcase{l_2}\app \parenths[\big]{f\app (r_1\app x_1')}
= \nondepcase{l}\app f\app x_1'\lerel{\nondepcase{R_2}} g\app (\counit_1\app x_2').\footnote{%
Strictly speaking, we first need to show that
$\incodom\app\inflerel{R_2}\app\parenths[\big]{g\app(\counit_1\app x_2')}$,
but we omit that technical step here.}
\end{displaymath}
Due to transitivity (\cref{asm:funrel_transr2}),
it remains to show that
$g\app (\counit_1\app x_2')\lerel{\nondepcase{R_2}} g\app x_2'$.
This follows from the first Galois connection,
reflexivity of $\inflerel{R_1}$,
monotonicity of $g$,
and $\incodom\app\inflerel{R_1}\app x_2'$
(\crefns{asm:funrel_galcon1,asm:funrel_refl1,asm:funrel_nondepfunincodom,asm:funrel_nondepfunr1}).
\end{proof}
Specialising \cref{thm:depfunrel_galc_galreleq} to the non-dependent function relator yields:
\begin{lemma}
Assume
\begin{assumes}(2)
\item$\galcapp{\inflerel{L_1}}{\inflerel{R_1}}{l_1}{r_1}$,
\item $\reflon\app\infieldapp{\inflerel{L_1}}$,
\item $\parenths[\big]{\mono{\inflerel{\nondepcase{R_2}}}{\inflerel{\nondepcase{L_2}}}}\app \nondepcase{r_2}$,
\item$\trans\app\inflerel{\nondepcase{L_2}}$,
\item $\indom\app \inflerel{\nondepcase{L}}\app f$,
\item $\incodom\app \inflerel{\nondepcase{R}}\app g$.
\end{assumes}
Then
$f \galrel{\nondepcase{L}} g\holiff
\parenths[\big]{\funrel{\infgalrel{L_1}}{\infgalrel{\nondepcase{L_2}}}}f\app g$.
\end{lemma}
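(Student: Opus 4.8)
The plan is to observe that this lemma is the non-dependent instance of \cref{thm:depfunrel_galc_galreleq}, obtained by letting the dependent relations $\nondepcase{L_2},\nondepcase{R_2}$ (and hence $\nondepcase{l_2},\nondepcase{r_2}$) be constant in their functor parameters. To keep the argument self-contained, I would prove the bi-implication directly by unfolding both sides. The left-hand side $f \galrel{\nondepcase{L}} g$ unfolds, using assumption~(6), to the single condition $f \lerel{\nondepcase{L}} \nondepcase{r}\app g$, i.e.\ $\parenths[\big]{\monofunrel{\inflerel{L_1}}{\inflerel{\nondepcase{L_2}}}}\app f\app (\nondepcase{r}\app g)$ with $\nondepcase{r}\app g = \nondepcase{r_2}\comp g\comp l_1$. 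The right-hand side reads $\forall x\app x'.\; x \galrel{L_1} x' \holimplies (f\app x) \galrel{\nondepcase{L_2}} (g\app x')$. Throughout I would use that $\incodom\app\inflerel{\nondepcase{R}}\app g$ forces $g$ to be monotone from $\inflerel{R_1}$ to $\inflerel{\nondepcase{R_2}}$ (it is the codomain member of a monotone function relator), and that $\indom\app\inflerel{\nondepcase{L}}\app f$ (assumption~(5)) likewise forces $f$ to be monotone from $\inflerel{L_1}$ to $\inflerel{\nondepcase{L_2}}$.

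For the forward direction I would take $x \galrel{L_1} x'$, so $\incodom\app\inflerel{R_1}\app x'$ and $x \lerel{L_1} r_1\app x'$, and feed $x \lerel{L_1} r_1\app x'$ into the function-relator component of $f \lerel{\nondepcase{L}} \nondepcase{r}\app g$. This yields $f\app x \lerel{\nondepcase{L_2}} \nondepcase{r_2}\app\parenths[\big]{g\app(\counit_1\app x')}$, where $\counit_1 = l_1\comp r_1$. The goal $(f\app x)\galrel{\nondepcase{L_2}}(g\app x')$ instead asks for $f\app x \lerel{\nondepcase{L_2}} \nondepcase{r_2}\app(g\app x')$, so the crux is to bridge $\counit_1\app x'$ and $x'$ --- exactly the mismatch appearing at the end of the proof of \cref{thm:funrel_galc}. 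Here I would invoke that the counit is deflationary on the codomain of $\inflerel{R_1}$, giving $\counit_1\app x' \lerel{R_1} x'$; this is the one place assumption~(2), reflexivity of $\inflerel{L_1}$, is used (together with monotonicity of $r_1$, to place $r_1\app x'$ in the field of $\inflerel{L_1}$ so that the left half of the Galois property applies). Monotonicity of $g$ then gives $g\app(\counit_1\app x')\lerel{\nondepcase{R_2}} g\app x'$, monotonicity of $\nondepcase{r_2}$ (assumption~(3)) lifts this through $\nondepcase{r_2}$, and transitivity of $\inflerel{\nondepcase{L_2}}$ (assumption~(4)) closes the gap; the same inequality $g\app(\counit_1\app x')\lerel{\nondepcase{R_2}} g\app x'$ simultaneously witnesses $\incodom\app\inflerel{\nondepcase{R_2}}\app(g\app x')$, completing $(f\app x)\galrel{\nondepcase{L_2}}(g\app x')$.

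For the backward direction I must establish the three conjuncts of $f \lerel{\nondepcase{L}} \nondepcase{r}\app g$. Monotonicity of $f$ comes from $\indom\app\inflerel{\nondepcase{L}}\app f$, and monotonicity of $\nondepcase{r}\app g = \nondepcase{r_2}\comp g\comp l_1$ follows by composing monotonicity of $l_1$, $g$, and $\nondepcase{r_2}$. For the function-relator conjunct I would fix $x_1 \lerel{L_1} x_2$ and apply the hypothesis at the witness $x' \define l_1\app x_2$. The key point is that $x_1 \galrel{L_1} (l_1\app x_2)$: by \cref{lem:galreliffalt} this is equivalent to $x_1 \flipgalrel{R_1}(l_1\app x_2)$, i.e.\ $\indom\app\inflerel{L_1}\app x_1$ and $l_1\app x_1 \lerel{R_1} l_1\app x_2$, both immediate from $x_1 \lerel{L_1} x_2$ and monotonicity of $l_1$. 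Routing through the half Galois property this way is precisely what lets me avoid assuming transitivity of $\inflerel{L_1}$. The hypothesis then gives $(f\app x_1)\galrel{\nondepcase{L_2}}(g\app(l_1\app x_2))$, whose second conjunct is exactly $f\app x_1 \lerel{\nondepcase{L_2}} \nondepcase{r_2}\app(g\app(l_1\app x_2)) = (\nondepcase{r}\app g)\app x_2$, as required.

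I expect the forward direction's counit bridge to be the main obstacle: it is where assumptions~(2)--(4) all conspire, and where one must check that the deflationarity $\counit_1\app x' \lerel{R_1} x'$ is genuinely available --- which rests on reflexivity of $\inflerel{L_1}$ and monotonicity of $r_1$ rather than on any property of $\inflerel{R_1}$, since the hypotheses deliberately impose no reflexivity on $\inflerel{R_1}$. The backward direction is comparatively routine once \cref{lem:galreliffalt} is used to sidestep transitivity of $\inflerel{L_1}$.
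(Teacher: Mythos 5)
Your proposal is correct, and it takes a genuinely different route from the paper: the paper's entire proof of this lemma is the remark that it arises by specialising \cref{thm:depfunrel_galc_galreleq} to constant dependent parameters (the proof of that dependent theorem being deferred to the formalisation), whereas you, after noting that plan, give a self-contained direct proof. Your direct argument is sound. The forward direction's counit bridge --- establishing $\counit_1\app x' \lerel{R_1} x'$ from reflexivity of $\inflerel{L_1}$ on its field together with the left half of the Galois property, then composing with monotonicity of $g$ and $r_2$ and closing with transitivity of $\inflerel{\nondepcase{L_2}}$ --- is precisely the technique the paper itself uses in its proof of \cref{thm:funrel_galc}, and the backward direction's choice of witness $x'\define l_1\app x_2$, justified through \cref{lem:galreliffalt}, is the right device for avoiding any transitivity assumption on $\inflerel{L_1}$. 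As for what each route buys: the paper's is a one-line reuse of the general dependent machinery, while yours exposes exactly where each hypothesis enters (reflexivity of $\inflerel{L_1}$ only for counit deflationarity, no property of $\inflerel{R_1}$ anywhere) and is arguably more faithful to the stated hypotheses. Indeed, if one literally instantiates \cref{thm:depfunrel_galc_galreleq} with constant dependent variables, its monotonicity assumption on the parameters of $r_2$ (its seventh assumption) degenerates to the reflexivity requirement $r_2\app y' \lerel{\nondepcase{L_2}} r_2\app y'$ for every $y'$ in the field of $\inflerel{\nondepcase{R_2}}$, which does not obviously follow from assumptions (1)--(6) of the present lemma: monotonicity of $r_2$ and transitivity of $\inflerel{\nondepcase{L_2}}$ only place $r_2\app y'$ in the field of $\inflerel{\nondepcase{L_2}}$, not on its diagonal. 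Your direct proof shows the lemma holds regardless, so it covers a point that the paper's one-line derivation glosses over and that presumably only the preciser formalised statement resolves.
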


\subsubsection{Dependent Function Relator}\label{sec:appendixclosuredepfunrel}

As in \cref{thm:depfunrel_galequiv},
the closure theorem
requires monotonicity conditions for each of the dependent variables
(\cref{asm:depfunrel_monoleft2,asm:depfunrel_monoright2,asm:depfunrel_monol2,asm:depfunrel_monor2} below).
Morally speaking, these assumptions say that
\begin{enumerate*}
\item $L_2$ is antimonotone in its first and restricted antimonotone in its second parameter,
\item $R_2$ is restricted monotone in its first and monotone in its second parameter, and
\item $l_2, r_2$ are monotone in both parameters.
\end{enumerate*}
\begin{theorem}\label{thm:depfunrel_galc}
Define $\unit_1\define \unitconst\app l_1\app r_1$ and
$\counit_1\define \counitconst\app l_1\app r_1$.
Assume
\begin{assumes}(1)
\item\label{asm:depfunrel_galcon1}$\galcapp{\inflerel{L_1}}{\inflerel{R_1}}{l_1}{r_1}$,
\item $\reflon\app\infieldapp{\inflerel{L_1}}$,
\item\label{asm:depfunrel_relfr1}$\reflon\app\infieldapp{\inflerel{R_1}}$,
\item\label{asm:depfunrel_galcon2}if $x \galrel{L_1} x'$ then $\galcapp{\inflerel{L_2\app x\app (r_1\app x')}}{\inflerel{R_2\app (l_1\app x)\app x'}}{(\depfunleft{x'}{x})}{(\depfunright{x}{x'})}$,
\item\label{asm:depfunrel_transl2}if $x_1 \lerel{L_1} x_2$ then $\trans\app\inflerel{L_2\app x_1\app x_2}$,
\item\label{asm:depfunrel_transr2}if $x_1' \lerel{R_1} x_2'$ then $\trans\app\inflerel{R_2\app x_1'\app x_2'}$,
\item\label{asm:depfunrel_monoleft2}%
% $\parenths[\big]{\depmono{\_}{x_2}{\inflerel{L_1}}{\depfunrelrest{x_3}{x_4}{\inflerel{L_1}}{x_2\lerel{L_1}x_3\land x_4\lerel{L_1}\unit_1\app x_3}{(\ge)}}}\app L_2$,
if $x_1 \lerel{L_1} x_2 \lerel{L_1} x_3 \lerel{L_1} x_4\lerel{L_1} \unit_1\app x_3$ then
$\inflerel{L_2\app x_2\app x_4} \leq \inflerel{L_2\app x_1\app x_3}$,
\item\label{asm:depfunrel_monoright2}%
%$\parenths[\big]{\depmonorest{x_1'}{x_2'}{\inflerel{R_1}}{\counit_1\app x_2'\lerel{R_1}x_1'}{\depfunrelrest{x_3'}{\_}{\inflerel{R_1}}{x_2'\lerel{R_1}x_3'}{(\le)}}}\app R_2$,
if $\counit_1\app x_2'\lerel{R_1} x_1' \lerel{R_1} x_2' \lerel{R_1} x_3' \lerel{R_1} x_4'$ then
$\inflerel{R_2\app x_1'\app x_3'} \leq \inflerel{R_2\app x_2'\app x_4'}$,
\item\label{asm:depfunrel_monol2}%
% $\bigl(\depmono{x_1'}{x_2'}{\inflerel{R_1}}{\depfunrelrest{x_1}{x_2}{\inflerel{L_1}}{x_2\galrel{L_1}x_1'}{}}$\\
% \null\quad$\funrelpred{\infield\app\inflerel{L_2\app x_1\app (r_1\app x_2')}}{\inflerel{R_2\app (l_1\app x_1)\app x_2'}}\bigr)\app l_2$,
if $x_1 \lerel{L_1} x_2 \galrel{L_1} x_1' \lerel{R_1} x_2'$ and $\infield\app\inflerel{L_2\app x_1\app (r_1\app x_2')}\app y$ then\\
$\parenths[\big]{\depfunleft{x_1'}{x_1}\app y} \lerel{R_2\app (l_1\app x_1)\app x_2'} \parenths[\big]{\depfunleft{x_2'}{x_2}\app y}$,
\item\label{asm:depfunrel_monor2}%
% $\big(\depmono{x_1}{x_2}{\inflerel{L_1}}{\depfunrelrest{x_1'}{x_2'}{\inflerel{R_1}}{x_2\galrel{L_1}x_1'}{}}$\\
% \null\quad$\funrelpred{\infield\app\inflerel{R_2\app (l_1\app x_1)\app x_2'}}{\inflerel{L_2\app x_1\app (r_1\app x_2')}}\bigr)\app r_2$.
if $x_1 \lerel{L_1} x_2 \galrel{L_1} x_1' \lerel{R_1} x_2'$ and $\infield\app\inflerel{R_2\app (l_1\app x_1)\app x_2'}\app y'$ then\\
$\parenths[\big]{\depfunright{x_1}{x_1'}\app y'} \lerel{L_2\app x_1\app (r_1\app x_2')} \parenths[\big]{\depfunright{x_2}{x_2'}\app y'}$.
\end{assumes}
Then $\galcapp[\big]{\inflerel{L}}{\inflerel{R}}{l}{r}$.
\end{theorem}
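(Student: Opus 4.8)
The plan is to unfold $\galcapp[\big]{\inflerel{L}}{\inflerel{R}}{l}{r}$ into its four constituent obligations --- monotonicity of $l$ from $\inflerel{L}$ to $\inflerel{R}$, monotonicity of $r$ from $\inflerel{R}$ to $\inflerel{L}$, and the two half Galois properties $\halfgalpl[\big]{\inflerel{L}}{\inflerel{R}}{l}{r}$ and $\halfgalpr[\big]{\inflerel{L}}{\inflerel{R}}{l}{r}$ --- and to discharge each separately. These obligations come in dual pairs, so I would prove the ``left''/$l$ versions in full and then obtain the ``right''/$r$ versions by the analogous argument with the dual ingredients: note that the single assumption \cref{asm:depfunrel_galcon1} supplies both half Galois properties and both monotonicities of the base connection, and the monotonicity conditions are themselves supplied as dual pairs \crefns{asm:depfunrel_monoleft2,asm:depfunrel_monoright2} and \crefns{asm:depfunrel_monol2,asm:depfunrel_monor2}. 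Where the translation between the Galois relator $\galrel{L_1}$ and its dual $\flipgalrel{R_1}$ is needed, I would invoke \cref{lem:galreliffalt}.

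The representative case is $\halfgalpl$, which generalises the non-dependent proof of \cref{thm:funrel_galc}. Assume $f \galrel{L} g$, i.e.\ $\incodom\app\inflerel{R}\app g$ and $f \lerel{L} r\app g$. Since $\inflerel{R}$ is a monotone function relator, the goal $l\app f \lerel{R} g$ splits into monotonicity of $l\app f$ and of $g$, together with the core statement that $x_1' \lerel{R_1} x_2'$ implies $(l\app f)\app x_1' \lerel{R_2\app x_1'\app x_2'} g\app x_2'$. For the latter I would first use monotonicity of $r_1$ to obtain $r_1\app x_1' \lerel{L_1} r_1\app x_2'$, feed this into the dependent-function-relator part of $f \lerel{L} r\app g$, and unfold $r\app g$ to expose $r_2$ evaluated at the counit-shifted point $\counit_1\app x_2' = l_1\app(r_1\app x_2')$. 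Applying the inner half Galois property from \cref{asm:depfunrel_galcon2} at a suitable instance (justified using \cref{asm:depfunrel_galcon1} and reflexivity \cref{asm:depfunrel_relfr1}) pushes $l_2$ across; transitivity of $\inflerel{R_2}$ (\cref{asm:depfunrel_transr2}), combined with $g\app(\counit_1\app x_2') \lerel{R_2\app x_1'\app x_2'} g\app x_2'$ (from monotonicity of $g$ and the base Galois facts), then closes the chain.

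The monotonicity obligation for $l$ follows a parallel computation: given $\inflerel{L}$-related $f_1,f_2$ and $x_1' \lerel{R_1} x_2'$, I would transport $f_1\app(r_1\app x_1') \lerel{L_2\app(r_1\app x_1')\app(r_1\app x_2')} f_2\app(r_1\app x_2')$ through the monotonicity of $l_2$ (\cref{asm:depfunrel_monol2}), and establish that each $l\app f_i$ is itself monotone by the same device. The order-theoretic closure properties (reflexivity, transitivity) and the monotonicity of the auxiliary functions are routine throughout.

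In all of these, the genuinely new difficulty relative to the non-dependent case is \emph{parameter realignment}: after applying the base and inner connections, the inner relations carry ``natural'' parameters such as $r_1\app x_1'$, $r_1\app x_2'$, and $\counit_1\app x_2'$, whereas the goal demands $\inflerel{L_2\app x_1\app x_2}$ and $\inflerel{R_2\app x_1'\app x_2'}$ at the original parameters. Bridging this gap is exactly the purpose of the antimonotonicity conditions \crefns{asm:depfunrel_monoleft2,asm:depfunrel_monoright2}, and I expect the main effort to lie in assembling the requisite $\inflerel{L_1}$- and $\inflerel{R_1}$-chains (from reflexivity of the fields, the Galois property, and monotonicity of $l_1,r_1$) so that the hypotheses of those conditions are met. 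Tracking precisely which unit- and counit-shifted parameters appear, and verifying these chains, is the main obstacle; the remaining reasoning is a faithful, parameter-indexed replay of the non-dependent proof.
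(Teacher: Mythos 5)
Your high-level skeleton---splitting $\galcapp[\big]{\inflerel{L}}{\inflerel{R}}{l}{r}$ into the two monotonicity obligations and the two half Galois properties, arguing by duality, and identifying parameter realignment via \crefns{asm:depfunrel_monoleft2,asm:depfunrel_monoright2,asm:depfunrel_monol2,asm:depfunrel_monor2} as the crux---matches the paper. But the concrete chain you sketch for the representative case $\halfgalpl[\big]{\inflerel{L}}{\inflerel{R}}{l}{r}$ (exactly the case the paper proves in detail) does not go through. You open, as in the non-dependent proof of \cref{thm:funrel_galc}, by feeding $r_1\app x_1' \lerel{L_1} r_1\app x_2'$ into the relator part of $f \lerel{L} r\app g$, which yields
\begin{displaymath}
f\app (r_1\app x_1')\;\lerel{L_2\app(r_1\app x_1')\app(r_1\app x_2')}\;\depfunright{(r_1\app x_2')}{(\counit_1\app x_2')}\app \parenths[\big]{g\app (\counit_1\app x_2')}.
\end{displaymath}
From here no instance of \cref{asm:depfunrel_galcon2} can be applied. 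The instance producing the left adjoint $\depfunleft{x_1'}{(r_1\app x_1')}$ that appears in the goal is the one at $(r_1\app x_1',x_1')$; it lives between $\inflerel{L_2\app(r_1\app x_1')\app(r_1\app x_1')}$ and $\inflerel{R_2\app(\counit_1\app x_1')\app x_1'}$ and has right adjoint $\depfunright{(r_1\app x_1')}{x_1'}$. Realigning the displayed fact to that instance needs either $\inflerel{L_2\app(r_1\app x_1')\app(r_1\app x_2')}\le\inflerel{L_2\app(r_1\app x_1')\app(r_1\app x_1')}$---which \cref{asm:depfunrel_monoleft2} only yields under $r_1\app x_2'\lerel{L_1}\unit_1\app (r_1\app x_1')$, the wrong direction---or antitone behaviour of $r_2$ in its first parameter (from $r_1\app x_2'$ down to $r_1\app x_1'$), which \cref{asm:depfunrel_monor2} does not provide, since it requires $x_1\lerel{L_1}x_2$. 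The instance at $(r_1\app x_1',x_2')$ hits the same wall: its right adjoint is $\depfunright{(r_1\app x_1')}{x_2'}$, and rewriting $\depfunright{(r_1\app x_2')}{(\counit_1\app x_2')}$ into it again needs antitonicity in the first parameter. Your closing step has the same defect: monotonicity of $g$ gives $g\app(\counit_1\app x_2')\lerel{R_2\app(\counit_1\app x_2')\app x_2'}g\app x_2'$, and upgrading this to $\inflerel{R_2\app x_1'\app x_2'}$ via \cref{asm:depfunrel_monoright2} requires $\counit_1\app x_2'\lerel{R_1}x_1'$, which already fails for the legitimate instance $\inflerel{L_1}=\inflerel{R_1}=(\le)$ on $\intty$, $l_1=r_1=\id$, $x_1'<x_2'$; what is actually derivable is only $\counit_1\app x_1'\lerel{R_1}x_1'$ and $\counit_1\app x_1'\lerel{R_1}\counit_1\app x_2'\lerel{R_1} x_2'$.

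The missing idea is the paper's opening move: do \emph{not} use $x_1'\lerel{R_1}x_2'$ yet. Apply reflexivity of $\inflerel{R_1}$ (\cref{asm:depfunrel_relfr1}) to get $x_1'\lerel{R_1}x_1'$ and feed the \emph{reflexive} pair $r_1\app x_1'\lerel{L_1}r_1\app x_1'$ into the relator, so every relation stays parameterised at $x_1'$. Then \cref{asm:depfunrel_monor2} applies legitimately---its first parameters coincide, and the side conditions $r_1\app x_1'\galrel{L_1}\counit_1\app x_1'$ and $\counit_1\app x_1'\lerel{R_1}x_1'$ follow from the unit/counit laws of \cref{asm:depfunrel_galcon1}---to shift $r_2$'s second parameter from $\counit_1\app x_1'$ to $x_1'$; after transitivity (\cref{asm:depfunrel_transl2}) the inner connection at $(r_1\app x_1',x_1')$ yields $(l\app f\app x_1')\lerel{R_2\app(\counit_1\app x_1')\app x_1'}g\app(\counit_1\app x_1')$, and monotonicity of $g$ with transitivity (\cref{asm:depfunrel_transr2}) gives $(l\app f\app x_1')\lerel{R_2\app(\counit_1\app x_1')\app x_1'}g\app x_1'$. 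Only now does \cref{asm:depfunrel_monoright2} fire---its hypothesis is the deflationary $\counit_1\app x_1'\lerel{R_1}x_1'$, not the false $\counit_1\app x_2'\lerel{R_1}x_1'$---to move to $\inflerel{R_2\app x_1'\app x_2'}$, and a final use of $x_1'\lerel{R_1}x_2'$ with monotonicity of $g$ and transitivity closes the proof. In short, the monotonicity conditions of the theorem are tailored to this ``stay at $x_1'$, shift at the very end'' ordering; the ordering you propose, a direct replay of the non-dependent proof, demands realignments in directions those conditions do not supply.
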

\begin{proof}
% As the proof details are quite technical,
We will only prove that
$\halfgalpl{\inflerel{L}}{\inflerel{R}}{l}{r}$.
This should primarily illustrate how the monotonicity
requirements arise as part of the proof.
The rest of the proof can be found in our formalisation.
It is also instructive to first consider the proof for the
non-dependent function relator as
it uses the same core ideas (see \cref{thm:funrel_galc}).

\begin{figure}[t]
\centering
\resizebox{0.90\textwidth}{!}{%
  \begin{tikzpicture}
  \pic (alpha2) {righttype={$\inflerel{R_1}$}};

  \node[element] (x1') at ($(alpha2type) + (0, 0.5)$) {};
  \node[above=\labeldistance of x1'] {$x_1'$};
  \node[element] (x2') at ($(x1') + (0, -\equivdistance)$) {};
  \node[below=\labeldistance of x2'] {$x_2'$};

  \pic (alpha1) at ($(alpha2type) + (0, \defaulttypedistancey)$) {lefttype={{$\inflerel{L_1}$}}};

  \node[element] (r1x1') at ($(x1') + (0,\defaulttypedistancey)$) {};
  \node[left=\labeldistance of r1x1',yshift=0.1cm] {$r_1\app x_1'$};

  \node[type] (beta1type) at ($(alpha1type) + (1.2*\defaulttypedistance, 0)$) {};
  \pic (beta1label) [above=\labeldistance of beta1type,xshift=-1.3cm] {labelrel={leftlabelstyle2}{$\inflerel{L_2\app(r_1\app x_1')\app(r_1\app x_1')}$}};

  \node[element] (fr1x1') at ($(r1x1') + (1.2*\defaulttypedistance,0.5*\equivdistance)$) {};
  \node[above=\labeldistance of fr1x1'] {$f\app (r_1\app x_1')$};
  \node[element] (rgr1x1') at ($(fr1x1') + (0,-\equivdistance)$) {};
  \node[right=\labeldistance of rgr1x1'] {$\depfunright{(r_1\app x_1')}{(\counit_1\app x_1')}\app \parenths{g\app(\counit_1\app x_1')}$};
  \node[element] (rgr1x1'adapt) at ($(rgr1x1') + (0,-\equivdistance)$) {};
  \node[right=\labeldistance of rgr1x1'adapt] {$\depfunright{(r_1\app x_1')}{x_1'}\app \parenths{g\app(\counit_1\app x_1')}$};

  \node[type] (beta2type) at ($(beta1type) + (0,-\defaulttypedistancey)$) {};
  \pic (beta2label) [above=\labeldistance of beta2type,xshift=-0.9cm] {labelrel={rightlabelstyle2}{$\inflerel{R_2\app(\counit_1\app x_1')\app x_1'}$}};

  \node[element] (lfx1') at ($(fr1x1') + (0,-\defaulttypedistancey,0)$) {};
  \node[above=\labeldistance of lfx1'] {$l\app f\app x_1'$};
  \node[element] (gcounitx1') at ($(lfx1') + (0,-\equivdistance,0)$) {};
  \node[right=\labeldistance of gcounitx1'] {$g(\counit_1\app x_1')$};
  \node[element] (gx1') at ($(gcounitx1') + (0,-\equivdistance,0)$) {};
  \node[below=\labeldistance of gx1'] {$g\app x_1'$};

  \node[type] (beta2type') at ($(beta2type) + (1.2*\defaulttypedistance,0)$) {};
  \pic (beta2label') [above=\labeldistance of beta2type',xshift=-0.9cm] {labelrel={rightlabelstyle2}{$\inflerel{R_2\app x_1'\app x_2'}$}};

  \node[element] (lfx1'adapt) at ($(lfx1') + (1.2*\defaulttypedistance,0)$) {};
  \node[above=\labeldistance of lfx1'adapt] {$l\app f\app x_1'$};
  \node[element] (gx1'adapt) at ($(gx1') + (1.2*\defaulttypedistance,0.5*\equivdistance)$) {};
  \node[right=\labeldistance of gx1'adapt] {$g\app x_1'$};
  \node[element] (gx2') at ($(gx1'adapt) + (0,-\equivdistance)$) {};
  \node[below=\labeldistance of gx2'] {$g\app x_2'$};

  \draw[->,relstyle,rightstyle] (x1') to (x2');
  \draw[->,loop right,loopstyle,rightstyle,relstyle] (x1') to (x1');
  \draw[->,loop above,loopstyle,leftstyle,relstyle] (r1x1') to (r1x1');
  \draw[->] (x1') edge[rightstyle,edgestyle,bend left=55,"$r_1$"] (r1x1');
  \draw[->] (r1x1') edge[asmedge,"$f$",pos=0.55] (fr1x1');
  \draw[->] (r1x1') edge[asmedge,"$r\app g$" swap,pos=0.60] (rgr1x1');
\begin{pgfinterruptboundingbox}
  \draw[->] (x1') edge[asmedge,bend right=75,"$g$",pos=0.3] (gx1'adapt);
  \draw[->] (x2') edge[asmedge,bend right=42] (gx2');
\end{pgfinterruptboundingbox}
  \node[none] (boundingboxbottom) at ($(beta2type.south) + (0, -\typeequivlabeldistance)$) {};
  \draw[->,relstyle2,leftstyle2] (fr1x1') to (rgr1x1');
  \draw[->,relstyle2,leftstyle2] (rgr1x1') to (rgr1x1'adapt);
  \draw[->,relstyle2,rightstyle2] (lfx1') to (gcounitx1');
  \draw[->,relstyle2,rightstyle2] (gcounitx1') to (gx1');
  \draw[->,relstyle2,rightstyle2] (lfx1'adapt) to (gx1'adapt);
  \draw[->,relstyle2,rightstyle2] (gx1'adapt) to (gx2');
  \draw[->] (lfx1') edge[asmedge,"$Asm.\ \labelcref{asm:depfunrel_monoright2}$"] (lfx1'adapt);
  \draw[->] (gx1') edge[asmedge,"$Asm.\ \labelcref{asm:depfunrel_monoright2}$" swap,pos=0.32] (gx1'adapt);
  \draw[->] (fr1x1') edge[asmedge,bend left=-65,"$Asm.\ \labelcref{asm:depfunrel_galcon2}$" swap,pos=0.60] (lfx1');
  \draw[->] (rgr1x1'adapt) edge[asmedge,bend left=-80] (gcounitx1');
\end{tikzpicture}}
\caption{Proof of $\halfgalpl{\inflerel{L}}{\inflerel{R}}{l}{r}$ as explained in \cref{thm:depfunrel_galc}.
Types are drawn solid, black,
transport functions dashed,
relations dotted and dashed-dotted.
}\label{fig:depfunrel_galc}
\Description[Proof of $\halfgalpl{\inflerel{L}}{\inflerel{R}}{l}{r}$ as explained in \cref{thm:depfunrel_galc}.]{$\halfgalpl{\inflerel{L}}{\inflerel{R}}{l}{r}$ as explained in \cref{thm:depfunrel_galc}. The proof goes via all involved relations and uses the monotonicity assumptions for $r_2$ and $R_2$.}
\end{figure}
% The core proof idea is similar as in \cref{thm:funrel_galc},
% though in order to make the parameters of the dependent variables line up,
% we have to add and modify a few steps.
A visualisation of the following proof can be found in
\cref{fig:depfunrel_galc}.
Assume
\begin{assumes}[label=\alph*,ref=\alph*](3)
\item\label{asm:depfunrel_depfunincodom}$\incodom\app\inflerel{R}\app g$,
\item\label{asm:depfunrel_depfunle}$f \lerel{L} r\app g$,
\item\label{asm:depfunrel_depfunr1}$x_1'\lerel{R_1} x_2'$.
\end{assumes}
We have to show that
$\parenths[\big]{l\app f\app x_1'} \lerel{R_2\app x_1'\app x_2'} \parenths[\big]{g\app x_2'}$,
which unfolds to
\begin{displaymath}
\parenths[\big]{\depfunleft{x_1'}{(r_1\app x_1')}\app \parenths[\big]{f\app (r_1\app x_1')}}
\lerel{R_2\app x_1'\app x_2'} \parenths[\big]{g\app x_2'}.
\end{displaymath}
First we apply reflexivity of $\inflerel{R_1}$
to obtain $x_1'\lerel{R_1} x_1'$.
With monotonicity of $r_1$ (\cref{asm:depfunrel_galcon1}),
we get
$r_1\app x_1'\lerel{L_1} r_1\app x_1'$.
Due to \cref{asm:depfunrel_depfunle},
we get
\begin{displaymath}
\parenths[\big]{f\app (r_1\app x_1')}\lerel{L_2\app(r_1\app x_1')\app(r_1\app x_1')} \parenths[\big]{r\app g\app (r_1\app x_1')}=\depfunright{(r_1\app x_1')}{(\counit_1\app x_1')}\app \parenths[\big]{g\app (\counit_1\app x_1')}.
\end{displaymath}
Now unlike in \cref{thm:funrel_galc},
we cannot directly apply \cref{asm:depfunrel_galcon2}:
the parameters of $\inflerel{L_2\app(r_1\app x_1')\app(r_1\app x_1')}$ and $\depfunright{(r_1\app x_1')}{(\counit_1\app x_1')}$ do not match up.
We first have to use monotonicity of $r_2$ (\cref{asm:depfunrel_monor2}) to obtain
\begin{displaymath}
\parenths[\big]{\depfunright{(r_1\app x_1')}{(\counit_1\app x_1')}\app \parenths[\big]{g\app (\counit_1\app x_1')}}\lerel{L_2\app(r_1\app x_1')\app(r_1\app x_1')}
\parenths[\big]{\depfunright{(r_1\app x_1')}{x_1'}\app \parenths[\big]{g\app (\counit_1\app x_1')}}.
\end{displaymath}
With transitivity (\cref{asm:depfunrel_transl2}),
we then get
\begin{displaymath}
\parenths[\big]{f\app (r_1\app x_1')}\lerel{L_2\app(r_1\app x_1')\app(r_1\app x_1')}
\parenths[\big]{\depfunright{(r_1\app x_1')}{x_1'}\app \parenths[\big]{g\app (\counit_1\app x_1')}}.
\end{displaymath}
Now we apply \cref{asm:depfunrel_galcon2} to obtain
\begin{displaymath}
\depfunleft{x_1'}{(r_1\app x_1')}\app \parenths[\big]{f\app (r_1\app x_1')}=(l\app f\app x_1')
\lerel{R_2\app (\counit_1\app x_1')\app x_1'}
\parenths[\big]{g\app (\counit_1\app x_1')}.\footnote{%
Again, we omit the step showing that
$\incodom\app\inflerel{R_2\app (\counit_1\app x_1')\app x_1'}\app\parenths[\big]{g\app(\counit_1\app x_2')}$.}
\end{displaymath}
With monotonicity of $g$ and \cref{asm:depfunrel_galcon1}, one can show that
\begin{displaymath}
\parenths[\big]{g\app (\counit_1\app x_1')}
\lerel{R_2\app (\counit_1\app x_1')\app x_1'} \parenths{g\app x_1'}.
\end{displaymath}
Thus with transitivity (\cref{asm:depfunrel_transr2}),
% \begin{displaymath}
$(l\app f\app x_1') \lerel{R_2\app (\counit_1\app x_1')\app x_1'} (g\app x_1')$.
% \end{displaymath}
Using monotonicity of $R_2$ (\cref{asm:depfunrel_monoright2}),
we can adapt the parameters of $R_2$ and obtain
$(l\app f\app x_1') \lerel{R_2\app x_1'\app x_2'} (g\app x_1')$.
Finally, we obtain $(g\app x_1') \lerel{R_2\app x_1'\app x_2'} (g\app x_2')$ from $x_1'\lerel{R_1} x_2'$ and monotonicity of $g$.
We can conclude using transitivity.
\end{proof}

We can also prove a generalisation of
\cref{thm:depfunrel_galreleq}:
\begin{theorem}\label{thm:depfunrel_galc_galreleq}
Assume
\begin{assumes}
\item$\galcapp{\inflerel{L_1}}{\inflerel{R_1}}{l_1}{r_1}$,
\item $\reflon\app\infieldapp{\inflerel{L_1}}$,
\item if $x \galrel{L_1} x'$ then $\parenths[\big]{\mono{\inflerel{R_2\app (l_1\app x)\app x'}}{\inflerel{L_2\app x\app (r_1\app x')}}}\app (\depfunright{x}{x'})$,
\item if $x_1 \lerel{L_1} x_2$ then $\trans\app\inflerel{L_2\app x_1\app x_2}$,
\item%
% $\parenths[\big]{\depmonopred{x_1}{\top}{\depmonorest{x_2}{\_}{\inflerel{L_1}}{x_1\lerel{L_1}x_2}{(\le)}}}\app L_2$,
if $x_1 \lerel{L_1} x_2 \lerel{L_1} x_3$ then
$\inflerel{L_2\app x_1\app x_2} \leq \inflerel{L_2\app x_1\app x_3}$,
\item%
% $\parenths[\big]{\depmonopred{x_1}{\top}{\depmonorest{x_2}{x_3}{\inflerel{L_1}}{x_1\lerel{L_1}x_2\land x_3\lerel{L_1}\unit_1\app x_2}{(\ge)}}}\app L_2$,
if $x_1\lerel{L_1}x_2 \lerel{L_1} x_3\lerel{L_1}\unit_1\app x_2$ then
$\inflerel{L_2\app x_1\app x_3} \leq \inflerel{L_2\app x_1\app x_2}$,
\item%
% $\bigl(\depmono{x_1}{x_2}{\inflerel{L_1}}{\depfunrelrest{x_1'}{x_2'}{\inflerel{R_1}}{x_2\galrel{L_1}x_1'}{}}$\\
% \null\quad$\funrelpred{\infield\app\inflerel{R_2\app (l_1\app x_1)\app x_2'}}{\inflerel{L_2\app x_1\app (r_1\app x_2')}}\bigr)\app r_2$.
if $x_1 \lerel{L_1} x_2 \galrel{L_1} x_1' \lerel{R_1} x_2'$ and $\infield\app\inflerel{R_2\app (l_1\app x_1)\app x_2'}\app y'$ then\\
$\parenths[\big]{\depfunright{x_1}{x_1'}\app y'} \lerel{L_2\app x_1\app (r_1\app x_2')} \parenths[\big]{\depfunright{x_2}{x_2'}\app y'}$,
\item $\indom\app \inflerel{L}\app f$, and $\incodom\app \inflerel{R}\app g$.
\end{assumes}
Then
$f \galrel{L} g\holiff
\parenths[\big]{\depfunrel{x}{x'}{\infgalrel{L_1}}{\infgalrel{L_2\app x\app x'}}}f\app g$.
\end{theorem}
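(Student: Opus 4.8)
\emph{Plan.} The first move is to strip both sides down to their computational content. Recalling the Galois relator $x \galrel{L} y \holiff \incodom\inflerel{R}\app y \land x \lerel{L} r\app y$ and using the assumption $\incodom\inflerel{R}\app g$, the left-hand side collapses to $f \lerel{L} r\app g$, i.e.\ the monotone dependent function relator applied to $f$ and $r\app g$. The assumptions $\indom\inflerel{L}\app f$ and $\incodom\inflerel{R}\app g$ moreover hand us the monotonicity of $f$ (w.r.t.\ $\inflerel{L_1},\inflerel{L_2}$) and of $g$ (w.r.t.\ $\inflerel{R_1},\inflerel{R_2}$) for free. Unfolding the right-hand side, I must relate, for every $x \galrel{L_1} x'$, the values $f\app x$ and $g\app x'$ by $\infgalrel{L_2\app x\app x'}$, which again splits into a codomain obligation and the core inequality $(f\app x) \lerel{L_2\app x\app (r_1\app x')} (\depfunright{x}{x'}\app(g\app x'))$. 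So the whole theorem reduces to trading the single function-level inequality $f \lerel{L} r\app g$ against the family of pointwise inequalities, with the unit/counit identities $r_1\app(l_1\app z) = \unit_1\app z$ and $l_1\app(r_1\app z) = \counit_1\app z$ doing the bookkeeping.

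For the forward direction I would start from $f \lerel{L} r\app g$ and, for a fixed $x \galrel{L_1} x'$, instantiate its relator component at the \emph{reflexive} pair $x \lerel{L_1} x$ rather than at $x \lerel{L_1} r_1\app x'$; this yields $(f\app x) \lerel{L_2\app x\app x} (r\app g\app x)$ with $r\app g\app x = \depfunright{x}{(l_1\app x)}\app(g\app(l_1\app x))$, whose parameters are already ``aligned'' with $x$. I would then walk this term to the target in three reindexing steps: widen the outer relation from $L_2\app x\app x$ to $L_2\app x\app (r_1\app x')$ using monotonicity of $L_2$ in its second parameter; swap the second parameter of $r_2$ from $l_1\app x$ to $x'$ using the monotonicity condition on $r_2$ (its hypotheses $x \galrel{L_1} l_1\app x$ and $l_1\app x \lerel{R_1} x'$ follow from the Galois property and reflexivity); and finally move the argument from $g\app(l_1\app x)$ to $g\app x'$ using monotonicity of $\depfunright{x}{x'}$ together with monotonicity of $g$ applied to $l_1\app x \lerel{R_1} x'$. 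Transitivity of $L_2\app x\app (r_1\app x')$ then chains these into the core inequality, and the codomain obligation is discharged by the witness $g\app(l_1\app x)$.

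The backward direction is dual. Given the pointwise hypothesis and a pair $x_1 \lerel{L_1} x_2$, the natural instantiation is at $x \define x_1$ and $x' \define l_1\app x_2$: one first checks $x_1 \galrel{L_1} l_1\app x_2$, which reduces via the Galois property to the monotonicity fact $l_1\app x_1 \lerel{R_1} l_1\app x_2$. The hypothesis then delivers $(f\app x_1) \lerel{L_2\app x_1\app(\unit_1\app x_2)} (\depfunright{x_1}{(l_1\app x_2)}\app(g\app(l_1\app x_2)))$, and I would reindex this to the required $(f\app x_1) \lerel{L_2\app x_1\app x_2} (r\app g\app x_2)$ by swapping the first parameter of $r_2$ from $x_1$ to $x_2$ (again via the $r_2$-monotonicity condition), contracting the outer relation from $L_2\app x_1\app(\unit_1\app x_2)$ back to $L_2\app x_1\app x_2$ using the (restricted) antimonotonicity of $L_2$ in its second parameter, and invoking transitivity. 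Monotonicity of $r\app g$, the remaining ingredient of $f \lerel{L} r\app g$, is assembled from the component monotonicities of $r_2$, $l_1$, and $g$.

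The main obstacle is neither direction in isolation but the \emph{parameter alignment} throughout: every occurrence of $L_2$, $R_2$, and $r_2$ is indexed by elements of $\alphaty_1,\alphaty_2$, and the available hypotheses only permit one-sided adjustments (note there are \emph{no} monotonicity conditions on $R_2$ here, unlike in \cref{thm:depfunrel_galc}). The art is to pick each instantiation -- in particular the reflexive pair $x \lerel{L_1} x$ forward and the point $l_1\app x_2$ backward -- so that all reindexing stays on the $L_2$/$r_2$ side, is performed in the correct order, and has every side condition ($\galrel{L_1}$-facts, reflexivity on fields, $\incodom$/$\indom$ membership) fall out of the Galois property. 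This is exactly the bookkeeping already rehearsed in the proof of \cref{thm:depfunrel_galc}, whose structure and figure I would use as a template.
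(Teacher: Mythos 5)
Your proposal is correct: both directions go through with the instantiations you chose (the reflexive pair $x \lerel{L_1} x$ in the forward direction, the point $l_1\app x_2$ in the backward direction), and each side condition you need -- the facts $x \galrel{L_1} l_1\app x$ and $l_1\app x \lerel{R_1} x'$ from the Galois property, the field-membership premises of the $r_2$-monotonicity condition discharged via monotonicity of $g$, and the transitivity indices $x \lerel{L_1} r_1\app x'$ and $x_1 \lerel{L_1} \unit_1\app x_2$ -- is available exactly as you indicate. The paper defers the proof of this particular theorem to the formalisation, and your argument follows the same reflexive-instantiation-plus-parameter-reindexing methodology that the paper demonstrates in its proof of \cref{thm:depfunrel_galc}, which is precisely the template you cite.
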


\subsubsection{Regarding Monotonicity}\label{sec:appendixclosuredepfunrelmononote}

Finally,
% we close the section,
we want to mention a subtlety:
while work in abstract interpretation points out the necessity to use
\emph{monotone} function relators, for example \cite{abstractint1},
related work dealing with the concept of transports in proof assistants
does not talk about any such monotonicity restriction~\cite{isabellelifting,univalparam1,univalparam2,univalparam3,proofrepair,depinterop1,depinterop2}.
The reason is not that the monotonicity restriction is unnecessary,
but rather that the function relators in latter works are monotone by default.
This can be made precise with the following lemma:
\begin{lemma}\label{lem:depfunmonoeqifpartequiv}
Assume
\begin{assumes}(1)
\item $\reflon\app\infieldapp{\inflerel{L_1}}$,
\item if $x_1 \lerel{L_1} x_2$ then $\inflerel{L_2\app x_2\app x_2} \le \inflerel{L_2\app x_1\app x_2}$,
\item if $x_1 \lerel{L_1} x_2$ then $\inflerel{L_2\app x_1\app x_1} \le \inflerel{L_2\app x_1\app x_2}$,
\item if $x_1 \lerel{L_1} x_2$ then $\partequiv\app\inflerel{L_2\app x_1\app x_2}$.
\end{assumes}
Then
$\parenths[\big]{\monodepfunrel{x_1}{x_2}{\inflerel{L_1}}{\inflerel{L_2\app x_1\app x_2}}}=
\parenths[\big]{\depfunrel{x_1}{x_2}{\inflerel{L_1}}{\inflerel{L_2\app x_1\app x_2}}}$.
\end{lemma}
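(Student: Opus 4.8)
The plan is to prove the two inclusions separately. The inclusion $\parenths[\big]{\monodepfunrel{x_1}{x_2}{\inflerel{L_1}}{\inflerel{L_2\app x_1\app x_2}}}\app f\app g \holimplies \parenths[\big]{\depfunrel{x_1}{x_2}{\inflerel{L_1}}{\inflerel{L_2\app x_1\app x_2}}}\app f\app g$ is immediate, since the plain dependent function relator is by definition the first conjunct of the monotone one. All the content lies in the converse: assuming only $\parenths[\big]{\depfunrel{x_1}{x_2}{\inflerel{L_1}}{\inflerel{L_2\app x_1\app x_2}}}\app f\app g$, I must recover the two missing monotonicity conjuncts, namely $\parenths[\big]{\depmono{x_1}{x_2}{\inflerel{L_1}}{\inflerel{L_2\app x_1\app x_2}}}\app f$ and the analogous statement for $g$.

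Fix $x_1, x_2$ with $x_1 \lerel{L_1} x_2$; the goal (for $g$) is $(g\app x_1) \lerel{L_2\app x_1\app x_2} (g\app x_2)$, and symmetrically for $f$. The key idea is to feed the two \emph{diagonal} pairs into the relator hypothesis alongside the given off-diagonal pair. Since $x_1 \lerel{L_1} x_2$, both $x_1$ and $x_2$ lie in the field of $\inflerel{L_1}$, so reflexivity (Assumption~(1)) yields $x_1 \lerel{L_1} x_1$ and $x_2 \lerel{L_1} x_2$. Instantiating the hypothesis at the three pairs $(x_1,x_1)$, $(x_2,x_2)$, and $(x_1,x_2)$ gives $(f\app x_1) \lerel{L_2\app x_1\app x_1} (g\app x_1)$, $(f\app x_2) \lerel{L_2\app x_2\app x_2} (g\app x_2)$, and $(f\app x_1) \lerel{L_2\app x_1\app x_2} (g\app x_2)$.

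Next I would relax the two diagonal facts into the single off-diagonal relation $\inflerel{L_2\app x_1\app x_2}$ via the finer-than hypotheses: Assumption~(3) turns $(f\app x_1) \lerel{L_2\app x_1\app x_1} (g\app x_1)$ into $(f\app x_1) \lerel{L_2\app x_1\app x_2} (g\app x_1)$, and Assumption~(2) turns $(f\app x_2) \lerel{L_2\app x_2\app x_2} (g\app x_2)$ into $(f\app x_2) \lerel{L_2\app x_1\app x_2} (g\app x_2)$. At this point all of $f\app x_1, f\app x_2, g\app x_1, g\app x_2$ are related inside the common relation $\inflerel{L_2\app x_1\app x_2}$, which by Assumption~(4) is a partial equivalence relation and hence symmetric and transitive. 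For $g$: symmetrising $(f\app x_1) \lerel{L_2\app x_1\app x_2} (g\app x_1)$ to $(g\app x_1) \lerel{L_2\app x_1\app x_2} (f\app x_1)$ and chaining transitively with $(f\app x_1) \lerel{L_2\app x_1\app x_2} (g\app x_2)$ delivers the goal $(g\app x_1) \lerel{L_2\app x_1\app x_2} (g\app x_2)$. For $f$: symmetrising $(f\app x_2) \lerel{L_2\app x_1\app x_2} (g\app x_2)$ and chaining with $(f\app x_1) \lerel{L_2\app x_1\app x_2} (g\app x_2)$ delivers $(f\app x_1) \lerel{L_2\app x_1\app x_2} (f\app x_2)$.

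I do not expect a genuine obstacle here; the whole argument is a short diagram chase in the PER $\inflerel{L_2\app x_1\app x_2}$. The only subtlety worth flagging is the bookkeeping of the $L_2$-parameters: the two finer-than Assumptions~(2) and~(3) exist precisely to move the diagonal relations $\inflerel{L_2\app x_i\app x_i}$ into the off-diagonal $\inflerel{L_2\app x_1\app x_2}$, which is what makes the PER structure available to close both goals. I would present the $f$- and $g$-cases as the two mirror halves of a single derivation.
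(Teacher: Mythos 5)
Your proof is correct: the forward inclusion is definitional, and your converse argument -- using reflexivity of $\inflerel{L_1}$ on its field to instantiate the relator at the diagonal pairs $(x_1,x_1)$ and $(x_2,x_2)$, transporting those diagonal facts into $\inflerel{L_2\app x_1\app x_2}$ via Assumptions~(3) and~(2) respectively, and then closing both monotonicity goals by symmetry and transitivity of the PER from Assumption~(4) -- is exactly the intended argument; note the paper itself gives no in-text proof for this lemma and defers to its Isabelle formalisation, whose proof follows the same zig-zag chase (it is the dependent-parameter version of the classical fact for non-dependent function relators over PERs). The only point worth making explicit in a write-up is that the stated equality of relators is an equality of functions, so after proving the pointwise biconditional for all $f,g$ you invoke function extensionality, which the paper's setting provides.
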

Again, we can specialise this to the non-dependent function relator:
\begin{lemma}
Assume
\begin{assumes}(1)
\item $\reflon\app\infieldapp{\inflerel{L_1}}$,
\item $\partequiv\app\inflerel{L_2}$.
\end{assumes}
Then
$\parenths[\big]{\monofunrel{\inflerel{L_1}}{\inflerel{\nondepcase{L_2}}}}=
\parenths[\big]{\funrel{\inflerel{L_1}}{\inflerel{\nondepcase{L_2}}}}$.
\end{lemma}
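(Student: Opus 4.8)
The plan is to obtain this lemma as the non-dependent instance of \cref{lem:depfunmonoeqifpartequiv}. By definition $\parenths{\monofunrel{R}{S}} = \parenths[\big]{\monodepfunrel{\_}{\_}{R}{S}}$ and $\parenths{\funrel{R}{S}} = \parenths[\big]{\depfunrel{\_}{\_}{R}{S}}$, so the equality to be shown is precisely the conclusion of \cref{lem:depfunmonoeqifpartequiv} instantiated with the constant family $L_2\app x_1\app x_2\define L_2$ (ignoring both parameters). It therefore suffices to check that the two hypotheses here imply the four hypotheses of that lemma under this instantiation.

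Discharging those hypotheses is immediate. The first hypothesis of \cref{lem:depfunmonoeqifpartequiv}, $\reflon\app\infieldapp{\inflerel{L_1}}$, is the reflexivity hypothesis here verbatim. Its second and third hypotheses require, for $x_1\lerel{L_1}x_2$, the inclusions $\inflerel{L_2\app x_2\app x_2}\le\inflerel{L_2\app x_1\app x_2}$ and $\inflerel{L_2\app x_1\app x_1}\le\inflerel{L_2\app x_1\app x_2}$; with $L_2$ constant every term equals $\inflerel{L_2}$, so these collapse to $\inflerel{L_2}\le\inflerel{L_2}$ and hold trivially. Its fourth hypothesis demands $\partequiv\app\inflerel{L_2\app x_1\app x_2}$ for $x_1\lerel{L_1}x_2$, which under the instantiation is exactly the PER hypothesis $\partequiv\app\inflerel{L_2}$ (and in fact holds unconditionally). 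Applying \cref{lem:depfunmonoeqifpartequiv} then gives the equality of the dependent relators, which unfolds to the claimed equality of the non-dependent ones.

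To make the argument self-contained and to pinpoint where each hypothesis is used, I would also spell out the underlying reasoning. The inclusion $\parenths{\monofunrel{\inflerel{L_1}}{\inflerel{L_2}}}\le\parenths{\funrel{\inflerel{L_1}}{\inflerel{L_2}}}$ is by definition and needs no hypotheses. For the converse, assume $\parenths{\funrel{\inflerel{L_1}}{\inflerel{L_2}}}\app f\app g$ and fix $x\lerel{L_1}y$; both $x$ and $y$ then lie in $\infield\app\inflerel{L_1}$, so the reflexivity hypothesis yields the diagonal pairs $x\lerel{L_1}x$ and $y\lerel{L_1}y$. Instantiating the function-relator assumption at $(x,y)$ and $(y,y)$ gives $(f\app x)\lerel{L_2}(g\app y)$ and $(f\app y)\lerel{L_2}(g\app y)$; symmetry and transitivity of $\inflerel{L_2}$ (the PER hypothesis) then combine them to $(f\app x)\lerel{L_2}(f\app y)$, so $f$ is monotone, and the analogous instantiation at $(x,y)$ and $(x,x)$ settles monotonicity of $g$.

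I expect no real obstacle: the monotonicity and antimonotonicity side conditions of \cref{lem:depfunmonoeqifpartequiv} all trivialise for a constant $L_2$, and the only genuine idea — inherited from the parent lemma — is that reflexivity of $\inflerel{L_1}$ on its field supplies the diagonal inputs needed to convert a cross term $(f\app x)\lerel{L_2}(g\app y)$ into the desired $(f\app x)\lerel{L_2}(f\app y)$ through the partial-equivalence structure of $\inflerel{L_2}$.
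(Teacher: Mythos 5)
Your proposal is correct and matches the paper's own treatment: the paper states this lemma precisely as the specialisation of \cref{lem:depfunmonoeqifpartequiv} to a constant family $L_2$, which is exactly your instantiation, and your discharge of the four hypotheses (verbatim reflexivity, trivial inclusions, unconditional PER) is the intended argument. Your additional self-contained verification---using reflexivity on the field to obtain diagonal pairs and then symmetry plus transitivity of $\inflerel{L_2}$ to recover monotonicity of $f$ and $g$---is also sound and mirrors what the formalisation does under the hood.
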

It is easy to check that these assumptions are met
by type equivalences and partial quotient types.
% In other words,
% abusing the notation defined in this paper,
% this means that
% if $\galc[\big]{\inflerel{L_1}}{\inflerel{R_1}}$
% and $\galc[\big]{\inflerel{L_2}}{\inflerel{R_2}}$
% then $\galc[\big]{\funrel{\inflerel{L_1}}{\inflerel{L_2}}}{\funrel{\inflerel{R_1}}{\inflerel{R_2}}}$.
% We will generalise this result to Galois connections in a partial setting and,
% more interestingly,
% to dependent function relators.
% \begin{remark}
% Before we start with our formalisation,
% it may be worth recalling why we need the restriction to monotone functions.
% Assume we want to show that $\halfgalpl{\inflerel{L}}{\inflerel{R}}{l}{r}$.
% \end{remark}

\subsection{Compositions}\label{sec:appendixclosurecomp}

In this section,
we provide some intuition for the
constructions from \cref{sec:closurecomp},
provide preciser results,
and compare the construction
with Isabelle's Lifting package.

\subsubsection{Closure for Coinciding Relations}\label{sec:appendixclosurecompcoincide}
\begin{theorem}\label{thm:comp_galc_coincide}
Let $\star \in\{\galcsym,\galequivsym,\orderequivsym,\preequivsym,\perequivsym\}$ and assume
\begin{assumes}(3)
\item $\parenths[\big]{\inflerel{L_1}\star\inflerel{R_1}}\app l_1\app r_1$,
\item $\parenths[\big]{\inflerel{R_1}\star\inflerel{R_2}}\app l_2\app r_2$,
\item $\inflerel{R_1}=\inflerel{L_2}$.
\end{assumes}
Then \makebox{$\parenths[\big]{\inflerel{L_1}\star\inflerel{R_2}}\app (l_2\comp l_1)\app (r_1\comp r_2)$}.
\end{theorem}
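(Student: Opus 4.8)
The plan is to establish all five instances uniformly by reducing to the classical composition-of-Galois-connections argument, adapted so that the partiality side conditions are discharged by monotonicity. I would first treat the base case $\star = \galcsym$. Monotonicity of the composites is immediate: since $l_1$ is monotone from $\inflerel{L_1}$ to $\inflerel{R_1}$ and $l_2$ from $\inflerel{L_2}=\inflerel{R_1}$ to $\inflerel{R_2}$, the composite $l_2\comp l_1$ is monotone from $\inflerel{L_1}$ to $\inflerel{R_2}$, and dually $r_1\comp r_2$ is monotone from $\inflerel{R_2}$ to $\inflerel{L_1}$. For the Galois property I would prove the two halves separately. To show $\halfgalpl{\inflerel{L_1}}{\inflerel{R_2}}{l_2\comp l_1}{r_1\comp r_2}$, assume $x$ is $\galrelsym$-related to $y$ via $r_1\comp r_2$, i.e.\ $\incodom\app\inflerel{R_2}\app y$ and $x\lerel{L_1}r_1\app(r_2\app y)$. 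The key move is to pass through the middle type: monotonicity of $r_2$ together with $\incodom\app\inflerel{R_2}\app y$ yields $\incodom\app\inflerel{R_1}\app(r_2\app y)$, so the left half Galois property of the first connection gives $l_1\app x\lerel{R_1}r_2\app y$; rewriting with $\inflerel{R_1}=\inflerel{L_2}$ and applying the left half Galois property of the second connection gives $l_2\app(l_1\app x)\lerel{R_2}y$, as required. The right half $\halfgalpr{\inflerel{L_1}}{\inflerel{R_2}}{l_2\comp l_1}{r_1\comp r_2}$ is symmetric, using monotonicity of $l_1$ to propagate the $\indom$ condition through the middle type instead.

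The case $\star=\galequivsym$ then follows by applying the $\galcsym$ result twice. The forward direction composes $\galcapp{\inflerel{L_1}}{\inflerel{R_1}}{l_1}{r_1}$ with $\galcapp{\inflerel{R_1}}{\inflerel{R_2}}{l_2}{r_2}$ into $\galcapp{\inflerel{L_1}}{\inflerel{R_2}}{l_2\comp l_1}{r_1\comp r_2}$, using $\inflerel{R_1}=\inflerel{L_2}$ as the shared middle relation. The backward direction composes the reversed connections $\galcapp{\inflerel{R_2}}{\inflerel{R_1}}{r_2}{l_2}$ and $\galcapp{\inflerel{R_1}}{\inflerel{L_1}}{r_1}{l_1}$ supplied by the two Galois equivalences into $\galcapp{\inflerel{R_2}}{\inflerel{L_1}}{r_1\comp r_2}{l_2\comp l_1}$, which is exactly the second conjunct needed. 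For $\star\in\{\preequivsym,\perequivsym\}$ it then suffices to combine the already-established $\galequivsym$ with the order structure of the endpoints: the preorder (resp.\ PER) condition on $\infield\app\inflerel{L_1}$ comes directly from the first hypothesis and the one on $\infield\app\inflerel{R_2}$ from the second, so no extra argument is needed.

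The delicate instance, which I expect to be the main obstacle, is $\star=\orderequivsym$. Monotonicity is as before, but the unit and counit conditions require genuine work. Writing the composed unit as $r_1\comp(r_2\comp l_2)\comp l_1 = r_1\comp(\unitconst\app l_2\app r_2)\comp l_1$, I would show it is a relational equivalence on $\infield\app\inflerel{L_1}$ by first using monotonicity of $l_1$ to land the argument in the field of $\inflerel{L_2}$, invoking the $\inflaton$ and $\deflaton$ properties of the inner unit $\unitconst\app l_2\app r_2$ there, pushing the result back through the monotone $r_1$, and finally relating it to the identity via the relational equivalence of the outer unit $\unitconst\app l_1\app r_1$; the counit is handled dually on $\infield\app\inflerel{R_2}$. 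The subtle point is that assembling a single $\inflerel{L_1}$-step from these pieces requires chaining the intermediate relations, so this is exactly where the argument is most fragile and where the $\incodom$, $\indom$, and $\infield$ conditions must be tracked explicitly. In contrast to the classical total case, none of these membership conditions come for free, and recognising precisely which monotonicity property discharges each of them is the recurring technical difficulty throughout; once the chaining in the order-equivalence case is in place, the remaining closure checks for the order-theoretic predicates are routine.
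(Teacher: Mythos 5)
Your handling of $\star\in\{\galcsym,\galequivsym,\preequivsym,\perequivsym\}$ is correct, and it is the natural argument (the paper itself offers no in-text proof to compare against; it defers to its formalisation). In particular, the key observation that monotonicity of $r_2$ converts $\incodom\app\inflerel{R_2}\app y$ into $\incodom\app\inflerel{R_1}\app(r_2\app y)$, so that the first connection's left half Galois property becomes applicable (and dually with $l_1$ and $\indom$ for the right half), is exactly what makes the partial setting go through; the $\galequivsym$ case is the $\galcsym$ case applied once forwards and once to the flipped connections; and the $\preequivsym$, $\perequivsym$ cases inherit their order conditions verbatim, since those are stated on $\infield\app\inflerel{L_1}$ and $\infield\app\inflerel{R_2}$ only.

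The genuine gap is the case $\star=\orderequivsym$, and it is not, as you frame it, a matter of tracking $\indom$, $\incodom$ and $\infield$ conditions carefully enough. Your plan produces $x\lerel{L_1}r_1\app(l_1\app x)$ (outer unit) and $r_1\app(l_1\app x)\lerel{L_1}r_1\app(r_2\app(l_2\app(l_1\app x)))$ (inner unit at $l_1\app x$, pushed through the monotone $r_1$), and must then merge these two steps into the single step $x\lerel{L_1}r_1\app(r_2\app(l_2\app(l_1\app x)))$. That merge is precisely transitivity of $\inflerel{L_1}$, which is not available: $\orderequivsym$ is defined without any transitivity or reflexivity requirement (this is why the paper's lemmas relating $\orderequivsym$ and $\galequivsym$ in \cref{sec:appendixorderequivs} assume $\trans$ explicitly), and, unlike in the $\preequivsym$ and $\perequivsym$ cases, no hypothesis supplies it. The failure is essential rather than technical: take all three types to be $\{0,1,2\}$, let all four relations relate $i$ and $j$ exactly when $|i-j|\le 1$ (reflexive and symmetric, not transitive), and set $l_1=l_2=\id$ and $r_1=r_2=s$ where $s\app 0=1$, $s\app 1=2$, $s\app 2=2$. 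Both hypotheses then hold ($s$ is monotone and shifts every point by at most one step, so all unit and counit conditions of both order equivalences are satisfied), and $\inflerel{R_1}=\inflerel{L_2}$ trivially; yet the composed unit is $s\comp s$, which maps $0$ to $2$ while $0\lerel{L_1}2$ fails, so $\inflaton$ on $\infield\app\inflerel{L_1}$ is violated and the composite is not an order equivalence. (Note also that these order equivalences are not Galois connections, so this does not contradict your other cases.) Consequently the $\orderequivsym$ instance is false as literally stated --- the formalised counterpart must carry additional transitivity-type assumptions on $\inflerel{L_1}$ and $\inflerel{R_2}$ --- and your sketch for it cannot be completed: the chaining you defer is not delicate bookkeeping but exactly the missing hypothesis.
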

\begin{proof}
The proof can be found in the formalisation\footnote{
We actually prove a more general result
where the right and left relations of the input Galois connections
need not be equal but only need to ``agree whenever required''.
But we suspect that such an agreement rarely holds in practice
and hence omit it.}.
\end{proof}

\subsubsection{Construction Idea}
As mentioned in \cref{sec:closurecomp},
our construction is inspired by Huffman and and Kun{\v{c}}ar's construction in~\citet{isabellelifting}.
Unfortunately, they
do not provide any intuition about their constructions,
nor does Kun{\v{c}}ar~\citet{kuncarthesis} in his thesis.
We try our best to fill this gap:
In the following,
we call $\inflerel{L_1}$ the \emph{leftmost relation},
$\inflerel{R_1},\inflerel{L_2}$ the \emph{middle relations}, and
$\inflerel{R_2}$ the \emph{rightmost relation}.
We will explain the definition of $\inflerel{L}$.
The case for $\inflerel{R}$ is symmetric.

Fix some $x\holhasty\alphaty$ of the leftmost type.
We want to
\begin{enumerate*}[label=(\alph*)]
\item\label{propy:compapplyldom}make sure that applying $l=l_2\circ l_1$ on $x$
does not leave the domain/codomain of our equivalences, and
\item find all elements $x'\holhasty\alphaty$ that are greater
or equal than $x$ while doing so.
\end{enumerate*}
We make a first approximation to satisfy these conditions using three ``chase'' steps:
\begin{step}
\item check whether $\indom\app\inflerel{L_1}\app x$ and find some $y$ such that $l_1\app x\lerel{R_1} y$,
\item find some $y'$ such that $y\lerel{L_2}y'$, and
\item\label{step:righttoleft}check whether $\indom\app\inflerel{R_1}\app y'$
and find some $x'$ such that $r_1\app y'\lerel{L_1} x'$.
\end{step}
These steps are not enough:
we may have $l_1\app x\lerel{R_1} y\lerel{L_2}\app y'$
but not necessarily $l_1\app x\lerel{L_2} y\lerel{L_2}\app y'$,
as required for Property~\ref{propy:compapplyldom} and \cref{step:righttoleft}.
But if we further require that $\inflerel{R_1}$ and $\inflerel{L_2}$ \emph{commute},
that is $\parenths[\big]{\inflerel{R_1} \relcomp \inflerel{L_2}} = \parenths[\big]{\inflerel{L_2} \relcomp \inflerel{R_1}}$,
the steps become sufficient.
Finally note that
\begin{itemize}
\item $x \galrel{L_1} y \holiff \indom\app \inflerel{L_1}\app x \land l_1\app x \lerel{R_1} y$ whenever $\galp{\inflerel{L_1}}{\inflerel{R_1}}{l_1}{r_1}$, and
\item $y' \galrel{R_1} x' \holiff \indom\app \inflerel{R_1}\app y' \land r_1\app y' \lerel{L_1} x'$
whenever $\galp{\inflerel{R_1}}{\inflerel{L_1}}{r_1}{l_1}$
\end{itemize}
due to \cref{lem:galreliffalt}.
For Galois equivalences $\galequiv[\big]{\inflerel{L_1}}{\inflerel{R_1}}{l_1}{r_1}$,
it is thus sufficient to search for a chain
$x\galrel{L_1}y\lerel{L_2}y'\galrel{R_1}x'$,
which is equivalent to
$\parenths[\big]{\infgalrel{L_1} \relcomp \inflerel{L_2} \relcomp \infgalrel{R_1}}\app x\app x'$.
Hence the definition of $\inflerel{L}$.
\begin{remark}
A Galois connection
\makebox{$\galcapp[\big]{\inflerel{L_1}}{\inflerel{R_1}}{l_1}{r_1}$}
would not be sufficient due to \cref{step:righttoleft}:
We are given some $y'\holhasty\betaty$  and $x'\holhasty\alphaty$
and need to check whether $y'$ is ``smaller'' than $x'$.
We may check this by either transporting $y'$ to the left
(i.e.\ $r_1\app y'\lerel{L_1} x'$)
or $x'$ to the right (i.e.\ $y'\lerel{R_1}l_1\app x'$).
However, right adjoints only preserve infima
while left adjoints only preserve suprema.
Hence the need for
\makebox{$\galcapp[\big]{\inflerel{R_1}}{\inflerel{L_1}}{r_1}{l_1}$}.
% So transporting either way may mess up the order between $y'$ and $x'$.

Now it is not to be excluded that there is an alternative way
that avoids the need of a Galois equivalence.
But at least thus far,
it has eluded the \anontext{authors}{author}.
\end{remark}
\begin{remark}
As noted, the relations
$\inflerel{L}$ and $\inflerel{R}$
may not be equal to
$\inflerel{L_1}$ and $\inflerel{R_2}$,
but, in some sense, describe those parts
that were made ``compatible'' with respect to $l$ and $r$.
While our formalisation includes conditions
under which we can obtain an equality,
they do not apply to all practical examples.
It is indeed a challenge on its own to find
particular conditions under which the relations
$\inflerel{L}$ and $\inflerel{R}$
may be rewritten to a simpler form.
In this direction, the
thesis of Kun{\v{c}}ar~\citet{kuncarthesis}
includes ideas applicable to total quotients and partial subtypes.
\end{remark}

\subsubsection{Closure and Similarity Theorems}
The next result generalises \cref{thm:comp_galequiv}.
\begin{theorem}\label{thm:comp_galequiv_better}
Assume
\begin{assumes}(2)
\item\label{asm:comp_galequiv}$\galequiv[\big]{\inflerel{L_i}}{\inflerel{R_i}}{l_i}{r_i}$  for $i\in\{1,2\}$,
\item $\preorderon\app\infieldapp{\inflerel{R_1}}$,
\item\label{asm:comp_preorderleft}$\preorderon\app\infieldapp{\inflerel{L_2}}$,
\item\label{asm:comp_middlecompatcond}$\parenths[\big]{\inflerel{R_1} \relcomp \inflerel{L_2}} = \parenths[\big]{\inflerel{L_2} \relcomp \inflerel{R_1}}$.
\end{assumes}
Then
$\galcapp[\big]{\inflerel{L}}{\inflerel{R}}{l}{r}$.
\end{theorem}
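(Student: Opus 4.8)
The conclusion $\galcapp[\big]{\inflerel{L}}{\inflerel{R}}{l}{r}$ unfolds into four obligations: monotonicity $\parenths[\big]{\mono{\inflerel{L}}{\inflerel{R}}}\app l$, monotonicity $\parenths[\big]{\mono{\inflerel{R}}{\inflerel{L}}}\app r$, and the two halves $\halfgalpl[\big]{\inflerel{L}}{\inflerel{R}}{l}{r}$ and $\halfgalpr[\big]{\inflerel{L}}{\inflerel{R}}{l}{r}$ of the Galois property $\galp[\big]{\inflerel{L}}{\inflerel{R}}{l}{r}$. The plan is to discharge each one by an element ``chase'' along the composite relations, after first rewriting the composite Galois relators into explicit form. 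Since \cref{asm:comp_galequiv} supplies Galois equivalences, both directions of the Galois property hold for the inputs, so $\galp[\big]{\inflerel{L_1}}{\inflerel{R_1}}{l_1}{r_1}$ and $\galp[\big]{\inflerel{R_1}}{\inflerel{L_1}}{r_1}{l_1}$ hold; \cref{lem:galreliffalt} then lets me rewrite $x \galrel{L_1} y \holiff \indom\app\inflerel{L_1}\app x\land l_1\app x\lerel{R_1} y$ and $y'\galrel{R_1} x'\holiff \indom\app\inflerel{R_1}\app y'\land r_1\app y'\lerel{L_1} x'$, and analogously for the $\infgalrel{R_2},\infgalrel{L_2}$ relators occurring in $\inflerel{R}$. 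Under these rewrites, $x\lerel{L} x'$ holds exactly when there is a chain $x\galrel{L_1} y\lerel{L_2} y'\galrel{R_1} x'$, matching the intuition of the construction idea above.

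Next I would exploit the mirror structure to halve the work. The pair $(\inflerel{R},r)$ arises from the inputs by the same recipe as $(\inflerel{L},l)$ once the two equivalences are reversed in order and each is flipped, i.e.\ $\galequiv[\big]{\inflerel{L_i}}{\inflerel{R_i}}{l_i}{r_i}$ is replaced by $\galequiv[\big]{\inflerel{R_i}}{\inflerel{L_i}}{r_i}{l_i}$; this is legitimate because a Galois equivalence is two-sided, and \cref{asm:comp_middlecompatcond} is itself symmetric. Applying the monotonicity obligation for $l$ to this mirrored instance yields $\parenths[\big]{\mono{\inflerel{R}}{\inflerel{L}}}\app r$, and the proof of $\halfgalpr[\big]{\inflerel{L}}{\inflerel{R}}{l}{r}$ is the mirror image of the proof of $\halfgalpl[\big]{\inflerel{L}}{\inflerel{R}}{l}{r}$. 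I would therefore carry out only the cases for $l$ and for the left half in detail.

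For monotonicity of $l=l_2\comp l_1$, I take a chain witnessing $x_1\lerel{L} x_2$ and push its endpoints forward through $l_1$ and then $l_2$, using the monotonicity of $l_1,l_2$ together with the rewritten Galois relators, and using \cref{asm:comp_middlecompatcond} to re-order the inner $\inflerel{R_1}$-then-$\inflerel{L_2}$ segment so that $l_2$ applies without leaving the relevant domains. For $\halfgalpl[\big]{\inflerel{L}}{\inflerel{R}}{l}{r}$ I unfold $x\galrel{L} z$ into the chain for $x\lerel{L} r\app z$ plus the hypothesis $\incodom\app\inflerel{R}\app z$, transport across the four transport functions using the half Galois properties of the two input equivalences, again bridge the $\inflerel{R_1}/\inflerel{L_2}$ portion by commutativity, and finally collapse the chain to $l\app x\lerel{R} z$ using transitivity on the fields of $\inflerel{R_1}$ and $\inflerel{L_2}$ (from the preorder assumption on $\infield\app\inflerel{R_1}$ and from \cref{asm:comp_preorderleft}); reflexivity of these two preorders supplies the $\indom$/$\incodom$ side conditions required to invoke \cref{lem:galreliffalt} at each step. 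As a consistency check, specialising to $\inflerel{R_1}=\inflerel{L_2}$ recovers \cref{thm:comp_galc_coincide}.

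I expect the crux to be the commutativity-driven re-association together with the attendant bookkeeping: the chain delivered by the construction runs through $\inflerel{R_1}$ and then $\inflerel{L_2}$, but to apply the inner transport functions one must first swap it to the $\inflerel{L_2}$-then-$\inflerel{R_1}$ order via \cref{asm:comp_middlecompatcond}, and at every chase step one must re-establish the $\indom$, $\incodom$, and $\infield$ premises demanded by \cref{lem:galreliffalt} and by the input half Galois properties. The transitivity and reflexivity furnished by the two middle preorders are precisely what close the resulting composite chains once the endpoints have been moved.
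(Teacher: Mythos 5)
Your detailed half --- the chase for $\halfgalpl{\inflerel{L}}{\inflerel{R}}{l}{r}$ and monotonicity of $l$, driven by \cref{lem:galreliffalt}, the commutativity condition, and transitivity/reflexivity of the middle preorders --- follows essentially the paper's own proof, which likewise writes out only that half. The genuine gap is your symmetry reduction. The mirror you describe (swap the two input equivalences and flip each one using the two-sidedness of Galois equivalences) does send the construction to $(\inflerel{R},\inflerel{L},r,l)$ and does satisfy all four hypotheses; but the statement it then delivers is $\galcapp[\big]{\inflerel{R}}{\inflerel{L}}{r}{l}$, not the goal $\galcapp[\big]{\inflerel{L}}{\inflerel{R}}{l}{r}$. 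The monotonicity conjuncts of the two statements coincide, so obtaining $\parenths[\big]{\mono{\inflerel{R}}{\inflerel{L}}}\app r$ this way is sound, but the Galois-property conjuncts do not: the mirrored left half concludes $r\app z\lerel{L}x$ from $\incodom\app\inflerel{L}\app x\land z\lerel{R}l\app x$, whereas $\halfgalpr{\inflerel{L}}{\inflerel{R}}{l}{r}$ must conclude $x\lerel{L}r\app z$ from $\indom\app\inflerel{L}\app x\land l\app x\lerel{R}z$; every relation is applied with its arguments in the opposite order. Under the theorem's hypotheses neither $\inflerel{L}$ nor $\inflerel{R}$ is symmetric ($\inflerel{L_1}$ and $\inflerel{R_2}$ are completely unconstrained), so these are independent statements --- which is the very reason \labelcref{eq:def_gal_equiv} conjoins $\galcapp{\inflerel{L}}{\inflerel{R}}{l}{r}$ and $\galcapp{\inflerel{R}}{\inflerel{L}}{r}{l}$ rather than deriving one from the other. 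Concretely, your plan yields the left halves of both adjunction directions, i.e.\ two counit-style (deflationary) inequalities, and the unit-style inequality hidden in $\halfgalpr{\inflerel{L}}{\inflerel{R}}{l}{r}$ is never established.

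The reduction can be repaired by mirroring along the classical converse duality, $\galcapp[\big]{\inflerel{A}}{\inflerel{B}}{f}{g}\holiff\galcapp[\big]{\infgerel{B}}{\infgerel{A}}{g}{f}$, which holds verbatim for the paper's partial definitions: take $(\infgerel{R_2},\infgerel{L_2},r_2,l_2)$ as the first input equivalence and $(\infgerel{R_1},\infgerel{L_1},r_1,l_1)$ as the second. These again satisfy all four hypotheses (Galois equivalences are stable under converses, converses preserve preorders on fields, and the commutativity condition is self-converse), and by \cref{lem:galreliffalt} the construction applied to them yields exactly $(\infgerel{R},\infgerel{L},r,l)$. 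Now the left half Galois property of this instance \emph{is} $\halfgalpr{\inflerel{L}}{\inflerel{R}}{l}{r}$, and monotonicity again transfers; so proving ``left half plus monotonicity'' once, for arbitrary inputs satisfying the hypotheses, and instantiating it twice gives the full conclusion. With that correction your plan goes through, and it would in fact be a tidy complement to the paper, which defers all obligations beyond the left half to its formalisation.
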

\begin{proof}
We will only show that
$\halfgalpl{\inflerel{L}}{\inflerel{R}}{l}{r}$
to illustrate the usage of the compatibility condition
(\cref{asm:comp_middlecompatcond}).
The rest of the proof can be found in our formalisation.
A visualisation of the following proof can be found in
\cref{fig:comp_galequiv}.

\begin{figure}[!t]
\begin{subfigure}[t]{0.49\textwidth}
  \centering
\resizebox{\textwidth}{!}{%
  \begin{tikzpicture} 
  \node[typesmall] (alphatype) {};
  \pic (alphalabel) [above=\labeldistance of alphatype,xshift=-0.6cm] {labelrel={leftstyle}{$\inflerel{L_1}$}};

  \node[element] (x) at ($(alphatype) + (0, 1.4)$) {};
  \node[above=\labeldistance of x] {$x$};
  \node[element] (rz) at ($(x) + (0, -3*\equivdistance)$) {};
  \node[below=\labeldistance of rz] {$r\app z$};

  \node[typebig] (betatype) at ($(alphatype) + (0.8*\defaulttypedistance, 0)$) {};
  \pic (betalabel2) [above=\labeldistance of betatype,xshift=-0.5cm] {labelrel={leftlabelstyle2}{$\inflerel{L_2}$}};
  \pic (betalabel1) [above=1.5*\typeequivlabeldistance of betalabel2equivclass] {labelrel={rightstyle}{$\inflerel{R_1}$}};

  \node[element] (l1x) at ($(x) + (0.8*\smallbigtypedistance,0)$) {};
  \node[above=\labeldistance of l1x] {$l_1\app x$};
  \node[element] (y) at ($(l1x) + (0, -\equivdistance)$) {};
  \node[left=\labeldistance of y] {$y$};
  \node[element] (y') at ($(y) + (0, -\equivdistance)$) {};
  \node[left=\labeldistance of y'] {$y'$};
  \node[element] (counit1r2z) at ($(y') + (0, -\equivdistance)$) {};
  \node[below=\labeldistance of counit1r2z] {$\counit_1\app (r_2\app z)$};

  \node[typesmall] (gammatype) at ($(betatype) + (\smallbigtypedistance, 0)$) {};
  \pic (alphalabel) [above=\labeldistance of gammatype,xshift=-0.6cm] {labelrel={rightlabelstyle2}{$\inflerel{R_2}$}};

  \node[element] (z) at ($(rz) + (2*\smallbigtypedistance,0)$) {};
  \node[right=\labeldistance of z] {$z$};

  \node[element] (r2z) at ($(z) - (0.8*\smallbigtypedistance,0)$) {};
  \node[below=\labeldistance of r2z] {$r_2\app z$};
  \node[element] (w') at ($(r2z) - (0, -\equivdistance)$) {};
  \node[right=\labeldistance of w'] {$w'$};
  \node[element] (w) at ($(w') - (0, -\equivdistance)$) {};
  \node[right=\labeldistance of w] {$w$};

  \draw[->,relstyle,rightstyle] (l1x) to (y);
  \draw[->,leftstyle2,relstyle2] (y) to (y');
  \draw[->,relstyle,rightstyle] (y') to (counit1r2z);
  \draw[->,leftstyle2,relstyle2] (w') to (r2z);
  \draw[->,relstyle,rightstyle] (w) to (w');
  \pic {leftedge={x}{l1x}{$l_1$}};
  \pic {leftedge={rz}{counit1r2z}{}};
  \pic {rightedgestyled={z}{r2z}{$r_2$}{rightstyle2}};

\end{tikzpicture}
}
  \caption{The initial setup of the proof.}
\end{subfigure}
\hfill
\begin{subfigure}[t]{0.49\textwidth}
  \centering
\resizebox{\textwidth}{!}{%
  \begin{tikzpicture} 
  \node[typesmall] (alphatype) {};
  \pic (alphalabel) [above=\labeldistance of alphatype,xshift=-0.6cm] {labelrel={leftstyle}{$\inflerel{L_1}$}};

  \node[element] (x) at ($(alphatype) + (0, 1.4)$) {};
  \node[above=\labeldistance of x] {$x$};
  \node[element] (rz) at ($(x) + (0, -3*\equivdistance)$) {};
  \node[below=\labeldistance of rz] {$r\app z$};

  \node[typebig] (betatype) at ($(alphatype) + (0.8*\defaulttypedistance, 0)$) {};
  \pic (betalabel2) [above=\labeldistance of betatype,xshift=-0.5cm] {labelrel={leftlabelstyle2}{$\inflerel{L_2}$}};
  \pic (betalabel1) [above=1.5*\typeequivlabeldistance of betalabel2equivclass] {labelrel={rightstyle}{$\inflerel{R_1}$}};

  \node[element] (l1x) at ($(x) + (0.8*\smallbigtypedistance,0)$) {};
  \node[above=\labeldistance of l1x] {$l_1\app x$};
  \node[element] (y) at ($(l1x) + (0, -\equivdistance)$) {};
  \node[left=\labeldistance of y] {$y$};
  \node[element] (y') at ($(y) + (0, -\equivdistance)$) {};
  \node[left=\labeldistance of y'] {$y'$};
  \node[element] (counit1r2z) at ($(y') + (0, -\equivdistance)$) {};
  \node[below=\labeldistance of counit1r2z] {$\counit_1\app (r_2\app z)$};

  \node[typesmall] (gammatype) at ($(betatype) + (\smallbigtypedistance, 0)$) {};
  \pic (alphalabel) [above=\labeldistance of gammatype,xshift=-0.6cm] {labelrel={rightlabelstyle2}{$\inflerel{R_2}$}};

  \node[element] (z) at ($(rz) + (2*\smallbigtypedistance,0)$) {};
  \node[right=\labeldistance of z] {$z$};

  \node[element] (r2z) at ($(z) - (0.8*\smallbigtypedistance,0)$) {};
  \node[below=\labeldistance of r2z] {$r_2\app z$};
  \node[element] (w') at ($(r2z) - (0, -\equivdistance)$) {};
  \node[right=\labeldistance of w'] {$w'$};
  \node[element] (w'') at ($(w') - (-\equivdistance, 0)$) {};
  \node[below=\labeldistance of w''] {$w''$};
  \node[element] (w) at ($(w') - (0, -\equivdistance)$) {};
  \node[right=\labeldistance of w] {$w$};

  \draw[->,relstyle,rightstyle] (l1x) to (y);
  \draw[->,leftstyle2,relstyle2] (y) to (y');
  \draw[->,relstyle,rightstyle] (y') to (counit1r2z);
  \draw[->,leftstyle2,relstyle2] (w') to (r2z);
  \draw[->,relstyle,rightstyle] (w) to (w');
  \draw[->,leftstyle2,relstyle2,proofedge] (w) to (w'');
  \draw[->,relstyle,rightstyle,proofedge] (w'') to (r2z);
  \draw[->,relstyle,rightstyle,proofedge] (counit1r2z) to (r2z);
  \pic {leftedge={x}{l1x}{$l_1$}};
  \pic {leftedge={rz}{counit1r2z}{}};
  \pic {rightedgestyled={z}{r2z}{$r_2$}{rightstyle2}};

\end{tikzpicture}
}
  \caption{Applying the compatibility condition to obtain $w''$.}
\end{subfigure}
\par\bigskip
\begin{subfigure}[t]{0.49\textwidth}
  \centering
\resizebox{\textwidth}{!}{%
  \begin{tikzpicture} 
  \node[typesmall] (alphatype) {};
  \pic (alphalabel) [above=\labeldistance of alphatype,xshift=-0.6cm] {labelrel={leftstyle}{$\inflerel{L_1}$}};

  \node[element] (x) at ($(alphatype) + (0, 1.4)$) {};
  \node[above=\labeldistance of x] {$x$};
  \node[element] (rz) at ($(x) + (0, -3*\equivdistance)$) {};
  \node[below=\labeldistance of rz] {$r\app z$};

  \node[typebig] (betatype) at ($(alphatype) + (0.8*\defaulttypedistance, 0)$) {};
  \pic (betalabel2) [above=\labeldistance of betatype,xshift=-0.5cm] {labelrel={leftlabelstyle2}{$\inflerel{L_2}$}};
  \pic (betalabel1) [above=1.5*\typeequivlabeldistance of betalabel2equivclass] {labelrel={rightstyle}{$\inflerel{R_1}$}};

  \node[element] (l1x) at ($(x) + (0.8*\smallbigtypedistance,0)$) {};
  \node[above=\labeldistance of l1x] {$l_1\app x$};
  \node[element] (y) at ($(l1x) + (0, -\equivdistance)$) {};
  \node[left=\labeldistance of y] {$y$};
  \node[element] (y') at ($(y) + (0, -\equivdistance)$) {};
  \node[left=\labeldistance of y'] {$y'$};
  \node[element] (counit1r2z) at ($(y') + (0, -\equivdistance)$) {};
  \node[below=\labeldistance of counit1r2z] {$\counit_1\app (r_2\app z)$};

  \node[typesmall] (gammatype) at ($(betatype) + (\smallbigtypedistance, 0)$) {};
  \pic (alphalabel) [above=\labeldistance of gammatype,xshift=-0.6cm] {labelrel={rightlabelstyle2}{$\inflerel{R_2}$}};

  \node[element] (z) at ($(rz) + (2*\smallbigtypedistance,0)$) {};
  \node[right=\labeldistance of z] {$z$};

  \node[element] (r2z) at ($(z) - (0.8*\smallbigtypedistance,0)$) {};
  \node[below=\labeldistance of r2z] {$r_2\app z$};
  \node[element] (w') at ($(r2z) - (0, -\equivdistance)$) {};
  \node[right=\labeldistance of w'] {$w'$};
  \node[element] (w'') at ($(w') - (-\equivdistance, 0)$) {};
  \node[below=\labeldistance of w''] {$w''$};
  \node[element] (w) at ($(w') - (0, -\equivdistance)$) {};
  \node[right=\labeldistance of w] {$w$};

  \node[element] (y'') at ($(y)!0.5!(r2z)$) {};
  \node[above=\labeldistance of y'',xshift=0.2cm] {$y''$};

  \draw[->,relstyle,rightstyle] (l1x) to (y);
  \draw[->,leftstyle2,relstyle2] (y) to (y');
  \draw[->,relstyle,rightstyle] (y') to (counit1r2z);
  \draw[->,leftstyle2,relstyle2] (w') to (r2z);
  \draw[->,relstyle,rightstyle] (w) to (w');
  \draw[->,leftstyle2,relstyle2] (w) to (w'');
  \draw[->,relstyle,rightstyle] (w'') to (r2z);
   \draw[->,relstyle,rightstyle] (counit1r2z) to (r2z);
  \draw[->,relstyle,rightstyle,proofedge] (y) to (y'');
  \draw[->,leftstyle2,relstyle2,proofedge] (y'') to (r2z);
  \pic {leftedge={x}{l1x}{$l_1$}};
  \pic {leftedge={rz}{counit1r2z}{}};
  \pic {rightedgestyled={z}{r2z}{$r_2$}{rightstyle2}};

\end{tikzpicture}
}
  \caption{Applying the compatibility condition to obtain $y''$.}
\end{subfigure}
\hfill
\begin{subfigure}[t]{0.49\textwidth}
  \centering
\resizebox{\textwidth}{!}{%
  \begin{tikzpicture} 
  \node[typesmall] (alphatype) {};
  \pic (alphalabel) [above=\labeldistance of alphatype,xshift=-0.6cm] {labelrel={leftstyle}{$\inflerel{L_1}$}};

  \node[element] (x) at ($(alphatype) + (0, 1.4)$) {};
  \node[above=\labeldistance of x] {$x$};
  \node[element] (rz) at ($(x) + (0, -3*\equivdistance)$) {};
  \node[below=\labeldistance of rz] {$r\app z$};

  \node[typebig] (betatype) at ($(alphatype) + (0.8*\defaulttypedistance, 0)$) {};
  \pic (betalabel2) [above=\labeldistance of betatype,xshift=-0.5cm] {labelrel={leftlabelstyle2}{$\inflerel{L_2}$}};
  \pic (betalabel1) [above=1.5*\typeequivlabeldistance of betalabel2equivclass] {labelrel={rightstyle}{$\inflerel{R_1}$}};

  \node[element] (l1x) at ($(x) + (0.8*\smallbigtypedistance,0)$) {};
  \node[above=\labeldistance of l1x] {$l_1\app x$};
  \node[element] (y) at ($(l1x) + (0, -\equivdistance)$) {};
  \node[left=\labeldistance of y] {$y$};
  \node[element] (y') at ($(y) + (0, -\equivdistance)$) {};
  \node[left=\labeldistance of y'] {$y'$};
  \node[element] (counit1r2z) at ($(y') + (0, -\equivdistance)$) {};
  \node[below=\labeldistance of counit1r2z] {$\counit_1\app (r_2\app z)$};

  \node[typesmall] (gammatype) at ($(betatype) + (\smallbigtypedistance, 0)$) {};
  \pic (alphalabel) [above=\labeldistance of gammatype,xshift=-0.6cm] {labelrel={rightlabelstyle2}{$\inflerel{R_2}$}};

  \node[element] (z) at ($(rz) + (2*\smallbigtypedistance,0)$) {};
  \node[right=\labeldistance of z] {$z$};

  \node[element] (r2z) at ($(z) - (0.8*\smallbigtypedistance,0)$) {};
  \node[below=\labeldistance of r2z] {$r_2\app z$};
  \node[element] (w') at ($(r2z) - (0, -\equivdistance)$) {};
  \node[right=\labeldistance of w'] {$w'$};
  \node[element] (w'') at ($(w') - (-\equivdistance, 0)$) {};
  \node[below=\labeldistance of w''] {$w''$};
  \node[element] (w) at ($(w') - (0, -\equivdistance)$) {};
  \node[right=\labeldistance of w] {$w$};

  \node[element] (y'') at ($(y)!0.5!(r2z)$) {};
  \node[above=\labeldistance of y'',xshift=0.2cm] {$y''$};
  \node[element] (u) at ($(y) - (\equivdistance,0)$) {};
  \node[below=\labeldistance of u] {$u$};

  \node[element] (lx) at ($(x) + (2*\smallbigtypedistance,0)$) {};
  \node[right=\labeldistance of lx] {$lx$};

  \draw[->,relstyle,rightstyle] (l1x) to (y);
  \draw[->,leftstyle2,relstyle2] (y) to (y');
  \draw[->,relstyle,rightstyle] (y') to (counit1r2z);
  \draw[->,leftstyle2,relstyle2] (w') to (r2z);
  \draw[->,relstyle,rightstyle] (w) to (w');
  \draw[->,leftstyle2,relstyle2] (w) to (w'');
  \draw[->,relstyle,rightstyle] (w'') to (r2z);
   \draw[->,relstyle,rightstyle] (counit1r2z) to (r2z);
  \draw[->,relstyle,rightstyle] (y) to (y'');
  \draw[->,leftstyle2,relstyle2] (y'') to (r2z);
  \pic {leftedge={x}{l1x}{$l_1$}};
  \pic {leftedge={rz}{counit1r2z}{}};
  \pic {rightedgestyled={z}{r2z}{$r_2$}{rightstyle2}};
  \draw[->,relstyle2,leftstyle2,proofedge] (l1x) to (u);
  \draw[->,relstyle,rightstyle,proofedge] (u) to (y');
  \draw[->,loopstyle,leftstyle2,relstyle] (l1x) to [out=330,in=300] (l1x);
  \draw[->] (l1x) edge[leftstyle2,edgestyle,"$l_2$",pos=0.65] (lx);
  \draw[->,loop below, loopstyle,rightstyle2,relstyle2] (lx) to (lx);

\end{tikzpicture}
}
\caption{Applying the compatibility condition to show $\indom\app\inflerel{L_2}\app (l_1\app x)$. Then apply reflexivity of $\inflerel{L_2}$ and monotonicity of $l_2$ to finish.}
\end{subfigure}
\caption{Proof of $\halfgalpl{\inflerel{L}}{\inflerel{R}}{l}{r}$ as explained in \cref{thm:comp_galequiv_better}.
Types are drawn solid, black,
transport functions dashed,
relations dotted and dashed-dotted.
}\label{fig:comp_galequiv}
\Description[Proof of $\halfgalpl{\inflerel{L}}{\inflerel{R}}{l}{r}$ as explained in \cref{thm:comp_galequiv}.]{$\halfgalpl{\inflerel{L}}{\inflerel{R}}{l}{r}$ as explained in \cref{thm:comp_galequiv}. The chase of $\inflerel{L}$ can be transformed to a chase for $\infelrel{R}$ using the compatibility condition on the middle relations.}
\end{figure}
Assume that
\begin{assumes}[label=\alph*,ref=\alph*](2)
\item\label{asm:comp_incodom}$\incodom\app\inflerel{R}\app z$,
\item\label{asm:comp_le}$x \lerel{L} r\app z$.
\end{assumes}
We have to show that $l\app x \lerel{R} z$, which unfolds to
$\parenths[\big]{\infgalrel{R_2} \relcomp \inflerel{R_1} \relcomp \infgalrel{L_2}}\app (l_2\app(l_1\app x))\app z$.
From \cref{asm:comp_le},
we obtain $y,y'$ such that
\begin{displaymath}
l_1\app x\lerel{R_1}y\lerel{L_2}y'\lerel{R_1} l_1\app (r\app z)=\epsilon_1\app (r_2\app z),
\end{displaymath}
where \makebox{$\counit_1\define \counitconst\app l_1\app r_1$}.
We wish to obtain $\counit_1\app (r_2\app z)\lerel{R_1} r_2\app z$;
this only holds if $\incodom\app \inflerel{R_1} (r_2\app z)$, however.
For this purpose, take \cref{asm:comp_galequiv,asm:comp_incodom}.
We obtain $w,w'$ such that $w\lerel{R_1}w'\lerel{L_2} r_2\app z$.
Thus, by \cref{asm:comp_middlecompatcond}, there is $w''$ such that
$w\lerel{L_2}w''\lerel{R_1} r_2\app z$.
Hence, $\incodom\app \inflerel{R_1} (r_2\app z)$.

Then by transitivity, we get $y\lerel{L_2}y'\lerel{R_1} r_2\app z$.
Thus, by \cref{asm:comp_middlecompatcond}, there is $y''$ such that
$y\lerel{R_1}y''\lerel{L_2} r_2\app z$.
From $y''\lerel{L_2} r_2\app z$ and \cref{asm:comp_incodom},
we get $y''\galrel{L_2} z$.
From $l_1\app x\lerel{R_1}\app y\lerel{R_1}\app y''$ and transitivity,
we get $l_1\app x\lerel{R_1} y''$.
It remains to show that $l\app x\galrel{R_2}l_1\app x$,
that is $l\app x\lerel{R_2}l\app x$ and
$\incodom\app\inflerel{L_2}\app (l_1\app x)$.

From $l_1\app x\lerel{R_1}y\lerel{L_2}y'$ and
\cref{asm:comp_middlecompatcond},
we obtain $u$ such that
$l_1\app x\lerel{L_2}u\lerel{R_1}y'$.
Thus, $\indom\app\inflerel{L_2}\app (l_1\app x)$.
Then by reflexivity (\cref{asm:comp_preorderleft}), $l_1\app x\lerel{L_2}l_1\app x$.
Finally, $l\app x\lerel{R_2}l\app x$ by monotonicity of $l_2$ (\cref{asm:comp_galequiv}).
\end{proof}

We can also prove a generalisation of
\cref{thm:comp_galreleq}:
\begin{theorem}\label{thm:comp_galreleq_gal_con}
Assume
\begin{assumes}(2)
\item $\parenths[\big]{\mono{\inflerel{R_1}}{\inflerel{L_1}}}\app r_1$,
\item$\galp[\big]{\inflerel{L_1}}{\inflerel{R_1}}{l_1}{r_1}$,
\item$\halfgalpl[\big]{\inflerel{R_1}}{\inflerel{L_1}}{r_1}{l_1}$,
\item $\preorderon\app\infieldapp{\inflerel{R_1}}$,
\item $\parenths[\big]{\mono{\inflerel{L_2}}{\inflerel{R_2}}}\app l_2$,
\item$\halfgalpl[\big]{\inflerel{R_2}}{\inflerel{L_2}}{r_2}{l_2}$,
\item $\reflon\app(\indom\app \inflerel{L_2})\app \inflerel{L_2}$,
\item $\parenths[\big]{\inflerel{R_1} \relcomp \inflerel{L_2}} = \parenths[\big]{\inflerel{L_2} \relcomp \inflerel{R_1}}$.
\end{assumes}
Then
$\infgalrel{L}=\parenths[\big]{\infgalrel{L_1}\relcomp\infgalrel{L_2})}$.
\end{theorem}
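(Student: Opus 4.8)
The plan is to prove the claimed equality of relations pointwise: for all $x$ and $z$ I would show $x \galrel{L} z \holiff \parenths{\infgalrel{L_1}\relcomp\infgalrel{L_2}}\app x\app z$. Unfolding the right-hand side gives $\exists y.\ x \galrel{L_1} y \land y \galrel{L_2} z$, while unfolding $\infgalrel{L}=\galrelconst\app\inflerel{L}\app\inflerel{R}\app r$ together with the composite definitions $L=\infgalrel{L_1}\relcomp\inflerel{L_2}\relcomp\infgalrel{R_1}$, $R=\infgalrel{R_2}\relcomp\inflerel{R_1}\relcomp\infgalrel{L_2}$, and $r=r_1\comp r_2$ turns the left-hand side into the conjunction of $\incodom\app\inflerel{R}\app z$ and a ``chain'' $x \galrel{L_1} a \land a\lerel{L_2} b \land b\galrel{R_1} r_1\app(r_2\app z)$. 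Before tackling either direction I would collect the reusable facts: the flip characterisation $x\galrel{L_1}y\holiff x\flipgalrel{R_1}y$ from \cref{lem:galreliffalt} (justified by the Galois property of $l_1,r_1$); the deflationary counit $\counit_1\app v\lerel{R_1}v$ for $v\in\incodom\app\inflerel{R_1}$, writing $\counit_1\define\counitconst\app l_1\app r_1$, which follows from monotonicity of $r_1$, the Galois property, and $\inflerel{R_1}$ being a preorder on its field; and the two right-side facts, reflexivity on $\indom\app\inflerel{L_2}$ and monotonicity of $l_2$.

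For the forward direction I would start from $x\galrel{L}z$, i.e.\ the chain $x\galrel{L_1}a\lerel{L_2}b\galrel{R_1}r_1\app(r_2\app z)$ and $\incodom\app\inflerel{R}\app z$. Unfolding $b\galrel{R_1}r_1\app(r_2\app z)$ yields $b\lerel{R_1}\counit_1\app(r_2\app z)$; using the deflationary counit and transitivity of $\inflerel{R_1}$ I would push this down to $b\lerel{R_1}r_2\app z$, so that $a\app\parenths{\inflerel{L_2}\relcomp\inflerel{R_1}}\app(r_2\app z)$. The commutativity assumption $\parenths{\inflerel{R_1}\relcomp\inflerel{L_2}}=\parenths{\inflerel{L_2}\relcomp\inflerel{R_1}}$ then rewrites this as $\exists u.\ a\lerel{R_1}u\land u\lerel{L_2}r_2\app z$. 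Taking $y\define u$ works: from $x\galrel{L_1}a$ I get $l_1\app x\lerel{R_1}a$ by the flip form, hence $l_1\app x\lerel{R_1}u$ by transitivity, and flipping back gives $x\galrel{L_1}u$; and $u\lerel{L_2}r_2\app z$ together with $\incodom\app\inflerel{R_2}\app z$ (extracted from $\incodom\app\inflerel{R}\app z$) gives $u\galrel{L_2}z$.

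For the backward direction I would, given $x\galrel{L_1}y$ and $y\galrel{L_2}z$, first discharge $\incodom\app\inflerel{R}\app z$: since $y\galrel{L_2}z$ supplies the final $\infgalrel{L_2}$-link and $l_1\app x\lerel{R_1}y$ supplies the middle $\inflerel{R_1}$-link, it remains to produce an $\infgalrel{R_2}$-link into $l_1\app x$; commutativity gives $\indom\app\inflerel{L_2}\app(l_1\app x)$, reflexivity on $\indom\app\inflerel{L_2}$ promotes it to $\incodom\app\inflerel{L_2}\app(l_1\app x)$, and monotonicity of $l_2$ supplies the witness. Building the chain witnessing $x\lerel{L}r_1\app(r_2\app z)$ is the reverse of the forward manipulation: I would again use commutativity to interchange the middle $\inflerel{R_1}$- and $\inflerel{L_2}$-steps, and combine the deflationary counit with triangle-identity facts relating $r_1\comp\counit_1$ and $r_1$ to route the chain through the $\counit_1$-reduced element $\counit_1\app(r_2\app z)$, at which $\inflerel{R_1}$-reflexivity and $\inflerel{L_2}$-reflexivity close the chain.

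The main obstacle I expect is exactly this last construction. Unlike the preorder/PER version (\cref{thm:comp_galreleq}), here the hypotheses deliberately fall short of full Galois equivalences: $\inflerel{L_1}$ is assumed neither reflexive nor transitive, and only the $\inflerel{R_1}$-side carries a preorder. Consequently the naive chain through $r_2\app z$ does not close, since it would require an \emph{inflationary} counit, which is genuinely unavailable, and one is forced to descend to $\counit_1$-reduced witnesses and to perform every transitivity step on the preordered $\inflerel{R_1}$, shuttling between the middle relations via the commutativity condition. Discharging all the domain and codomain side-conditions hidden inside each Galois-relator membership under these weakened assumptions is the delicate part; a proof figure analogous to \cref{fig:comp_galequiv} would be the clearest way to present it.
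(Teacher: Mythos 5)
The paper never actually proves this theorem in the text: it is stated in the appendix and deferred to the formalisation, so the only in-paper material to measure you against is the proof of \cref{thm:comp_galequiv_better} and \cref{fig:comp_galequiv}. Against that yardstick, your forward direction is correct and is exactly the paper's chase: extract $\incodom\app\inflerel{R_1}\app(r_2\app z)$ from $\incodom\app\inflerel{R}\app z$ via the compatibility condition (a step you gloss over but which must precede the deflationary counit), push down with transitivity of $\inflerel{R_1}$, interchange the middle relations, and close with \cref{lem:galreliffalt}. The codomain half of your backward direction (commutativity, then reflexivity on $\indom\app\inflerel{L_2}$, then monotonicity of $l_2$) is also correct.

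The genuine gap is the chain construction in your backward direction, and it cannot be repaired along the lines you propose. Writing $\counit_1\define\counitconst\app l_1\app r_1$, the last link of any chain witnessing $x\lerel{L}r\app z$ must, by definition of $\infgalrel{R_1}$, satisfy $b\lerel{R_1}l_1\app(r_1\app(r_2\app z))=\counit_1\app(r_2\app z)$. The stated assumptions give no way to produce an $\inflerel{R_1}$-edge into $\counit_1\app(r_2\app z)$ other than its reflexivity self-loop, so $b$ must be $\counit_1\app(r_2\app z)$ itself; the preceding link then needs $a\lerel{L_2}\counit_1\app(r_2\app z)$, which by assumption (7) needs $\indom\app\inflerel{L_2}\app(\counit_1\app(r_2\app z))$, which via commutativity needs $\indom\app\inflerel{L_2}\app(r_2\app z)$ --- and nothing yields an $\inflerel{L_2}$-edge out of $r_2\app z$ (assumption (6) would require an $\inflerel{R_2}$-edge out of $z$, which is likewise underivable). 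Every route is circular: your plan to ``close the chain at the $\counit_1$-reduced element by $\inflerel{R_1}$- and $\inflerel{L_2}$-reflexivity'' hits exactly this wall. In fact no proof exists under the printed hypotheses. Take $\alphaty=\{x,x'\}$, $\betaty=\{y,p,q,q'\}$; $l_1\app x=q$, $l_1\app x'=q'$, $r_1\app y=r_1\app q=x$, $r_1\app p=r_1\app q'=x'$; $\inflerel{L_1}$ equality; $\inflerel{R_1}$ the self-loops on $\{y,p,q,q'\}$ together with $q\lerel{R_1}y$ and $q'\lerel{R_1}p$; $\inflerel{L_2}=\{(y,y),(y,p),(y,q),(q,q),(q,p)\}$; $l_2$ injective into fresh elements of $\gammaty$; $r_2\app z=p$ and $r_2$ constantly $y$ elsewhere; $\inflerel{R_2}$ the $l_2$-image of $\inflerel{L_2}$ plus $(z,z)$. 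One checks all eight assumptions hold and $x\galrel{L_1}y\galrel{L_2}z$, yet $x\galrel{L}z$ fails: here $r\app z=x'$, the only $\inflerel{R_1}$-edge into $l_1\app x'=q'$ is its self-loop, and no $\inflerel{L_2}$-edge enters $q'$, so no chain $x\galrel{L_1}a\lerel{L_2}b\galrel{R_1}x'$ exists.

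What the inclusion $\infgalrel{L_1}\relcomp\infgalrel{L_2}\le\infgalrel{L}$ really needs is the inflationary counit $r_2\app z\lerel{R_1}\counit_1\app(r_2\app z)$, i.e.\ the \emph{right} half Galois property $\halfgalpr[\big]{\inflerel{R_1}}{\inflerel{L_1}}{r_1}{l_1}$ in place of the left half printed as assumption (3): it delivers $r_2\app z\lerel{R_1}\counit_1\app(r_2\app z)$ from reflexivity of $\inflerel{R_1}$ at $r_2\app z$ (available via commutativity and assumption (4)) and monotonicity of $r_1$. With it the chain is simply $x\galrel{L_1}y\lerel{L_2}r_2\app z\galrel{R_1}r_1\app(r_2\app z)$ --- no detour through $\counit_1$-reduced witnesses at all. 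So your correct observation that the inflationary counit is ``genuinely unavailable'' should have been read as evidence that the printed statement deviates from the machine-checked one (most plausibly a slip in which half Galois property is assumed), not as motivation for a workaround, which cannot exist.
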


\subsubsection{Comparison To Isabelle's Lifting Package}\label{sec:appendixcomparisoncomplifting}
As mentioned, our definitions are
inspired by~\cite{isabellelifting}:
Let $(T_1,l_1,r_1)$ and $(T_2,l_2,r_2)$
be two partial quotient types
with induced left relations $\inflerel{L_1}$ and $\inflerel{L_2}$.
Huffman and and Kun{\v{c}}ar then construct the composition
$(T_1\relcomp T_2, l_2\circ l_1, r_1\circ r_2)$.
Moreover, they prove that the induced left relation $\inflerel{L}$
of this composed partial quotient type satisfies
$\inflerel{L} = T_1\relcomp \inflerel{L_2}\relcomp \inv{T_1}$.
This insight sparked the idea of our definitions.

Indeed, we can show that our definitions
faithfully generalise their work.
Just as \cref{lem:galrelpartquoteq}
shows that
$T_1 = \galrelconst\app \inflerel{L_1}\app (=)\app r_1$,
we can show that
$\inv{T_1} = \galrelconst\app (=)\app \inflerel{L_1}\app l_1$.
It then follows that
\begin{displaymath}
\parenths[\big]{T_1\relcomp \inflerel{L_2}\relcomp \inv{T_1}} =
\parenths[\big]{\galrelconst\app \inflerel{L_1}\app (=)\app r_1\relcomp \inflerel{L_2}\relcomp \galrelconst\app (=)\app \inflerel{L_1}\app l_1}.
\end{displaymath}
Moreover, it is easy to show that the compatibility condition
is vacuously true for partial quotient types.
% Our framework thus faithfully generalises the existing
% One may wonder why we should care about composition at all.
% After all, we
% One \citet{isabellelifting}
% Our generalisation subsumes.

\end{document}
\endinput
%%
%% End of file transport_galois_connections.tex'.